\newcommand{\comment}[1]
{\ifthenelse{\boolean{commentson}}
   {{\par\noindent\mbox{}{\small[ *** #1 ]\par}\noindent\par}}{}}
\begin{document}

\newcommand{\bull}{\rule{.85ex}{1ex} \par \bigskip}
\newenvironment{sketch}{\noindent {\bf Proof (sketch):\ }}{\hfill \bull}

\newtheorem{theorem}{Theorem}[section]
\newtheorem{definition}[theorem]{Definition}
\newtheorem{proposition}[theorem]{Proposition}
\newtheorem{lemma}[theorem]{Lemma}
\newtheorem{corollary}[theorem]{Corollary}
\newtheorem{conjecture}[theorem]{Conjecture}
\newtheorem{exmp}{Example}
\newtheorem{notation}[theorem]{Notation}
\newtheorem{problem}{Problem}
\newtheorem{remark}[theorem]{Remark}
\newtheorem{observation}[theorem]{Observation}

\newcommand{\QCSP}[1]{\mbox{\rm QCSP$(#1)$}}
\newcommand{\CSP}[1]{\mbox{\rm CSP$(#1)$}}
\newcommand{\MCSP}[1]{\mbox{{\sc Max CSP}$(#1)$}}
\newcommand{\wMCSP}[1]{\mbox{\rm weighted Max CSP$(#1)$}}
\newcommand{\cMCSP}[1]{\mbox{\rm cw-Max CSP$(#1)$}}
\newcommand{\tMCSP}[1]{\mbox{\rm tw-Max CSP$(#1)$}}
\renewcommand{\P}{\mbox{\bf P}}
\newcommand{\G}[1]{\mbox{\rm I$(#1)$}}
\newcommand{\NE}[1]{\mbox{$\neq_{#1}$}}

\newcommand{\MCol}[1]{\mbox{\sc Max $#1$-Col}}

\newcommand{\NP}{\mbox{\bf NP}}
\newcommand{\NL}{\mbox{\bf NL}}
\newcommand{\PO}{\mbox{\bf PO}}
\newcommand{\NPO}{\mbox{\bf NPO}}
\newcommand{\APX}{\mbox{\bf APX}}
\newcommand{\Aut}{\mbox{\rm Aut}^*}
\newcommand{\bound}{\mbox{\rm -$B$}}

\newcommand{\GIF}[3]{\ensuremath{h_\{{#2},{#3}\}^{#1}}}

\newcommand{\Spmod}{\mbox{\rm Spmod}}
\newcommand{\Sbmod}{\mbox{\rm Sbmod}}

\newcommand{\Inv}[1]{\mbox{\rm Inv($#1$)}}
\newcommand{\Pol}[1]{\mbox{\rm Pol($#1$)}}
\newcommand{\sPol}[1]{\mbox{\rm s-Pol($#1$)}}

\newcommand{\un}{\underline}
\newcommand{\ov}{\overline}
\def\ar{\hbox{ar}}
\def\vect#1#2{#1 _1\zdots #1 _{#2}}
\def\zd{,\ldots,}
\let\sse=\subseteq
\let\la=\langle
\def\lla{\langle\langle}
\let\ra=\rangle
\def\rra{\rangle\rangle}
\let\vr=\varrho
\def\vct#1#2{#1 _1\zd #1 _{#2}}
\newcommand{\va}{{\bf a}}
\newcommand{\vb}{{\bf b}}
\newcommand{\vc}{{\bf c}}
\newcommand{\bx}{{\bf x}}
\newcommand{\by}{{\bf y}}
\def\Z{{\bur Z^+}}
\def\R{{\bur R}}
\def\D{{\cal D}}
\def\F{{\cal F}}
\def\I{{\cal I}}
\def\C{{\cal C}}
\def\U{{\cal U}}
\def\K{{\cal K}}
\def\Lat{{\cal L}}

\def\2mat#1#2#3#4#5#6#7#8{
\begin{array}{c|cc}
$~$ & #3 & #4\\
\hline
#1 & #5& #6\\
#2 & #7 & #8 \end{array}}

\renewcommand{\phi}{\varphi}
\renewcommand{\epsilon}{\varepsilon}

\def\tup#1{\mathchoice{\mbox{\boldmath$\displaystyle#1$}}
{\mbox{\boldmath$\textstyle#1$}}
{\mbox{\boldmath$\scriptstyle#1$}}
{\mbox{\boldmath$\scriptscriptstyle#1$}}}
\newcommand{\draft}{\begin{center}\huge Draft!!! \end{center}}
\newcommand{\void}{\makebox[0mm]{}}     


\renewcommand{\text}[1]{\mbox{\rm \,#1\,}}        


\renewcommand{\emptyset}{\varnothing}  
\newcommand{\union}{\cup}               
\newcommand{\intersect}{\cap}           
\newcommand{\setdiff}{-}                
\newcommand{\compl}[1]{\overline{#1}}   
\newcommand{\card}[1]{{|#1|}}           
\newcommand{\set}[1]{\{{#1}\}} 
\newcommand{\st}{\ |\ }                 
\newcommand{\suchthat}{\st}             
\newcommand{\cprod}{\times}             
\newcommand{\powerset}[1]{{\bf 2}^{#1}} 

\newcommand{\tuple}[1]{\langle{#1}\rangle}  
\newcommand{\seq}[1]{\langle #1 \rangle}
\newcommand{\emptyseq}{\seq{}}
\newcommand{\floor}[1]{\left\lfloor{#1}\right\rfloor}
\newcommand{\ceiling}[1]{\left\lceil{#1}\right\rceil}

\newcommand{\map}{\rightarrow}
\newcommand{\fncomp}{\!\circ\!}         

\newcommand{\transclos}[1]{#1^+}
\newcommand{\reduction}[1]{#1^-}        


\newcommand{\perfimp}{\stackrel{p}{\Longrightarrow}}

\newcommand{\ie}{{\em ie.}}                
\newcommand{\eg}{{\em eg.}}
\newcommand{\paper}{paper}                

\newcommand{\emdef}{\em}                   
\newcommand{\rinterpretation}{${\mathbb R}$-interpretation}
\newcommand{\rmodel}{${\mathbb R}$-model}
\newcommand{\transp}{^{\rm T}}

\newcommand{\unprint}[1]{}
\newcommand{\blankline}{$\:$}

\newcommand{\prob}[1]{{\sc #1}}

\newcommand{\Solv}{{\it TSolve}}
\newcommand{\Neg}{{\it Neg}}
\newcommand{\logname}{XX}

\newcommand{\props}{{\it props}}
\newcommand{\rels}{{\it rels}}
\newcommand{\deduce}{\vdash_p}

\newcommand{\pform}{{\rm Pr}}
\newcommand{\axform}{{\rm AX}}
\newcommand{\axset}{{\bf AX}}
\newcommand{\resdeduce}{\vdash_{\rm R}}
\newcommand{\resaxdeduce}{\vdash_{\rm R,A}}

\newcommand{\cmis}{{\em \#mis}}
\newcommand{\combine}{{\em comb}}

\newcommand{\xcsp}{{\sc X-Csp}}
\newcommand{\csp}{{\sc Csp}}

\newcommand{\cc}[1]{\textnormal{\textbf{#1}}} 
\newcommand{\opt}[0]{\textrm{{\sc opt}}}

\newcommand{\MC}{mc}
\newcommand{\Ha}{\textrm{\textit{H\aa}}}
\renewcommand{\atop}[2]{\genfrac{}{}{0pt}{}{#1}{#2}}
\newcommand{\GHAT}{{\cal G_\equiv}}
\newcommand{\HOMEQ}{\equiv}

\pagestyle{plain}

\author{Robert Engstr\"om\footnote{\tt{engro910@student.liu.se}}, Tommy F\"arnqvist\footnote{\tt{\{tomfa, petej, johth\}@ida.liu.se }}, Peter Jonsson\footnotemark[2],  and Johan Thapper\footnotemark[2]\\
\\
\small
Department of Computer and Information Science\\
\small
Link\"{o}pings universitet\\
\small
SE-581 83 Link\"{o}ping, Sweden\\}

\title{Graph Homomorphisms, Circular Colouring, and \\ Fractional Covering by $H$-cuts}

\date{}
\maketitle
\bibliographystyle{abbrv}

\begin{abstract}
A graph homomorphism is a vertex map which carries edges from a source 
graph to edges in a
target graph. The instances of the \emph{Weighted Maximum 
$H$-Colourable Subgraph} problem ($\MCol{H}$) are
edge-weighted graphs $G$ and
the objective is to find a subgraph of $G$ that has maximal total edge
weight, under 
the condition that the subgraph has a homomorphism to $H$; note that for
$H=K_k$ this 
problem is equivalent to {\sc Max $k$-cut}. F\"arnqvist et al.\ have
introduced a parameter on the space of graphs that allows close study of
the approximability properties of $\MCol{H}$. Specifically, it can be
used to extend previously known (in)approximability results to larger
classes of graphs. Here, we investigate the properties of this parameter
on circular complete graphs $K_{p/q}$, where $2 \leq p/q \leq 3$. The
results are extended to $K_4$-minor-free graphs and graphs with bounded
maximum average degree. We also consider connections with
\v{S}\'{a}mal's work on fractional covering by cuts: we address,
and decide, two conjectures concerning cubical chromatic numbers.

\noindent
{\bf Keywords}: graph $H$-colouring, circular colouring, fractional
colouring, combinatorial optimisation
\end{abstract}

\section{Introduction}
Denote by ${\cal G}$ the set of all simple, undirected and
finite graphs. 
A \emph{graph homomorphism} from $G \in {\cal G}$ to $H \in {\cal G}$ is a vertex map which
carries the edges in $G$ to edges in $H$.
The existence of such a map will be denoted by $G \rightarrow H$.
For a graph $G \in {\cal G}$, let
${\cal W}(G)$ be the set of \emph{weight functions}
$w : E(G) \rightarrow {\mathbb Q}^+$ assigning weights
to edges of $G$.
Now,
  {\em Weighted Maximum $H$-Colourable Subgraph} (\MCol{H}) is the
maximisation problem with
  \begin{description}
  \item[Instance:] An edge-weighted graph $(G,w)$, where $G \in {\cal
G}$ and
    $w \in {\cal W}(G)$.
  \item[Solution:] A subgraph $G'$ of $G$ such that $G' \rightarrow H$.
  \item[Measure:] The weight of $G'$ with respect to $w$.
  \end{description}

\noindent
Given an edge-weighted graph $(G,w)$, denote by $\MC_H(G,w)$ the measure
of the optimal solution to the problem \MCol{H}.
Denote by $mc_k(G,w)$ the
(weighted) size of a largest $k$-cut in $(G,w)$.
This notation is justified by the fact that
$\MC_k(G,w) = \MC_{K_k}(G,w)$.
In this sense, \MCol{H} generalises {\sc Max $k$-cut} which is a
well-known and well-studied problem that is computationally hard
when $k > 1$.
Since \MCol{H} is a hard problem to solve exactly, efforts have been
made to find suitable approximation algorithms.
F\"arnqvist et al.~\cite{farnqvist:etal:09} introduce a
method that can be used to extend previously known
(in)approximability bounds on \MCol{H} to new and larger classes of
graphs. For example, they 
present concrete approximation ratios for certain graphs (such as the odd cycles) 
and
near-optimal asymptotic results for large graph classes.
The fundament of this promising technique is the ability to compute 
(or closely approximate) a function
$s: {\cal G} \times {\cal G} \rightarrow {\mathbb R}$ defined as follows: 
\begin{equation}
\label{eq:s}
s(M,N) = \inf_{\substack{G \in {\cal G} \\ \omega \in {\cal
W}(G)}}{\frac{mc_M(G,\omega)}{mc_N(G,\omega)}}.
\end{equation}

It is not surprising that estimating $s(M,N)$ 
is, in many cases,
non-trivial. One way is to solve a certain linear program that
we present in Section~\ref{sec:linprog}: the program
can be tedious to write down since it is based on the structure
of $N$'s automorphism group, and can be prohibitively large.
Another way is to use the following lemma:

\begin{lemma}[\cite{farnqvist:etal:09}]
\label{lem:sandwich}
Let $M \rightarrow H \rightarrow N$. Then, $s(M,H) \geq s(M,N)$ and
$s(H,N) \geq s(M,N)$.
\end{lemma}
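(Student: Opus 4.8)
The plan is to unwind the definition of $s$ in \eqref{eq:s} and exploit the composability of homomorphisms. Suppose $M \rightarrow H \rightarrow N$. I will prove the first inequality $s(M,H) \geq s(M,N)$; the second is symmetric in spirit. Fix an arbitrary graph $G \in {\cal G}$ and weight function $\omega \in {\cal W}(G)$. Since $s(M,N)$ is an infimum of ratios $mc_M(G,\omega)/mc_N(G,\omega)$, it suffices to show that for every such $(G,\omega)$ we have
\[
\frac{mc_M(G,\omega)}{mc_H(G,\omega)} \geq s(M,N),
\]
because then $s(M,H)$, being the infimum over all $(G,\omega)$ of the left-hand side, is also $\geq s(M,N)$.

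The key observation is monotonicity of $mc_{(\cdot)}(G,\omega)$ under homomorphisms: if $A \rightarrow B$, then $mc_A(G,\omega) \leq mc_B(G,\omega)$ for every weighted graph $(G,\omega)$. Indeed, if $G'$ is a subgraph of $G$ with $G' \rightarrow A$, then composing with $A \rightarrow B$ gives $G' \rightarrow B$, so $G'$ is also a feasible solution for $\MCol{B}$; taking $G'$ optimal for $\MCol{A}$ shows $mc_A(G,\omega) = \MC_A(G,\omega) \leq \MC_B(G,\omega) = mc_B(G,\omega)$. Applying this with $H \rightarrow N$ gives $mc_H(G,\omega) \leq mc_N(G,\omega)$. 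Since all these quantities are positive (assuming $G$ has at least one edge and $H$ — hence $N$ — has at least one edge; the degenerate cases are handled separately or vacuously), we get
\[
\frac{mc_M(G,\omega)}{mc_H(G,\omega)} \;\geq\; \frac{mc_M(G,\omega)}{mc_N(G,\omega)} \;\geq\; s(M,N),
\]
where the last step is just the definition of the infimum $s(M,N)$. Taking the infimum over $(G,\omega)$ on the left yields $s(M,H) \geq s(M,N)$, as desired.

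For the second inequality $s(H,N) \geq s(M,N)$, I would argue the same way but use the homomorphism $M \rightarrow H$ to bound the numerator: $mc_M(G,\omega) \leq mc_H(G,\omega)$, hence
\[
\frac{mc_M(G,\omega)}{mc_N(G,\omega)} \;\leq\; \frac{mc_H(G,\omega)}{mc_N(G,\omega)},
\]
and taking the infimum of the left side over all $(G,\omega)$ gives $s(M,N) \leq \inf_{(G,\omega)} mc_H(G,\omega)/mc_N(G,\omega) = s(H,N)$.

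I do not anticipate a genuine obstacle here — the proof is essentially a two-line consequence of functoriality of the $\MCol{\cdot}$ measure. The only point requiring a little care is the treatment of degenerate instances (graphs with no edges, or where some $mc$ value is zero) so that the ratios are well-defined; this is presumably swept into the conventions surrounding \eqref{eq:s} (e.g.\ restricting to $G$ with $mc_N(G,\omega) > 0$), and I would simply note that the inequality holds trivially or vacuously in those cases.
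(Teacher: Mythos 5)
Your proof is correct: the monotonicity of $mc_{(\cdot)}(G,\omega)$ under composition of homomorphisms is exactly the right (and standard) observation, and the paper itself imports this lemma from F\"arnqvist et al.\ without reproving it, where the argument is the same infimum-comparison you give. Your handling of the degenerate cases (restricting to $G$ with at least one edge so that $mc_N(G,\omega)>0$) matches the implicit convention around \eqref{eq:s}, so there is nothing to add.
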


It is apparent that in order to use this result effectively, we need a large
selection of graphs $M,N$ that are known to be close to each other with respect
to $s$. For the moment, the set of such examples is quite meagre.
Hence, we set out to investigate how the function $s$ behaves on
certain classes of graphs. In Section~\ref{sec:meas}, we will take a careful
look at 3-colourable circular complete graphs and, amongst other things,
find that $s$ is constant between a large number of these graphs.
Moreover, we will extend bounds on $s$ to other classes of graphs using
known results about homomorphisms to circular complete graphs; 
examples include $K_4$-minor-free graphs and graphs with bounded
maximum average degree. 

Yet another way of estimating the function $s$ is to relate it
to other graph parameters. In this vein, Section~\ref{sec:cut} is dedicated
to generalising the work of \v{S}\'{a}mal~\cite{samal:05,samal:06}
on fractional covering by cuts to obtain a new family of `chromatic
numbers'. This reveals that $s(M,N)$ and the new chromatic numbers
$\chi_M(N)$ are closely related quantities, which provides us with an
alternative way of computing $s$.
We also use our knowledge about the behaviour of $s$ to
disprove a conjecture by \v{S}\'{a}mal concerning the cubical chromatic
number and, finally, we decide in the positive another conjecture by
\v{S}\'{a}mal concerning the same parameter.
We conclude the paper, in Section~\ref{sec:open}, by discussing
open problems and directions for future research.
To improve readability some proofs are deferred to the appendices.


\section{A Linear Program for $s$}
\label{sec:linprog}

F\"arnqvist et al.~\cite{farnqvist:etal:09} have identified 
an alternative expression for $s(M,N)$
which depends on the automorphism group of $N$.
Let $M$ and $N \in {\cal G}$ be graphs and let $A = \Aut(N)$ be the
(edge) automorphism group of $N$, i.e., $\pi \in A$ acts on $E(N)$
by permuting the edges.
Let $\hat{{\cal W}}(N)$ be the set of
weight functions $\omega \in {\cal W}(N)$ which satisfy 
$\sum_{e \in E(N)} \omega(e) = 1$
and for which $\omega(e) = \omega(\pi \cdot e)$ for all $e \in E(N)$ and
$\pi \in \Aut(N)$. That is, the weight functions in $\hat{{\cal W}}(N)$
are constant over the edges belonging to each orbit of $\Aut(N)$.
\begin{lemma}[\cite{farnqvist:etal:09}]
\label{lem:auto}
   Let $M,N\in {\cal G}$.
    Then,
    $s(M,N) = \inf_{w \in {\cal {\hat W}}(N)} mc_M(N,w)$.
    In particular, when $N$ is edge-transitive,
    $s(M,N) = mc_M(N,1/|E(N)|)$.
\end{lemma}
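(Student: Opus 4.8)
The plan is to prove the identity $s(M,N) = \inf_{w \in \hat{\mathcal W}(N)} mc_M(N,w)$ by two inequalities, and then read off the edge-transitive case as an immediate corollary. For the inequality $s(M,N) \le \inf_{w \in \hat{\mathcal W}(N)} mc_M(N,w)$, observe that every $w \in \hat{\mathcal W}(N)$ is in particular a legitimate weight function in $\mathcal W(N)$ (with total weight $1$, though the definition of $s$ is scale-invariant in $\omega$ since both numerator and denominator are homogeneous of degree one in $\omega$). Taking $G = N$ and $\omega = w$ in the infimum~\eqref{eq:s} gives $s(M,N) \le mc_M(N,w)/mc_N(N,w)$, so it remains to check that $mc_N(N,w) = 1$, i.e.\ that $N$ has a copy of itself as a subgraph carrying all the weight and mapping homomorphically to $N$ — which holds trivially via the identity homomorphism, so $mc_N(N,w) = \sum_{e\in E(N)} w(e) = 1$. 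Since this holds for every $w \in \hat{\mathcal W}(N)$, we may take the infimum over such $w$.

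For the reverse inequality $s(M,N) \ge \inf_{w \in \hat{\mathcal W}(N)} mc_M(N,w)$, I would start from an arbitrary instance $(G,\omega)$ and show that the ratio $mc_M(G,\omega)/mc_N(G,\omega)$ is bounded below by $mc_M(N,w)$ for some symmetric normalised weighting $w$ of $N$. The natural route is: fix an optimal $N$-colourable subgraph $G'$ of $(G,\omega)$, i.e.\ a homomorphism $\phi: G' \to N$ with $\sum_{e \in E(G')}\omega(e) = mc_N(G,\omega)$. Pull the mass of $G$ forward along $\phi$ to define a weight function on $E(N)$: set $\tilde w(e') = \frac{1}{mc_N(G,\omega)}\sum_{e \in E(G'):\, \phi(e) = e'} \omega(e)$ for each $e' \in E(N)$. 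This $\tilde w$ has total weight $1$ but need not be constant on $\Aut(N)$-orbits; to fix that, average it over the automorphism group, setting $w = \frac{1}{|A|}\sum_{\pi \in A} \pi \cdot \tilde w$, which does lie in $\hat{\mathcal W}(N)$. The crucial point is then a transfer inequality: any $M$-colourable subgraph of $(N,w)$ pulls back, along $\phi$, to an $M$-colourable subgraph of $(G,\omega)$ of weight at least $mc_N(G,\omega)\cdot mc_M(N,\tilde w)$ (composition of homomorphisms $G' \to N \to M$), so $mc_M(G,\omega) \ge mc_N(G,\omega)\cdot mc_M(N,\tilde w)$; and one must check $mc_M(N,\tilde w) \ge mc_M(N,w)$, or more robustly that the averaged weighting does not increase $mc_M$ — this follows because $mc_M(N,\cdot)$ is concave (a minimum of linear functionals, being an optimum of a covering/packing LP over $M$-colourable subgraphs) and invariant under $\Aut(N)$, so $mc_M(N,w) = mc_M(N, \frac1{|A|}\sum_\pi \pi\cdot\tilde w) \le \frac1{|A|}\sum_\pi mc_M(N,\pi\cdot\tilde w) = mc_M(N,\tilde w)$. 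Combining, $mc_M(G,\omega)/mc_N(G,\omega) \ge mc_M(N,w) \ge \inf_{w' \in \hat{\mathcal W}(N)} mc_M(N,w')$, and taking the infimum over $(G,\omega)$ finishes this direction.

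The edge-transitive case is then immediate: when $\Aut(N)$ acts transitively on $E(N)$, there is exactly one orbit, so the only member of $\hat{\mathcal W}(N)$ is the uniform weighting $w \equiv 1/|E(N)|$, and the infimum collapses to the single value $mc_M(N, 1/|E(N)|)$.

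The step I expect to be the main obstacle is the pull-back/transfer inequality together with the concavity argument: one has to be careful that composing with $\phi$ really does send an $M$-colourable \emph{subgraph} of $N$ to an $M$-colourable subgraph of $G$ of the claimed weight (keeping track of which edges of $G'$ map onto which edges of $N$, and that $\phi$ restricted to $G'$ is genuinely edge-preserving), and that the LP formulation of $mc_M(N,\cdot)$ giving concavity and $\Aut(N)$-invariance is set up correctly. Everything else — normalisation, homogeneity of $s$, and the edge-transitive specialisation — is routine.
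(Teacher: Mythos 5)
Your argument is correct and is essentially the standard proof of this lemma; note that the paper itself imports it from F\"arnqvist et al.\ without proof, so there is nothing internal to compare against. Both directions are sound: the $\le$ direction by specialising $G=N$ and observing $mc_N(N,w)=1$, and the $\ge$ direction by pushing the weight of an optimal $N$-colourable subgraph forward along $\phi$, using the pull-back inequality $mc_M(G,\omega)\ge mc_N(G,\omega)\cdot mc_M(N,\tilde w)$, and then symmetrising over $\Aut^*(N)$. One terminological slip to fix: $mc_M(N,\cdot)$ is a \emph{maximum} of linear functionals of the weight vector (one per $M$-colourable subgraph), hence \emph{convex}, not concave; but the inequality you actually write, $mc_M\bigl(N,\tfrac{1}{|A|}\sum_\pi \pi\cdot\tilde w\bigr)\le \tfrac{1}{|A|}\sum_\pi mc_M(N,\pi\cdot\tilde w)=mc_M(N,\tilde w)$, is precisely the convexity (Jensen) direction combined with $\Aut^*(N)$-invariance, so the step is valid as written once relabelled. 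The edge-transitive specialisation is immediate as you say, since a single orbit forces the uniform weighting.
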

Lemma~\ref{lem:auto} shows that in order to determine $s(M,N)$, it is
sufficient to minimise $mc_M(N,\omega)$ over $\hat{{\cal W}}(N)$, and it follows
that $s(M,N)$ can be computed by solving a linear program.
For $i \in \{1,\ldots,r\}$, let $A_i$ be the orbits
of $\Aut(N)$ and, for $f : V(N) \rightarrow V(M)$, define
\begin{equation}
f_i = | \{u v \in A_i \;|\; f(u) f(v) \in E(M)\}|.
\end{equation}
That is, $f_i$ is the number of edges in $A_i$ which are mapped to an
edge in $M$ by $f$. The measure of a solution $f$ when $\omega \in
\hat{{\cal W}}(N)$ is equal to $\sum_{i=1}^{r}{\omega_i \cdot f_i}$
where $\omega_i$ is the weight of an edge in $A_i$. Given an $\omega$,
the measure of a solution $f$ depends only on the vector
$(f_1,\ldots,f_r) \in {\mathbb N}^r$. We call this vector the {\em
signature} of $f$. When there is no risk of confusion,
we will let $f$ denote the signature as well.
Since we have seen that the measure of a solution only
depends on its signature the solution space is taken to be the set of
possible signatures  
\begin{equation}
F = \{f \in {\mathbb N}^r \,|\, f\mbox{ is a signature of a solution to $(N,\omega)$ of
$\MCol{M}$} \}.
\end{equation}
The variables of the linear program are $\omega_1,\ldots,\omega_r$ and
$s$, where $\omega_i$ represents the weight of each element in the orbit
$A_i$ and $s$ is an upper bound on the signatures measure.
\begin{equation*}
\tag{LP}
\label{lp}
\begin{array}{ll}
    \min s \\
    \sum_i f_i \cdot \omega_i \leq s & \text{for each $(f_1, \ldots, f_r) \in
F$} \\
    \sum_i |A_i| \cdot \omega_i = 1 & \ \text{and} \ \omega_i, s \geq 0\\
\end{array}
\end{equation*}
Given a solution $\omega_i,s$ to this program, $\omega(e) = \omega_i$
when $e \in A_i$ is a weight function which minimises $mc_M(G,\omega)$.
The value of this solution is $s = s(M,N)$.

\section{Solutions to (\ref{lp}) for Circular Complete Graphs} \label{sec:meas}

A circular complete graph $K_{p/q}$ is a graph with vertex set
$\{v_0,v_1,\ldots,v_{n-1}\}$ and edge set $E(K_{p/q}) = \{v_i v_j \ | \ q \leq |i-j|
\leq p-q\}$. This can be seen as placing the vertices on a circle and
connecting two vertices by an edge if they are at a distance at least
$q$ from each other. A fundamental property of these graphs is that
$K_{p/q} \rightarrow K_{p'/q'}$ iff $p/q \leq p'/q'$.
Due to this fact, when we write $K_{p/q}$, we will assume that $p$ and $q$
are relatively prime.
We will denote the orbits of the action of $\Aut(K_{p/q})$ by
$A_c = \{ v_i v_j \in E(K_{p/q}) \;|\; j-i \equiv q+c-1 \text{ (mod $p$)} \}$,
for $c = 1, \ldots, \lceil \frac{p-2q+1}{2} \rceil$.
We finally note that a
homomorphism from a graph $G$ to $K_{p/q}$ is called a (circular)
$(p/q)$-colouring of $G$. 
More information on this topic can be gained
from the book by Hell and Ne\v{s}et\v{r}il~\cite{HN04} and from the
survey by Zhu~\cite{zhu:survey}.

In this section we start out by investigating $s(K_r, K_t)$
for rational numbers $2 \leq r < t \leq 3$.
In Section~\ref{ssec:k2}, we fix $r=2$ and choose $t$ so that
$\Aut(K_{t})$ has few orbits.
We find some interesting properties of these numbers which lead
us look at the case $r = 2+1/k$ in Section~\ref{ssec:odd}.
Our approach is based on relaxing the linear program (\ref{lp})
that was presented in Section~\ref{sec:linprog}, combined with
arguments that our chosen relaxations in fact find the optimum
in the original program.

\subsection{Maps to $K_2$} \label{ssec:k2}

We consider $s(K_2, K_{t})$ for
$t = 2 + n/k$ with $k > n \geq 1$, where $n$ and $k$ are integers.
The number of orbits of $\Aut(K_{t})$ then equals $\lceil (n+1)/2 \rceil$.
We choose to begin our study of $s(K_2, K_{t})$ using small values of $n$.
When $n = 1$, $K_{2+1/k}$ is isomorphic to the cycle $C_{2k+1}$.
The value of $s(K_2,C_{2k+1}) = 2k/(2k+1)$, for $k \geq 1$ 
was obtained in~\cite{farnqvist:etal:09}.
Combined with the following result, where we set $t = 2+2/(2k-1) = \frac{4k}{2k-1}$, 
this has an immediate and perhaps surprising consequence.
\begin{proposition}
\label{prop:4k+4}
Let $k\geq 1$ be an integer, then $s(K_2,K_{\frac{4k}{2k-1}})=\frac{2k}{2k+1}$.
\end{proposition}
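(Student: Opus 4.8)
The plan is to prove the two inequalities $s(K_2,K_{4k/(2k-1)})\le\frac{2k}{2k+1}$ and $s(K_2,K_{4k/(2k-1)})\ge\frac{2k}{2k+1}$ separately, the first being immediate and the second carrying all the work. For the upper bound I would invoke Lemma~\ref{lem:sandwich}. Since $\frac{2k+1}{k}=2+\frac1k\le 2+\frac{2}{2k-1}=\frac{4k}{2k-1}$, the homomorphism order of circular complete graphs gives $C_{2k+1}=K_{(2k+1)/k}\to K_{4k/(2k-1)}$, while trivially $K_2\to C_{2k+1}$. Applying Lemma~\ref{lem:sandwich} with $M=K_2$, $H=C_{2k+1}$, $N=K_{4k/(2k-1)}$ then yields $s(K_2,C_{2k+1})\ge s(K_2,K_{4k/(2k-1)})$, and the known value $s(K_2,C_{2k+1})=\frac{2k}{2k+1}$ (valid for $k\ge1$, with $C_3=K_3$) completes this direction.

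For the lower bound I would use the description of $N:=K_{4k/(2k-1)}$ as the cubic circulant on ${\mathbb Z}_{4k}$ with connection set $\{2k-1,2k\}$, so that the orbit $A_1$ consists of the $4k$ edges $v_iv_{i+2k-1}$ and the orbit $A_2$ of the $2k$ edges $v_iv_{i+2k}$. Since every $w\in\hat{{\cal W}}(N)$ is constant on $A_1$ and on $A_2$, Lemma~\ref{lem:auto} reduces matters to showing $mc_2(N,w)\ge\frac{2k}{2k+1}$ for every weight function on $N$ that is constant on $A_1$ (with value $\omega_1$), constant on $A_2$ (with value $\omega_2$), and has total weight $4k\omega_1+2k\omega_2=1$. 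I will do this by exhibiting two bipartitions of $V(N)$ whose weights cannot both be small. The first is the partition into even- and odd-indexed vertices: because $2k-1$ is odd and $\gcd(2k-1,4k)=1$, the edges of $A_1$ form a single $(4k)$-cycle along which this $2$-colouring alternates, so all $4k$ edges of $A_1$ are cut, whereas $2k$ is even, so no edge of $A_2$ is cut. Its signature is $(4k,0)$, with weight $4k\omega_1$.

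The second bipartition is $\{v_0,\dots,v_{2k-1}\}$ versus $\{v_{2k},\dots,v_{4k-1}\}$. Every edge $v_iv_{i+2k}$ of $A_2$ joins the two sides, so all $2k$ of them are cut, while a short case analysis modulo $4k$ on the edge $v_iv_{i+2k-1}$ shows that exactly two edges of $A_1$, namely $v_0v_{2k-1}$ and $v_{2k}v_{4k-1}$, are not cut. Hence this bipartition has signature $(4k-2,2k)$ and weight $(4k-2)\omega_1+2k\omega_2=1-2\omega_1$. Consequently, for any admissible $w$, $mc_2(N,w)\ge\max\{4k\omega_1,\,1-2\omega_1\}$, and the right-hand side is minimised at $\omega_1=\frac1{4k+2}$ (which is feasible, being less than $\frac1{4k}$), where both entries equal $\frac{4k}{4k+2}=\frac{2k}{2k+1}$. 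Taking the infimum over $w$ gives $s(K_2,N)\ge\frac{2k}{2k+1}$, and together with the upper bound this proves the proposition.

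I expect the only delicate point to be the signature count for the interval bipartition: one must track $v_i$ against $v_{i+2k-1}$ across the wrap-around of ${\mathbb Z}_{4k}$ and verify that precisely two $A_1$-edges survive — not zero and not four. The rest (the homomorphism inequality, the observation $\gcd(2k-1,4k)=1$, and the one-variable optimisation) is routine. An equivalent way to package the lower bound is through the program (\ref{lp}): adding the two constraints coming from the signatures $(4k,0)$ and $(4k-2,2k)$ with multipliers $1$ and $2k$ gives $2k=2k(4k\omega_1+2k\omega_2)\le(1+2k)\,s$, i.e.\ $s\ge\frac{2k}{2k+1}$.
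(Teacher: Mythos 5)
Your proof is correct, and it differs from the paper's in one half of the argument. The lower bound is in substance identical: the paper exhibits exactly the same two cuts, the parity cut with signature $(4k,0)$ and the interval cut with signature $(4k-2,2k)$ (your count of precisely two uncut $A_1$-edges, $v_0v_{2k-1}$ and $v_{2k}v_{4k-1}$, checks out), and your $\max\{4k\omega_1,\,1-2\omega_1\}\geq \frac{2k}{2k+1}$ computation is exactly the paper's relaxation of (\ref{lp}) restricted to these two inequalities. Where you diverge is the matching upper bound. The paper stays inside the LP framework: it observes that any cut meeting $A_2$ must miss at least two edges of the even cycle formed by $A_1$, so every signature is dominated componentwise by $(4k,0)$ or $(4k-2,2k)$; hence the relaxation is exact and the weight $\omega_1=1/(4k+2)$ certifies $s\leq \frac{2k}{2k+1}$. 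You instead import the value $s(K_2,C_{2k+1})=\frac{2k}{2k+1}$ from F\"arnqvist et al.\ and push it through Lemma~\ref{lem:sandwich} via $K_2\rightarrow C_{2k+1}\rightarrow K_{4k/(2k-1)}$. Both are legitimate and non-circular, since that value comes from the earlier paper rather than from this proposition. The paper's route is self-contained and is the template reused for Propositions~\ref{prop:3k+1} and~\ref{prop:4k+1}, where no convenient homomorphic sandwich with a known $s$-value is available; your route is shorter here and sidesteps the completeness argument for the signature set, at the cost of relying on the external odd-cycle value and the homomorphism order of circular complete graphs.
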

\begin{proof}
Let $V(K_{\frac{4k}{2k-1}})= \{v_0,v_1,\ldots,v_{4k-1}\}$ and $V(K_2) =\{w_0,w_1\}$.
We will present two maps $f, h : V(K_{\frac{4k}{2k-1}}) \rightarrow V(K_2)$.
$f$ sends a vertex $v_i$ to $w_0$ if $0 \leq i < 2k$ and to $w_1$ if
$2k \leq i < 4k$.
It is not hard to see that $f = (4k-2, 2k)$.
The map $h$ sends $v_i$ to $w_0$ if $i$ is even and to $w_1$ if $i$ is odd.
Then, $h$ maps all of $A_1$ to $K_2$ but none of the edges in $A_2$,
so $h = (4k,0)$.
It remains to argue that these two solutions suffice to determine $s$.
But we see that any map $g$ with $g_2 > 0$ must cut at least two edges in
the even cycle $A_1$, leading to $g_1 \leq 4k-2$, thus $g \leq f$,
componentwise.
The proposition now follows by solving the relaxation of (\ref{lp})
using only the two inequalities obtained from $f$ and $h$.
\end{proof}

\begin{corollary}
\label{cor:intervals}
Let $k \geq 1$ and $2 \leq r < \frac{2k+1}{k} \leq t \leq \frac{4k}{2k-1}$.
Then, $s(K_r,K_{t}) = \frac{2k}{2k+1}$.
\end{corollary}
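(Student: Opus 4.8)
The plan is to pin down $s(K_r,K_t)$ by squeezing it between $\frac{2k}{2k+1}$ from below, via Lemma~\ref{lem:sandwich} and Proposition~\ref{prop:4k+4}, and $\frac{2k}{2k+1}$ from above, via a single explicit instance plugged into the infimum~(\ref{eq:s}). The key observation is that $\frac{2k+1}{k}=2+\frac1k$ is the circular chromatic ratio of the odd cycle, so $K_{(2k+1)/k}$ is (isomorphic to) $C_{2k+1}$, and the hypotheses place this graph right in the middle of a homomorphism chain. Indeed, since $K_{p/q}\to K_{p'/q'}$ iff $p/q\le p'/q'$, we have
\[ K_2\to K_r\to C_{2k+1}\to K_t\to K_{4k/(2k-1)}, \]
where the four arrows come respectively from $2\le r$, from $r<\frac{2k+1}{k}$, from $\frac{2k+1}{k}\le t$, and from $t\le\frac{4k}{2k-1}$.

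For the lower bound I would apply Lemma~\ref{lem:sandwich} twice along this chain. From $K_2\to K_r\to K_t$ its second inequality gives $s(K_r,K_t)\ge s(K_2,K_t)$; from $K_2\to K_t\to K_{4k/(2k-1)}$ its first inequality gives $s(K_2,K_t)\ge s(K_2,K_{4k/(2k-1)})$; and Proposition~\ref{prop:4k+4} evaluates the last quantity as $\frac{2k}{2k+1}$. Chaining these yields $s(K_r,K_t)\ge\frac{2k}{2k+1}$.

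The upper bound is the half I expect to need the most care, since the exact values at our disposal are all of the shape $s(K_2,\cdot)$ and Lemma~\ref{lem:sandwich} only ever gives inequalities in the wrong direction for them; so instead I would bound the infimum in~(\ref{eq:s}) directly by a well-chosen instance. Take $G=C_{2k+1}$ with the uniform weight function $\omega\equiv\frac{1}{2k+1}$, so the total weight is $1$. Since $C_{2k+1}\cong K_{(2k+1)/k}\to K_t$, the whole graph is $K_t$-colourable, whence $mc_{K_t}(C_{2k+1},\omega)=1$. On the other hand $C_{2k+1}\not\to K_r$, which is exactly where the strict inequality $r<\frac{2k+1}{k}$ is used; and deleting any single edge of $C_{2k+1}$ leaves a forest, which is bipartite and hence maps to $K_2\to K_r$, so the heaviest $K_r$-colourable subgraph of $C_{2k+1}$ has precisely $2k$ edges and $mc_{K_r}(C_{2k+1},\omega)=\frac{2k}{2k+1}$. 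Feeding this instance into~(\ref{eq:s}) gives $s(K_r,K_t)\le\frac{2k/(2k+1)}{1}=\frac{2k}{2k+1}$, which together with the lower bound finishes the proof.

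The only genuinely combinatorial step is the identification $mc_{K_r}(C_{2k+1},\omega)=\frac{2k}{2k+1}$: it rests on the facts that $C_{2k+1}$ has a unique cycle (so every proper subgraph is a forest), that forests are bipartite and $K_2\to K_r$ because $r\ge2$, and that $C_{2k+1}=K_{(2k+1)/k}\not\to K_r$ since $r<\frac{2k+1}{k}$. Everything else is bookkeeping with Lemma~\ref{lem:sandwich} and Proposition~\ref{prop:4k+4}.
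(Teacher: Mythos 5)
Your proof is correct and follows essentially the same route as the paper: the lower bound is the identical double application of Lemma~\ref{lem:sandwich} down to $s(K_2,K_{\frac{4k}{2k-1}})$, and the upper bound rests on the same witness, the odd cycle with uniform weights, using $C_{2k+1}\not\rightarrow K_r$ and the fact that every proper subgraph of $C_{2k+1}$ is bipartite. The only (cosmetic) difference is that you plug this instance directly into the infimum in~(\ref{eq:s}) to bound $s(K_r,K_t)$, whereas the paper first evaluates $s(K_r,K_{\frac{2k+1}{k}})$ exactly via edge-transitivity (Lemma~\ref{lem:auto}) and then transfers the bound to $K_t$ with one more use of Lemma~\ref{lem:sandwich}.
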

\begin{proof}
  Note that we have the chain of homomorphisms
  $K_2 \rightarrow K_r \rightarrow K_{\frac{2k+1}{k}} \rightarrow K_t \rightarrow K_{\frac{4k}{2k-1}}$.
  By Lemma~\ref{lem:sandwich}, we get $s(K_r,K_{\frac{2k+1}{k}}) \geq s(K_2,K_{\frac{2k+1}{k}}) = \frac{2k}{2k+1}$. But since $K_{\frac{2k+1}{k}} \not\rightarrow K_r$, and $K_{\frac{2k+1}{k}}$ is edge-transitive with $2k+1$ edges, $s(K_r, K_{\frac{2k+1}{k}}) \leq \frac{2k}{2k+1}$ and therefore $s(K_r,K_{\frac{2k+1}{k}}) =  \frac{2k}{2k+1}$. 
  Again by Lemma~\ref{lem:sandwich}, we have
  $
  \frac{2k}{2k+1} = s(K_r, K_{\frac{2k+1}{k}}) \geq s(K_r, K_{t}) \geq s(K_2, K_{\frac{4k}{2k-1}}) = \frac{2k}{2k+1}.
  $
\end{proof}

We find that there are intervals $I_k = \{ t \in \mathbb{Q} \;|\; 2+1/k \leq t \leq 2+2/(2k-1) \}$ where $s(t) = s(K_r, K_{t})$ is constant.
In Figure~\ref{fig:interval} these intervals are shown for the first few values of $k$. The intervals $I_k$ form an infinite sequence with endpoints tending to $2$. 
Similar intervals appear throughout the space of circular complete graphs.
More specifically,  F\"{a}rnqvist et al.~\cite{farnqvist:etal:09} have shown
that $s(K_n,K_{2m-1}) = s(K_n,K_{2m})$ for arbitrary integers $n, m \geq 2$.
Furthermore, it can be proved that $s(K_2, K_n) = s(K_{8/3}, K_n)$ for $n \geq 3$.
Two applications of Lemma~\ref{lem:sandwich} now shows that $s(K_r, K_{t})$ is 
constant on the regions $[2, 8/3] \times J_m$, where
  $J_m = \{ t \in \mathbb{Q} \;|\; 2m-1 \leq t \leq 2m \}$.

\begin{figure}[h]
\centering
\begin{tikzpicture}
\draw[dashed] (0,0) -- (1.875,0);
\draw[very thick] (1.875,0) -- (2.1428,0);
\draw[thin] (2.1428,0) -- (2.5,0);
\draw[very thick] (2.5,0) -- (3,0);
\draw[thin] (3,0) -- (3.75,0);
\draw[very thick] (3.75,0) -- (5,0);
\draw[thin] (5,0) -- (7.5,0);
\draw[thick](0,-0.2) -- (0,0.2);
\draw[thick](1.875,-0.2) -- (1.875,0.2);
\draw[thick](2.1428,-0.2) -- (2.1428,0.2);
\draw[thick](2.5,-0.2) -- (2.5,0.2);
\draw[thick](3,-0.2) -- (3,0.2);
\draw[thick](3.75,-0.2) -- (3.75,0.2);
\draw[thick](5,-0.2) -- (5,0.2);
\draw[thick](7.5,-0.2) -- (7.5,0.2);
\node at (0,0.5) {$2$};
\node at (1.875,0.5) {$\frac{9}{4}$};
\node at (2.1428,0.5) {$\frac{16}{7}$};
\node at (2.5,0.5) {$\frac{7}{3}$};
\node at (3,0.5) {$\frac{12}{5}$};
\node at (3.75,0.5) {$\frac{5}{2}$};
\node at (5,0.5) {$\frac{8}{3}$};
\node at (7.5,0.5) {$3$};
\node at (0,-0.5) {$s(K_2,K_r)=$};
\node at (2,-0.5) {$\frac{8}{9}$};
\node at (2.75,-0.5) {$\frac{6}{7}$};
\node at (4.375,-0.5) {$\frac{4}{5}$};
\node at (-0.45,0.5) {$r=$};
\end{tikzpicture}
\caption{The space between $2$ and $3$ with the intervals $I_k$ marked for $k = 2, 3, 4$.}\label{fig:interval}
\end{figure}
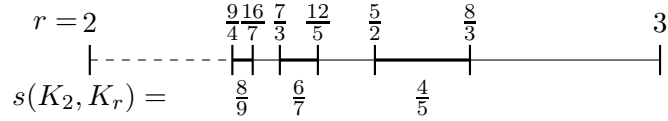

As we proceed with determining $s(K_2, K_{t})$ we can now, 
thanks to Corollary~\ref{cor:intervals}, disregard those $t$ which fall
inside these constant intervals.
For $t = 2 + 3/k$, we see that if
$k \equiv 0 $ (mod 3), then $r$ is an odd cycle,
and if $k \equiv 2 $ (mod 3), then $t \in I_{k+1}$.
Therefore, we assume that $t$ is of the form 
$2+3/(3k+1) = \frac{6k+5}{3k+1}$ for an integer $k \geq 1$.

\begin{proposition} \label{prop:3k+1}
Let $k \geq 1$ be an integer. Then, $s(K_2,K_{\frac{6k+5}{3k+1}})=\frac{6k^2+8k+3}{6k^2+11k+5}=1 - \frac{3k+2}{(k+1)(6k+5)}$.
\end{proposition}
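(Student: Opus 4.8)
The plan is to mimic the strategy of Proposition~\ref{prop:4k+4}: exhibit a small number of explicit maps $f : V(K_{\frac{6k+5}{3k+1}}) \rightarrow V(K_2)$ whose signatures generate all the binding inequalities of (\ref{lp}), and then solve the resulting low-dimensional relaxation. Here $t = \frac{6k+5}{3k+1}$ with $p = 6k+5$, $q = 3k+1$, so $\Aut(K_{p/q})$ has $\lceil (p-2q+1)/2 \rceil = \lceil (2k+2)/2 \rceil = k+1$ orbits $A_1,\dots,A_{k+1}$, where $A_c$ consists of the chords $v_iv_j$ with $j-i \equiv q+c-1 \pmod p$. Since $p$ is odd, every orbit has $p = 6k+5$ edges, so the normalisation in (\ref{lp}) reads $(6k+5)\sum_{c} \omega_c = 1$.

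First I would identify the extremal maps. The natural "balanced bipartition" map $f$ sends $v_i \mapsto w_0$ for $0 \le i < \lceil p/2 \rceil$ and $v_i \mapsto w_1$ otherwise; its signature $(f_1,\dots,f_{k+1})$ is computed by counting, for each chord length $q+c-1$, how many of the $p$ cyclically-placed chords of that length straddle the two colour classes — this is a routine cyclic counting argument giving $f_c = 2(q+c-1)$ or a similarly explicit linear expression in $c$. The second family of maps generalises the map $h$: these are "short-interval" colourings $h^{(j)}$ that colour blocks of consecutive vertices so as to kill the short chords $A_1,\dots,A_{j}$ entirely while cutting all longer ones, yielding signatures of the form $(0,\dots,0,\ast,\dots,\ast)$. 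The key structural claim — the analogue of the sentence "any map $g$ with $g_2>0$ must cut at least two edges in the even cycle $A_1$" — is a set of lower bounds: any $g$ with $g_c > 0$ for some $c$ forces enough non-cut edges among the shorter orbits that $g$ is dominated componentwise by a convex combination of $f$ and the $h^{(j)}$. Establishing this domination is the step I expect to be the main obstacle, since with $k+1$ orbits the combinatorics of which chords a colour class can simultaneously avoid is more delicate than in the $r = 2$ orbit case of Proposition~\ref{prop:4k+4}; I would argue it via the observation that a colour class that is a union of arcs avoiding a chord of length $\ell$ constrains the arc lengths, and then do a counting/induction on the number of arcs.

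Once the extremal signatures are in hand, I would write the relaxed linear program using only the inequalities from $f$ and the $h^{(j)}$ together with the normalisation, and solve it. Because all orbits have the same size $6k+5$, the relaxed LP has a clean form, and its optimum should come out as the claimed value $\frac{6k^2+8k+3}{6k^2+11k+5} = 1 - \frac{3k+2}{(k+1)(6k+5)}$; I would verify the algebra by checking the complementary-slackness conditions (i.e.\ that the optimal $\omega$ makes exactly the right inequalities tight) and by sanity-checking small cases, e.g.\ $k=1$ gives $\frac{17}{22}$. Finally, the domination argument from the previous paragraph certifies that this relaxed optimum equals the true value of (\ref{lp}), hence $s(K_2, K_{\frac{6k+5}{3k+1}})$, completing the proof. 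The only genuinely new work beyond bookkeeping is the componentwise-domination lemma; everything else parallels the pattern already set by Proposition~\ref{prop:4k+4} and Lemma~\ref{lem:auto}.
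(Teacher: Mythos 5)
Your plan founders on a miscount of the orbit structure. For $t = \frac{6k+5}{3k+1}$ we have $p = 6k+5$, $q = 3k+1$, so $p - 2q + 1 = 4$ and the number of orbits of $\Aut(K_{p/q})$ is $\lceil 4/2 \rceil = 2$, independent of $k$ (equivalently, $t = 2 + 3/(3k+1)$, so $n = 3$ in the notation of Section~\ref{ssec:k2} and the orbit count is $\lceil (n+1)/2 \rceil = 2$). Your value $k+1$ comes from an incorrect evaluation of $p-2q+1$ as $2k+2$. As a consequence, the entire apparatus you propose --- the family of short-interval colourings $h^{(j)}$ with signatures $(0,\ldots,0,\ast,\ldots,\ast)$, and a domination lemma over $k+1$ orbits proved by induction on arcs --- is aimed at a problem that does not arise. (Your description of the second family is also inverted relative to what is needed: in Proposition~\ref{prop:4k+4} the auxiliary map $h$ cuts \emph{all} of the short orbit $A_1$ and none of $A_2$, it does not kill the short chords.)

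What is actually required, and what your proposal does not supply, is the following. Both orbits $A_1$ and $A_2$ are single cycles of length $6k+5$ (since $\gcd(3k+1,6k+5)=\gcd(3k+2,6k+5)=1$), hence odd, so every solution satisfies $g_1, g_2 \leq 6k+4$. The balanced bipartition gives the signature $f = (6k+2, 6k+4)$, which dominates every solution with $g_1 \leq 6k+2$. The crux is then to show that any solution $g$ with $g_1 = 6k+4$ necessarily has $g_2 = 2k+2$: one fixes the unique uncut edge of the cycle $A_1$, deduces that $g(v_i) = g(v_{i+3})$ for all $i$ away from its endpoints, and counts exactly which $2k+2$ chords of $A_2$ are cut. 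Solving the two-inequality relaxation with $f = (6k+2,6k+4)$ and $g = (6k+4,2k+2)$ then yields $s = \frac{6k^2+8k+3}{6k^2+11k+5}$. This counting lemma is the genuinely new content of the proof, and it is absent from your proposal; without it (and with the wrong orbit count) the argument cannot be completed as written.
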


For $t = 2 + 4/k$, we find that we only need to consider the case when $k \equiv 1 \mbox{ (mod 4) }$.
We then have graphs $K_{t}$ with $t = 2+4/(4k+1) = \frac{8k+6}{4k+1}$ for integers $k \geq 1$.


\begin{proposition} \label{prop:4k+1}
Let $k \geq 1$ be an integer. Then, $s(K_2,K_{\frac{8k+6}{4k+1}})=\frac{8k^2+6k+2}{8k^2+10k+3}=1 - \frac{4k+1}{(k+1/2)(8k+6)}$.
\end{proposition}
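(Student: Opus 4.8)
Write $p=8k+6$ and $q=4k+1$, so $N:=K_{\frac{8k+6}{4k+1}}=K_{p/q}$ has $p$ vertices, and in the notation of Section~\ref{sec:meas} the orbits $A_{1},A_{2},A_{3}$ of $\Aut(N)$ have sizes $|A_{1}|=|A_{2}|=8k+6$ and $|A_{3}|=4k+3$. By Lemma~\ref{lem:auto}, $s(K_{2},N)$ is the optimum of the program (\ref{lp}) in the weights $\omega_{1},\omega_{2},\omega_{3}$: we must minimise $\omega_{1}g_{1}+\omega_{2}g_{2}+\omega_{3}g_{3}$ over the signatures $(g_{1},g_{2},g_{3})$ of all colourings $V(N)\to\{w_{0},w_{1}\}$, subject to $(8k+6)(\omega_{1}+\omega_{2})+(4k+3)\omega_{3}=1$. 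Following Proposition~\ref{prop:4k+4}, the plan is to single out a small family of colourings, solve the relaxation of (\ref{lp}) that keeps only their inequalities, and then show that this relaxation is exact.

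Three colourings do the job. The parity colouring $h$ (put $v_{i}$ in class $i\bmod 2$) cuts exactly the edges of odd stretch, so $h=(8k+6,\,0,\,4k+3)$. The half colouring $f$ (put $v_{i}$ in $w_{0}$ iff $0\le i<4k+3$) leaves only a bounded number of monochromatic edges, giving $f=(8k+2,\,8k+4,\,4k+3)$. These two alone are not enough: the relaxation they define has value $\tfrac{2k+1}{2(k+1)}$, strictly below the claimed bound. The third colouring, call it $\hat g$, is a hybrid: re-index $V(N)$ along the single $p$-cycle $i\mapsto i+q\pmod p$ and, in the new coordinates, take the alternating colouring modified to have exactly two runs of length $2$ placed at mutual distance at least $2k+2$; this yields $\hat g_{1}=8k+4$, $\hat g_{2}=4k+4$, and one can place the defects so that $\hat g_{3}\ge 2$ (for $k=1$, $\hat g=(12,8,3)$). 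Feeding $h,f,\hat g$ to (\ref{lp}) and solving the resulting small linear program gives the weighting $\omega^{*}=\tfrac{1}{2(2k+1)(4k+3)}\,(2k,\,1,\,0)$, with $\omega_{3}^{*}=0$, and objective value exactly $1-\tfrac{4k+1}{(k+1/2)(8k+6)}$; equivalently, the lower bound $s(K_{2},N)\ge 1-\tfrac{4k+1}{(k+1/2)(8k+6)}$ follows from the dual‑feasible combination $\tfrac{2k}{2k+1}f+\tfrac{1}{2k+1}\hat g$, whose $i$‑th coordinate divided by $|A_{i}|$ is $\ge s$ for every $i$ (this is the only place $\hat g_{3}\ge 2$ is needed).

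For the upper bound it suffices to prove the combinatorial inequality
\begin{equation*}
2k\cdot g_{1}+g_{2}\ \le\ 16k^{2}+12k+4
\end{equation*}
for the signature $(g_{1},g_{2},g_{3})$ of an arbitrary colouring of $N$. Indeed, since $\omega_{3}^{*}=0$, this gives $mc_{K_{2}}(N,\omega^{*})=\dfrac{\max_{g}(2k\,g_{1}+g_{2})}{2(2k+1)(4k+3)}\le s$, and $f$ attains the maximum, so $s(K_{2},N)\le mc_{K_{2}}(N,\omega^{*})=s$. To prove the inequality I would work in the re‑indexed coordinates: letting $c_{j}\in\{0,1\}$ record whether the colour changes between the $j$‑th and $(j+1)$‑th vertex of the $(i\mapsto i+q)$‑cycle, one has $g_{1}=\sum_{j}c_{j}$ (the number of runs) and $g_{2}=\sum_{j}\bigl(c_{j}\oplus c_{j+1}\oplus\cdots\oplus c_{j+2k+1}\bigr)$, the number of length‑$(2k+2)$ windows carrying an odd number of changes. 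The task is to bound $g_{2}$ in terms of the number $8k+6-g_{1}$ of zeros of $(c_{j})$, balancing the easy estimate ``a window's parity is at most the number of zeros it contains'' (tight when $g_{1}$ is large) against the structural fact that shifting by $2k+2$ splits $\mathbb{Z}_{8k+6}$ into two odd cycles of length $4k+3$, forcing $g_{2}\le 8k+4$ (tight when $g_{1}$ is small); the two regimes meet precisely at $f$ and $\hat g$.

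I expect this window‑parity estimate, with the exact constant $16k^{2}+12k+4$ and equality forced to be $f$ and the hybrids, to be the main obstacle: with three orbits the set of attainable signatures is genuinely two‑dimensional, so one cannot get away with the single one‑line domination argument used in Proposition~\ref{prop:4k+4}. As a sanity check, the homomorphism chains $K_{2}\to N\to C_{2k+1}$ (valid since $k(8k+6)\le(4k+1)(2k+1)$) and $K_{2}\to C_{2k+3}\to N$ together with Lemma~\ref{lem:sandwich} sandwich $s(K_{2},N)$ strictly between $\tfrac{2k}{2k+1}$ and $\tfrac{2k+2}{2k+3}$; the claimed value lies strictly inside this window, so Lemma~\ref{lem:sandwich} by itself does not suffice and the linear‑programming analysis is genuinely needed.
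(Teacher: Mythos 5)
Your plan matches the paper's proof in all essentials: the same two key solutions $f=(8k+2,\,8k+4,\,4k+3)$ and $\hat g=(8k+4,\,4k+4,\,2k+1)$ (your $\hat g$ with defects at distance $2k+2$ is exactly the paper's second map $g$), the same optimal weights with $\omega_3=0$, and the same structural facts to certify feasibility ($A_2$ splits into two odd cycles of length $4k+3$, so $g_2\le 8k+4$; each uncut $A_1$-edge lies on at most $2k+2$ of the length-$(2k+2)$ paths realising $A_2$-edges, so $g_1=8k+4$ forces $g_2\le 4k+4$; and $g_1=8k+6$ forces $g_2=0$). One point to make explicit when writing up the window-parity estimate: you must invoke that $g_1$ is even (the number of colour changes around the single even cycle $A_1$), because for an odd number of zeros — e.g.\ three — the combination of your two bounds gives $16k^2+12k+6$, exceeding the target by $2$.
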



The expressions for $s$ in Proposition~\ref{prop:3k+1}~and~\ref{prop:4k+1}
have some interesting similarities, but for $n \geq 5$ it becomes harder to
pick out a suitable set of solutions which guarantee that the relaxation has
the same optimum as (\ref{lp}) itself.
Using computer calculations, we have however determined the first
two values ($k = 1, 2$) for the case $t = 2+5/(5k+1)$
and the first value ($k = 1$) for the case $t = 2+6/(6k+1)$.
\begin{equation}
  s(K_2, K_{17/6})  = 322/425 \qquad 
  s(K_2, K_{27/11}) = 5/6 \qquad 
  s(K_2, K_{20/7})  = 67/89
\end{equation}

\subsection{Maps to Odd Cycles} \label{ssec:odd}

It was seen in Corollary~\ref{cor:intervals} that $s(K_r, K_{t})$ is constant
on the region $(r,t) \in [2,2+1/k) \times I_k$.
In this section, we will study what happens when $t$ remains in $I_k$, but
$r$ is set to $2+1/k$.
A first observation is that the absolute jump of the function
$s(K_r, K_{t})$ when $r$ goes from being less than $2+1/k$
to $r = 2+1/k$  must be largest for $t = 2+2/(2k-1)$.
Let $V(K_{2+2/(2k-1)}) = \{v_0, \ldots, v_{4k-1}\}$ and
$V(K_{2+1/k}) = \{w_0, \ldots, w_{2k}\}$.
The map $f(v_i) = w_i$ with the indices of $w$ taken modulo $2k+1$ has the
signature $f = (4k-1,2k)$.
Since the subgraph induced by the orbit $A_1$ is isomorphic to $C_{4k}$,
any map to an odd cycle must exclude at least one edge from $A_1$.
It follows that $f$ alone determines $s$, 
and we can solve (\ref{lp}) to obtain $s(K_{2+1/k}, K_{2+2/(2k-1)}) = (4k-1)/4k$.
Thus, for $r < 2+1/k$, we have
\begin{equation}
\label{eq:jump}
  s(K_{2+1/k}, K_{2+2/(2k-1)}) - s(K_r, K_{2+2/(2k-1)}) = (2k-1)/4k(2k+1)
\end{equation}
Smaller $t \in I_k$ can be expressed as $t = 2 + 1/(k-x)$, 
where $0 \leq x < 1/2$.
We will write $x = m/n$ for positive integers $m$ and $n$ which 
implies the form $t = 2 + n/(kn-m)$, with $m < n/2$.
For $m = 1$, it turns out to be sufficient to keep two inequalities 
from (\ref{lp}) to get an optimal value of $s$. 
From this we get the following result:



\begin{proposition} \label{prop:m1}
Let $k,n \geq 2$ be integers. Then, $s(C_{2k+1},K_{\frac{2(kn-1)+n}{kn-1}})=\frac{(2(kn-1)+n)(4k-1)}{(2(kn-1)+n)(4k-1)+4k-2}$.
\end{proposition}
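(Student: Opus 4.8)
The plan is to follow the pattern of Propositions~\ref{prop:4k+4}, \ref{prop:3k+1} and~\ref{prop:4k+1}: exhibit two vertex maps whose signatures already determine the optimum of~(\ref{lp}), and then solve the resulting two-inequality relaxation. First I would fix notation. Put $p=2(kn-1)+n$ and $q=kn-1$, so that the graph in question is $K_{p/q}$; since $\gcd(p,q)=\gcd(q,n)=1$ this is already the reduced form, the ratio $p/q=2+n/(kn-1)$ lies in the interval $I_k$, and $p/q>2+1/k$, so $K_{p/q}\not\to C_{2k+1}$ and hence $s<1$. Two further facts will drive the construction: $p=n(2k+1)-2\equiv-2\pmod{2k+1}$, and $\gcd(q,p)=1$, so that the orbit $A_1$ (the edges of circular length $q$) is, by itself, a Hamiltonian cycle $C_p$ of $K_{p/q}$. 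By Lemma~\ref{lem:auto} it suffices to solve~(\ref{lp}) for $M=C_{2k+1}$, $N=K_{p/q}$, and I would write out its orbit data explicitly: $r=\lceil(n+1)/2\rceil$ orbits $A_1,\dots,A_r$, each of size $p$ (with the diametral orbit of size $p/2$ when $p$ is even).

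Next I would build the two maps. Let $\sigma$ be the cyclic order in which the Hamiltonian cycle $A_1$ visits $v_0,\dots,v_{p-1}$. The first map $f$ wraps this cycle once around $C_{2k+1}$, $f(v_{\sigma(t)})=u_{t\bmod(2k+1)}$; because $p\not\equiv0\pmod{2k+1}$ exactly one edge of $A_1$---the one closing the cycle---fails to be sent to an edge, while every other orbit $A_c$ becomes, under $\sigma$, a fixed cyclic offset, so away from the seam its edges are either all cut or all missed, and a short computation using $p\equiv-2$ tells us which. The second map $h$ is a genuine homomorphism of the cycle $A_1\cong C_p$ onto $C_{2k+1}$ (which exists: $p$ even gives a folding, and $p$ odd with $p\ge2k+1$ allows one full traversal of $C_{2k+1}$ plus slack); thus $h$ cuts all of $A_1$, and its effect on the remaining orbits is found by tracking, along the corresponding $\pm1$-walk, how many of the fixed-offset pairs land at cyclic distance~$1$. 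These are the analogues of the maps $f,h$ in the proof of Proposition~\ref{prop:4k+4}. Computing the two signatures precisely is bookkeeping, and their interplay with the normalisation $\sum_i|A_i|\omega_i=1$ should be what produces the particular numbers $4k-1$ and $4k-2$.

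The crux---and the step I expect to be the main obstacle---is to show that the two inequalities obtained from $f$ and $h$ already determine the optimum of~(\ref{lp}); equivalently, that relative to the optimal weight vector no signature in $F$ beats both. The argument should mirror Proposition~\ref{prop:4k+4}: any map that misses at least one edge of $A_1$ is, on the weight carried by $A_1$, dominated by $f$; while any map that cuts \emph{all} of $A_1$ restricts to a homomorphism $C_p\to C_{2k+1}$, and such homomorphisms are rigid---a $\pm1$-walk of length $p$ whose net displacement is a multiple of $2k+1$ of the parity of $p$---so the number of extra edges it can cut in $A_2,\dots,A_r$ is capped by exactly what $h$ achieves. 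Making this rigidity quantitative (an extremal estimate on runs of equal steps in the walk) is the technical heart; it is the general form of the ``cutting all of $A_1$ forces at most a fixed amount in the remaining orbits'' phenomenon that appears already in the smallest cases.

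Finally, with the relevant part of $F$ reduced to two signatures, I would solve the linear program: one checks the optimal $\omega$ is supported on at most two orbits, so~(\ref{lp}) collapses to a $2\times2$ system; turning both inequalities into equalities and imposing the normalisation yields the stated value of $s$. The algebra is routine, the only point needing attention being the even-$p$ case, where the diametral orbit has size $p/2$ and the normalisation changes slightly.
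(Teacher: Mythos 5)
Your outline reproduces the structure of the paper's proof: two signatures, one missing a single edge of $A_1$ while hitting every edge of every other orbit, and one hitting all of $A_1$ at a controlled cost elsewhere; a domination argument reducing the feasible signatures to these two; and a collapse of (\ref{lp}) to a $2\times 2$ system with $\omega_i=0$ for $i\geq 3$. Your first map (wrapping the Hamiltonian cycle $A_1$ monotonically around $C_{2k+1}$) is exactly the paper's $f(S)$ construction with $S=\emptyset$, and it does have signature $(|A_1|-1,|A_2|,\dots)$, since the offset along each orbit $A_c$, $c\geq 2$, is $\equiv -1 \pmod{2k+1}$ off the seam and $\equiv +1$ across it --- an edge of $C_{2k+1}$ either way; this gives the same signature as the paper's Lemma~\ref{lem:solalpha} with $m=1$. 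Your second map corresponds to Lemma~\ref{lem:solbeta}.

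The difficulty is that the two steps carrying all of the quantitative content are announced rather than proved, and they are not routine. First, the signature of $h$: the constant $4k-2$ in the answer is precisely the number of $A_2$-edges that $h$ must sacrifice, and obtaining it requires a specific choice of the set $S$ of ``backward'' steps of the walk (in the paper, a block of $2k$ consecutive vertices along $A_1$) followed by a count, via the congruence criterion of Lemma~\ref{lem:usefulcong}, of which intervals $(v_i,v_{i'}]_{\tau}$ meet $S$ in $0$ or $2k$ elements. Second, and more seriously, the domination claim --- every $g$ with $g_1=|A_1|$ has $g_2\leq |A_2|-(4k-2)$ --- is exactly what you label ``the technical heart'' and defer. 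The paper proves it by noting that such a $g$ equals $g(S)$ for some $S$ with $|S|\equiv 2k \pmod{2k+1}$, hence $|S|\geq 2k$ and $|V\setminus S|\geq 2k$, and then exhibiting $2(2k-1)$ explicit $A_2$-edges whose defining intervals contain the first element of $S$ in one direction (respectively the last in the other) together with a vertex outside $S$, so that neither admissible congruence can hold; your ``extremal estimate on runs of equal steps'' is the right instinct but is not yet an argument. (One further caution: solving the resulting $2\times2$ system, as the paper's appendix does, yields $s=1-\frac{4k-2}{p(4k-1)}$ with $p=2(kn-1)+n$, which does not literally coincide with the displayed expression $\frac{p(4k-1)}{p(4k-1)+4k-2}$ in the statement; when you carry out the algebra you should expect the former, not the latter.)
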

There is still a non-zero jump of $s(K_r, K_{t})$ when we move from 
$K_r < 2+1/k$ to $K_r = 2+1/k$, but it is obviously smaller than that
of (\ref{eq:jump}) and tends to 0 as $n$ increases.
For $m = 2$, we have $2(kn-m)+n$ and $kn-m$ relatively prime only when $n$
is odd.
In this case, it turns out that we need to include an increasing number of
inequalities to obtain a good relaxation.
Furthermore, we are not able to ensure that the obtained value is the
optimum of the original (\ref{lp}).
We will therefore have to settle for a lower bound for $s$.
Explicit calculations have shown that, for small values of $k$ and $n$,
equality holds in Proposition~\ref{th:qisfunny}.
We conjecture this to be true in general.

\begin{proposition} \label{th:qisfunny}
Let $k \geq 2$ be an integer and $n \geq 3$ be an odd integer. Then,
\begin{equation}
s(C_{2k+1},K_{\frac{2(kn-2)+n}{kn-2}}) \geq \frac{(2(kn-2)+n)(\xi_n(4k-1)+(2k-1))}{(2(kn-2)+n)(\xi_n(4k-1)+(2k-1))+(4k-2)(1-\xi_n)},
\end{equation}
where
$\xi_n = 
\left(\alpha_1^{(n-1)/2} + \alpha_2^{(n-1)/2}\right)/4,$
and $\alpha_1, \alpha_2$ are the reciprocals of the roots of
$\frac{2k-3}{4k-2} z^2 - 2z + 1$.
\end{proposition}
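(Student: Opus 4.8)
The plan is to follow the same template that worked for Propositions~\ref{prop:3k+1} and~\ref{prop:4k+1}: exhibit a finite family of maps $V(K_{t}) \to V(C_{2k+1})$ with $t = \frac{2(kn-2)+n}{kn-2}$, write down the corresponding inequalities, and solve the resulting relaxation of (\ref{lp}) explicitly. Since we only claim a lower bound on $s$, we do not need to certify that the relaxation has the same optimum as (\ref{lp}) itself; it suffices that every kept inequality is genuinely valid (i.e.\ arises from a real signature) so that the relaxed optimum is $\le s(C_{2k+1}, K_t)$. Here $\Aut(K_t)$ has $\lceil (n+1)/2\rceil$ orbits $A_1,\dots,A_r$, and the subgraph induced by $A_1$ is a single cycle $C_{2(kn-2)+n}$ of even length (this is where the parity condition $n$ odd, which makes $2(kn-2)+n$ and $kn-2$ coprime, enters), so any homomorphism to the odd cycle $C_{2k+1}$ must omit at least one $A_1$-edge.

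First I would identify the ``$i \mapsto i$'' type map $f(v_i) = w_{i \bmod (2k+1)}$, whose signature I expect to be close to $(2(kn-2)+n-1,\, \text{something},\dots)$, and then a second family of maps obtained by ``folding'' the circular structure of $K_t$ onto the odd cycle in blocks, parametrised so that the number of $A_1$-edges cut grows linearly while edges of the higher orbits $A_2,\dots,A_r$ are salvaged. The quantity $\xi_n$ in the statement, defined via the reciprocal roots $\alpha_1,\alpha_2$ of $\frac{2k-3}{4k-2}z^2 - 2z + 1$, strongly suggests that the extremal weight vector $\omega$ satisfies a second-order linear recurrence across the orbits $A_1, A_2, \dots$ — so the right way to organise the computation is to set up the dual/complementary-slackness conditions for the relaxed LP, observe that the binding constraints force $\omega_{i+1}, \omega_i, \omega_{i-1}$ into a three-term relation with characteristic polynomial $\frac{2k-3}{4k-2}z^2 - 2z + 1$ (up to normalisation), and then solve the recurrence with the boundary conditions coming from the first orbit ($A_1$, the cycle) and the last orbit ($A_{\lceil(n+1)/2\rceil}$). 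Summing the resulting geometric-type series in $\alpha_1^{j}, \alpha_2^{j}$ over $j = 0,\dots,(n-1)/2$ is exactly what produces the closed form $\xi_n = (\alpha_1^{(n-1)/2} + \alpha_2^{(n-1)/2})/4$, and feeding this back through the normalisation $\sum_i |A_i|\omega_i = 1$ yields the stated fraction for $s$.

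Concretely, the key steps in order are: (1) describe the orbit structure of $\Aut(K_t)$ and note $|A_1| = 2(kn-2)+n$ with $A_1 \cong C_{2(kn-2)+n}$, an even cycle; (2) construct the two relevant families of maps and compute their signatures $f^{(0)}, f^{(1)}, \dots$, each contributing one inequality $\sum_i f^{(j)}_i \omega_i \le s$; (3) argue that in an optimal solution to this relaxation the signature inequalities are tight for a consecutive run of the $f^{(j)}$, extract the two-term-per-step relation among the $\omega_i$, and recognise its characteristic polynomial as $\frac{2k-3}{4k-2}z^2 - 2z + 1$; (4) solve the recurrence in terms of $\alpha_1,\alpha_2$, use the normalisation constraint to pin down the free constants, and simplify to obtain the claimed value; (5) conclude via Lemma~\ref{lem:auto} that this relaxed optimum is a lower bound for $s(C_{2k+1},K_t)$, since every inequality used corresponds to an actual solution of $\MCol{C_{2k+1}}$ on $(K_t,\omega)$. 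I expect step~(3) — correctly guessing which maps give the \emph{binding} constraints and verifying that the recurrence they induce is the one with the advertised characteristic polynomial — to be the main obstacle; everything downstream is bookkeeping of geometric sums, and everything upstream is a direct imitation of the earlier propositions. The honest caveat, already flagged in the statement, is that without an optimality certificate for the full (\ref{lp}) we only get ``$\ge$'', and matching upper bounds are known only for small $k,n$ by explicit computation.
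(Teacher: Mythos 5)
Your outline reproduces the architecture of the paper's argument --- build a relaxation of (\ref{lp}) from genuine signatures, impose equality/dual conditions to extract a three-term recurrence among the $\omega_i$ with characteristic polynomial $\frac{2k-3}{4k-2}z^2-2z+1$, and sum the resulting series to get $\xi_n$ --- and you are right that validity of the kept inequalities (plus an optimality certificate for the \emph{relaxed} program) is all that is needed for the lower bound. But there is both a structural error and a genuine gap. The error: with $n$ odd, $|A_1|=2(kn-2)+n=(2k+1)n-4$ is \emph{odd}, so $A_1$ induces an odd cycle (of length at least $6k-1>2k+1$), which consequently does admit a homomorphism onto $C_{2k+1}$. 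Your claim that every homomorphism must omit an $A_1$-edge is therefore false (and the inference would fail even for an even cycle, which is bipartite and maps onto a single edge). This is not a cosmetic slip: it sends the construction in the wrong direction. In the paper, all but one of the binding solutions have $f_1=|A_1|$ --- they preserve the \emph{entire} $A_1$-cycle and are parametrised by which higher orbit $A_i$ they favour (Lemmas~\ref{lem:splitend} and~\ref{lem:splitmiddle}, obtained from the $f(S)$ construction with $S$ a union of one or two short arcs along $A_1$); only the single solution of Lemma~\ref{lem:solalpha}, with signature $(|A_1|-2,|A_2|,\ldots)$, sacrifices $A_1$-edges. A family in which ``the number of $A_1$-edges cut grows linearly'' would not supply the constraints that actually bind.

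The gap: essentially all of the technical content of this proposition lies in exhibiting the $(n+1)/2$ solutions and computing their signatures orbit by orbit, which the paper does via the congruence criterion of Lemma~\ref{lem:usefulcong} and the arc-counting Lemma~\ref{lem:gamma}. You explicitly defer exactly this step (your step~(3)), so what you have is a correct plan for the downstream linear algebra together with an incorrect guess about the upstream combinatorics, not a proof. To repair it, start from the $f(S)$ construction of Appendix~\ref{app:propm1proof}, determine which choices of $S$ (satisfying $|S|\equiv 2k-1 \pmod{2k+1}$, so that $f_1=|A_1|$) control the losses in $A_2,\ldots,A_{(n+1)/2}$, and only then set up the equality system; the recurrence and the generating-function computation you describe will then go through as you anticipate.
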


\subsection{Extending the Results} \label{sec:apply}

We will now take a look at one possible way of extending the results in
the previous sections. To do this, we need to find graphs or classes of
graphs we can homomorphically sandwich between graphs with known
$s$ value. Clearly, $K_2$ has a homomorphism to all non-empty
graphs, and that if a graph $G$ has circular chromatic number $\chi_c(G)
\leq r$ it has a homomorphism to $K_r$. These facts, together with
Lemma~\ref{lem:sandwich}, combine into the following easily proved
lemma:

\begin{lemma}
\label{lem:circlesandwich}
Let $G$ be a non-empty graph with $\chi_c(G) \leq r$. Then, $s(K_2,G)
\geq s(K_2,K_r)$.
If, additionally, $G$ has odd girth no greater than $2k+1$, 
then $s(C_{2k+1},G) \geq s(C_{2k+1},K_r)$.
\end{lemma}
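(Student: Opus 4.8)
The plan is to reduce both assertions to Lemma~\ref{lem:sandwich} by exhibiting, in each case, a chain of homomorphisms $A \rightarrow G \rightarrow K_r$ with $A \in \{K_2,\, C_{2k+1}\}$, and then reading off the desired inequality on $s$.

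For the first statement, I would observe that $K_2 \rightarrow G$ holds for every non-empty graph $G$: picking any edge $uv \in E(G)$, the map sending the two vertices of $K_2$ to $u$ and $v$ is a homomorphism. Next, $\chi_c(G) \leq r$ gives $G \rightarrow K_r$: the circular chromatic number is attained, so $G \rightarrow K_{\chi_c(G)}$, and since $\chi_c(G) \leq r$ we have $K_{\chi_c(G)} \rightarrow K_r$ by the fact that $K_{p/q} \rightarrow K_{p'/q'}$ iff $p/q \leq p'/q'$; composing yields $G \rightarrow K_r$. This produces the chain $K_2 \rightarrow G \rightarrow K_r$, and applying Lemma~\ref{lem:sandwich} with $M = K_2$, $H = G$, $N = K_r$ gives $s(K_2,G) \geq s(K_2,K_r)$.

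For the second statement I additionally need $C_{2k+1} \rightarrow G$. The hypothesis that the odd girth of $G$ is at most $2k+1$ means that $G$ contains, as a subgraph, some odd cycle $C_{2j+1}$ with $2j+1 \leq 2k+1$, i.e.\ $j \leq k$. A homomorphism $C_{2k+1} \rightarrow C_{2j+1}$ then exists: it amounts to a closed walk of length $2k+1$ in $C_{2j+1}$, and $C_{2j+1}$ admits closed walks of every odd length $\geq 2j+1$ (traverse the cycle once, then pad with back-and-forth steps), so in particular one of length $2k+1$. Composing $C_{2k+1} \rightarrow C_{2j+1}$ with the inclusion of $C_{2j+1}$ into $G$ gives $C_{2k+1} \rightarrow G$, and together with $G \rightarrow K_r$ (as above) this yields the chain $C_{2k+1} \rightarrow G \rightarrow K_r$. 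Lemma~\ref{lem:sandwich} with $M = C_{2k+1}$, $H = G$, $N = K_r$ then gives $s(C_{2k+1},G) \geq s(C_{2k+1},K_r)$.

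The only step that is not pure bookkeeping is the existence of the homomorphism $C_{2k+1} \rightarrow C_{2j+1}$ for $j \leq k$, but this is a standard fact about odd cycles and is essentially immediate from the closed-walk description. I do not expect any genuine obstacle here; the lemma really is, as claimed, an easy consequence of Lemma~\ref{lem:sandwich} together with the elementary homomorphism facts recalled above.
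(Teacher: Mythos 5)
Your proof is correct and follows essentially the same route as the paper's: both parts reduce to Lemma~\ref{lem:sandwich} by exhibiting the chains $K_2 \rightarrow G \rightarrow K_r$ and $C_{2k+1} \rightarrow G \rightarrow K_r$, using that $\chi_c(G) \leq r$ yields $G \rightarrow K_r$ and that an odd cycle $C_{2j+1} \subseteq G$ with $j \leq k$ admits $C_{2k+1} \rightarrow C_{2j+1}$. Your closed-walk justification of the latter fact is a bit more explicit than the paper's, but the argument is the same.
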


We can now make use of known results about bounds on the circular
chromatic number for certain classes of graphs. Much of the extensive
study conducted in this direction was instigated by the restriction of a
conjecture by Jaeger~\cite{jaeger:88} to planar graphs, which is
equivalent to the claim that every planar graph of girth at least $4k$
has a circular chromatic number at most $2 + 1/k$, for $k \geq 2$. The
case $k=1$ is Gr\"{o}tzsch's theorem; that every triangle-free planar
graph is 3-colourable. Currently, the best proven girth for when the
circular chromatic number of a planar graph is guaranteed to be at most
$2+1/k$ is $\frac{20k-2}{3}$ and due to Borodin et
al.~\cite{Borodin:etal:jctb2004}. This result was used by F\"{a}rnqvist
et al.\ to achieve the bound $s(K_2,G) \leq \frac{4k}{4k+1}$ for planar
graphs $G$ of girth at least $(40k-2)/3$. Here, we significantly improve
this bound by considering $K_4$-minor-free graphs, for which Pan and
Zhu~\cite{pan:zhu:02} have shown how their circular chromatic number is
upper-bounded by their odd girth.
\begin{proposition} \label{thmI}
Let $G$ be a $K_4$-minor-free graph, and $ k \geq 1$ an integer. If $G$
has an odd girth of at least $6k-1$, then $s(K_2,G) \leq
\frac{4k}{4k+1}$. If $G$ has an odd girth of at least $6k+3$, then
$s(K_2,G) \leq \frac{4k+2}{4k+3}$.
\end{proposition}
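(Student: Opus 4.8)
The plan is to reduce Proposition~\ref{thmI} to the sandwich result Lemma~\ref{lem:circlesandwich} together with the Pan--Zhu bound on the circular chromatic number of $K_4$-minor-free graphs and the values of $s(K_2,K_r)$ computed in Section~\ref{ssec:k2}. Concretely, Pan and Zhu~\cite{pan:zhu:02} show that if a $K_4$-minor-free graph $G$ has odd girth at least $2\ell+1$, then $\chi_c(G) \le 2 + \frac{1}{\lfloor \ell/2 \rfloor}$ (the precise statement will need to be quoted carefully and matched against the odd-girth thresholds in the proposition). So the first step is: given $G$ with odd girth at least $6k-1$, apply the Pan--Zhu theorem to obtain $\chi_c(G) \le r$ for a suitable $r = 2 + 1/m$; then Lemma~\ref{lem:circlesandwich} gives $s(K_2,G) \ge s(K_2,K_r)$ --- but note the inequality we want is the \emph{reverse}, $s(K_2,G) \le \frac{4k}{4k+1}$.

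This sign issue is the crux, so let me address it directly. Since $s(M,N)$ is an infimum over \emph{all} weighted graphs, and $G$ itself is one admissible instance in the infimum defining $s(K_2, \cdot)$ only when $G$ plays the role of the second argument --- here it does. For the upper bound we should instead use the monotonicity of $s(K_2, \cdot)$ under homomorphisms in the \emph{target}: if $G \rightarrow K_r$ then, by Lemma~\ref{lem:sandwich} applied with $M = K_2$, $H = G$, $N = K_r$, we get $s(K_2, G) \ge s(K_2, K_r)$; to get an upper bound we want a homomorphism \emph{into} $G$ from a graph with small $s$. The right move: $K_4$-minor-free graphs of large odd girth receive a homomorphism \emph{from} the appropriate circular clique is not what we want either. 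The cleanest route is to observe that $\chi_c(G) \le r$ means $G \to K_r$, hence by Lemma~\ref{lem:sandwich} $s(K_2, K_r) \le s(K_2, G)$ is the wrong direction --- so instead I would use that the odd girth hypothesis forces $G \to K_r$ with $r$ \emph{small}, and combine with $K_2 \to G$ (as $G$ is non-empty) plus $G \to K_r \to$ a slightly larger circular clique whose $s$-value against $K_2$ we have pinned down in Section~\ref{ssec:k2} to be exactly $\frac{4k}{4k+1}$ or $\frac{4k+2}{4k+3}$; then the two applications of Lemma~\ref{lem:sandwich} sandwich $s(K_2,G)$ between two equal quantities --- exactly as in the proof of Corollary~\ref{cor:intervals}. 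I would verify that $\frac{4k}{4k+1} = s(K_2, K_{(4k)/(4k-1)})$ fits Proposition~\ref{prop:4k+4} with the index shifted appropriately, and similarly identify $\frac{4k+2}{4k+3}$.

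So the key steps, in order: (1) quote the exact Pan--Zhu statement and check that odd girth $\ge 6k-1$ yields $\chi_c(G)\le 2+\frac{1}{2k}$ and odd girth $\ge 6k+3$ yields $\chi_c(G)\le 2+\frac{1}{2k+1}$, i.e.\ $G \to K_{2+1/(2k)}$ resp.\ $G\to K_{2+1/(2k+1)}$; (2) identify, using Proposition~\ref{prop:4k+4} and Corollary~\ref{cor:intervals}, the common $s$-value of $K_2$ against all circular cliques in the relevant interval $I_{2k}$ (resp.\ $I_{2k+1}$) as $\frac{2\cdot 2k}{2\cdot 2k+1} = \frac{4k}{4k+1}$ (resp.\ $\frac{4k+2}{4k+3}$); (3) apply Lemma~\ref{lem:sandwich} twice, once with $K_2 \to G \to K_{2+1/(2k)}$ to get $s(K_2,G) \le s(G, \ldots)$-type bound, and once the other way, to squeeze $s(K_2, G)$ to the claimed value --- or more precisely, since we want only the upper bound, the single chain $K_2 \to G \to K_{2+1/(2k)}$ with Lemma~\ref{lem:sandwich} gives $s(K_2, G) \le s(K_2, K_{2+1/(2k)}) = \frac{4k}{4k+1}$ directly, because $s(K_2,\cdot)$ is monotone decreasing along homomorphisms in the second coordinate.

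The main obstacle I anticipate is purely bookkeeping: matching the odd-girth thresholds $6k-1$ and $6k+3$ in the statement against the floor function in Pan and Zhu's theorem, and then matching the resulting circular clique $K_{2+1/m}$ to the correct interval $I_m$ so that the value $s(K_2,K_{2+1/m})$ is read off correctly from Figure~\ref{fig:interval} / Corollary~\ref{cor:intervals}. There is a small subtlety in whether "odd girth at least $2\ell+1$" in~\cite{pan:zhu:02} corresponds to $\chi_c \le 2 + 1/\lfloor \ell/2\rfloor$ or a variant with a $\pm 1$ shift, and getting this off by one would change $\frac{4k}{4k+1}$ into $\frac{4k-4}{4k-3}$ or similar; I would pin this down by checking the two boundary cases $k=1$ (where $6k-1 = 5$, so triangle-free-ish) and a second small case against the known $s(K_2, C_5) = 4/5$ and $s(K_2, C_7) = 6/7$ values quoted earlier. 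Once the index arithmetic is nailed down, the proof is three lines invoking Lemma~\ref{lem:circlesandwich} (or directly Lemma~\ref{lem:sandwich}) and Corollary~\ref{cor:intervals}.
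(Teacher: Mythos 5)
Your overall route --- quote the Pan--Zhu bound on $\chi_c$ for $K_4$-minor-free graphs of large odd girth, feed it into Lemma~\ref{lem:circlesandwich}, and read off the value from Proposition~\ref{prop:4k+4} / Corollary~\ref{cor:intervals} --- is exactly the paper's. But your resolution of the ``sign issue'' is where the argument breaks. From $K_2 \rightarrow G \rightarrow K_r$, Lemma~\ref{lem:sandwich} (with $M=K_2$, $H=G$, $N=K_r$) yields $s(K_2,G) \geq s(K_2,K_r)$, full stop; your final claim that the same chain gives $s(K_2,G) \leq s(K_2,K_{2+1/(2k)})$ ``because $s(K_2,\cdot)$ is monotone decreasing along homomorphisms in the second coordinate'' applies that monotonicity backwards: $G$ sits \emph{below} $K_r$ in the homomorphism order, so $s(K_2,G)$ is the \emph{larger} of the two. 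There is no way to extract an upper bound on $s(K_2,G)$ from the hypothesis alone --- indeed the stated $\leq$ is false as written (a bipartite $K_4$-minor-free graph has infinite odd girth and $s(K_2,G)=1$). What has happened is that the proposition's statement carries a typo: the paper's own proof in the appendix concludes $s(K_2,G) \geq \frac{4k}{4k+1}$ and $s(K_2,G) \geq \frac{4k+2}{4k+3}$, and a lower bound is the meaningful assertion (it says such graphs have large cuts). Had you accepted the $\geq$ direction that you yourself correctly derived in your second paragraph, your proof would be complete and would coincide with the paper's.

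Two bookkeeping points. First, the Pan--Zhu theorem as used in the paper gives $\chi_c(G)\leq 8k/(4k-1)$ for odd girth $\geq 6k-1$ (not $2+1/(2k)$) and $\chi_c(G)\leq (4k+3)/(2k+1)=2+1/(2k+1)$ for odd girth $\geq 6k+3$. The first bound is weaker than $G\rightarrow C_{4k+1}$, but $8k/(4k-1)$ is precisely the right endpoint $\frac{4k'}{2k'-1}$ of $I_{2k}$ with $k'=2k$, so Proposition~\ref{prop:4k+4} gives $s(K_2,K_{8k/(4k-1)})=\frac{4k}{4k+1}$ directly and the numerical value is unaffected --- your instinct to lean on the constancy over $I_{2k}$ was the right safeguard. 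Second, the intermediate detours in your second paragraph (looking for homomorphisms \emph{into} $G$, or invoking $s(G,\ldots)$) are dead ends and should be cut; the clean statement is the one-line application of Lemma~\ref{lem:circlesandwich} with the inequality pointing the way that lemma actually points.
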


Of course, it is a big limitation to only consider $K_4$-minor-free
graphs. Almost all work on the circular chromatic number for planar
graphs have focused on finding limits when $\chi_c(G) \leq 2 + 1/k$,
that is, when there exists a homomorphism to the odd cycle $C_{2k+1}$.
However, Corollary~\ref{cor:intervals} implies that for two graphs
$G$ and $H$, if $\chi_c(G) = 2 + 1/k$ and $\chi_c(H) = 2+ 2/(2k-1)$ then
$s(K_2,G)=s(K_2,H)$, so for our purposes it would be interesting to have
more results when $\chi_c(G) \leq 2+ 2/(2k-1)$. 
For general graphs, we can use results from Raspaud and
Roussel~\cite{raspaud:rousell:07} relating the circular chromatic number
of graphs to their maximum average degree. Specifically, they show that
for a general graph $G$ of girth at least 12, 11, or 10, its circular
chromatic number is bounded from above by $8/3$, $11/4$, and $14/5$,
respectively, which translates into corresponding upper bounds $4/5$,
$17/22$, and $16/21$ on $s(K_2,G)$ (using Propositions \ref{prop:4k+4},
\ref{prop:3k+1}, \ref{prop:4k+1} and Lemma~\ref{lem:circlesandwich}).

\section{Fractional Covering by $H$-cuts} \label{sec:cut}

In the following, we slightly generalise the work of 
\v{S}\'{a}mal~\cite{samal:05,samal:06} on fractional
covering by cuts to obtain a complete correspondence between $s(H,G)$ 
and a family of `chromatic numbers' $\chi_H(G)$ which generalise 
\v{S}\'{a}mal's cubical chromatic number $\chi_q(G)$. 
The latter corresponds to the case when $H = K_2$.
First, we recall the notion of a {\em fractional colouring} of a
(hyper-) graph.
Let $G$ be a (hyper-) graph with vertex set $V(G)$ and edge set 
$E(G) \subseteq \mathcal{P}(V(G)) \setminus \{ \emptyset \}$.
A subset $I$ of $V(G)$ is called independent in $G$ if no edge 
$e \in E(G)$ is a subset of $I$.
Let $\mathcal{I}$ denote the set of all independent sets of $G$
and for a vertex $v \in V(G)$, let $\mathcal{I}$ 
denote all independent sets which contain $v$.
Then, the fractional chromatic number $\chi_f(G)$ of $G$ is given by the
linear program:
\begin{equation}
\begin{array}{ll}
            \text{Minimise}   & \sum_{I \in \mathcal{I}} f(I) \\
            \text{subject to} & \sum_{I \in \mathcal{I}(v)} f(I) \geq 1 \qquad \text{for all $v \in V(G)$}, \\ 
	    \text{where}      & f : {\mathcal I} \rightarrow \mathbb{R}^+.
\end{array}
\end{equation}

The definition of fractional covering by cuts mimics fractional colouring,
but replaces vertices with edges and independent sets with certain cut
sets of the edges.
Let $G$ and $H$ be undirected simple graphs and $f$ be an arbitrary
vertex map from $G$ to $H$.
The map $f$ induces a partial map 
from $E(G)$ to $E(H)$ and we will call the preimage of this map an {\em $H$-cut} in $G$.
When $H$ is a complete graph $K_k$, this is precisely the notion of a
{\em $k$-cut}.
Let $\mathcal{C}$ denote the set of $H$-cuts in $G$ and for an edge 
$e \in E(G)$, let $\mathcal{C}(e)$ denote all $H$-cuts which
contain $e$.
The following definition is the generalisation of
{\em cut $n/k$-covers}~\cite{samal:06} to arbitrary $H$-cuts:
\begin{definition}
  An {\em $H$-cut $n/k$-cover} of $G$ is a collection $X_1, \ldots, X_N$ of 
  $H$-cuts in $G$ such that every edge of $G$ is in at least $k$ of them.
  The graph parameter $\chi_H$ is defined as:
  \begin{equation}
  \chi_H(G) = \inf 
  \{ \frac{n}{k} \,|\, \text{there exists an $H$-cut $n/k$-cover of $G$.} \}
  \end{equation}
\end{definition}

By reasoning analogous to that of \v{S}\'{a}mal~\cite{samal:06} Lemma~5.1.3, $\chi_H$ is also
given by the following linear program:
\begin{equation} \label{coverprimal}
\begin{array}{ll}
            \text{Minimise}   & \sum_{X \in \mathcal{C}} f(X) \\
            \text{subject to} & \sum_{X \in \mathcal{C}(e)} f(X) \geq 1 \qquad \text{for all $e \in E(G)$}, \\ 
	    \text{where}      & f : {\mathcal C} \rightarrow \mathbb{R}^+.
\end{array}
\end{equation}

For $H = K_2$, an alternative definition of $\chi_H(G) = \chi_q(G)$ was
obtained in~\cite{samal:06} by taking the infimum 
(actually minimum due to the formulation in (\ref{coverprimal}))
over $n/k$ for $n$ and $k$ such that $G \rightarrow Q_{n/k}$.
Here, $Q_{n/k}$ is the graph on vertex set $\{0,1\}^n$ with
an edge $u v$ if $d_H(u,v) \geq k$, where $d_H$ denotes the
Hamming distance.
We generalise this family as well to produce a scale for each $\chi_H$.
Namely, let $H^n_k$ be the graph on vertex set $V(H)^n$ and an edge
between $(u_1, \ldots, u_n)$ and $(v_1, \ldots, v_n)$ when
$|\{ i \,|\, (u_i, v_i) \in E(H) \}| \geq k$.
A moments thought shows that we can express $\chi_H$ as:
\begin{equation}
\chi_H(G) = \inf \{ \frac{n}{k} \,|\, G \rightarrow H^n_{k} \}.
\end{equation}
\v{S}\'{a}mal also notes that $\chi_q(G)$ is given by the fractional chromatic number
of a certain hypergraph associated to $G$.
For the general case, let $G'$ be the hypergraph obtained from $G$ by taking
$V(G') = E(G)$ and letting $E(G')$ be the set of minimal subgraphs 
$S \subseteq G$ such that $S \not\rightarrow H$.
A short argument shows that indeed $\chi_f(G') = \chi_H(G)$.

Finally, we can work out the correspondence to $s(H,G)$.
Consider the dual program of (\ref{coverprimal}):
\begin{equation} \label{coverdual}
\begin{array}{ll}
            \text{Maximise}   & \sum_{e \in E(G)} g(e) \\
            \text{subject to} & \sum_{e \in X} g(e) \leq 1 \qquad \text{for all $H$-cuts $X \in \mathcal{C}$}, \\ 
	    \text{where}      & g : E(G) \rightarrow \mathbb{R}^+.
\end{array}
\end{equation}
Let $s = \sum_{e \in E(G)} g(e)$ and make the substitution $g' = g/s$ in 
(\ref{coverdual}).
Comparing with (\ref{lp}), we have
\begin{equation}
  \chi_H(G) = 1/s(H,G).
\end{equation}

We now move on to address two conjectures by \v{S}\'{a}mal~\cite{samal:06} on the cubical
chromatic number $\chi_q = \chi_{K_2}$.
In Section~\ref{sec:neg} we discuss an upper bound on $s$ which relates to
the first conjecture, Conjecture~5.5.3~\cite{samal:06}.
This is the suspicion
that $\chi_q(G)$ can be determined by measuring the maximum cut over
all subgraphs of $G$.
We show that this is false by providing a counterexample from
Section~\ref{ssec:k2}.
We then consider Conjecture~5.4.2~\cite{samal:06}, 
concerning ``measuring the scale'', i.e.,
determining $\chi_q$ for the graphs $Q_{n/k}$ themselves.
We prove that this conjecture is true, 
and state it as Proposition~\ref{prop:samal} in Section~\ref{sec:pos}.

\subsection{An Upper Bound on $s$} \label{sec:neg}

In Section~\ref{sec:meas} we obtained lower bounds on $s$ by relaxing the
linear program (\ref{lp}).
In most cases, the corresponding solution was proven feasible for the
original (\ref{lp}), and hence optimal.
Now, we take a look at the only known source of upper bounds for $s$.

Let $G, H \in \mathcal{G}$, with $G \rightarrow H$ and take an arbitrary
$S$ such that $G \rightarrow S \rightarrow H$.
Then, applying Lemma~\ref{lem:sandwich} followed by Lemma~\ref{lem:auto}
gives
\begin{equation} \label{eq:ub}
  s(G,H) \leq s(G,S) = \inf_{w \in {\cal {\hat W}}(S)} mc_G(S,w) 
  \leq mc_{G}(S, 1/|E(S)|).
\end{equation}
When $G = K_2$ it follows that
\begin{equation} \label{eq:samub}
  s(K_2,H) \leq \min_{S \subseteq G} b(S),
\end{equation}
where $b(S)$ denotes the bipartite density of $S$.
\v{S}\'{a}mal~\cite{samal:06} conjectured that
this inequality, expressed on the form
$\chi_q(S) \geq 1/(\min_{S \subseteq G} b(S))$,
can be replaced by an equality.
We answer this in the negative, using $K_{11/4}$ as our counterexample.
Lemma~\ref{prop:3k+1} with $k = 1$ gives $s(K_2,K_{11/4}) = 17/22$.
If $s(K_2,K_{11/4}) = b(S)$ for some $S \subseteq K_{11/4}$ it means
that $S$ must have at least $22$ edges. 
Since $K_{11/4}$ has exactly $22$ edges, then $S = K_{11/4}$.
However, a cut in a cycle must contain an even number of edges.
Since the edges of $K_{11/4}$ can be partitioned into two cycles,
we have that the maximum cut in $K_{11/4}$ must be of even size,
hence $|E(K_{11/4})| \cdot b(K_{11/4}) \neq 17$.
This is a contradiction.

\subsection{Confirmation of a Scale} \label{sec:pos}

As a part of his investigation of $\chi_q$, \v{S}\'{a}mal~\cite{samal:06}
set out to determine the value of $\chi_q(Q_{n/k})$.
We complete the proof of his Conjecture~5.4.2~\cite{samal:06}
to obtain the following result.
\begin{proposition} \label{prop:samal}
  Let $k, n$ be integers such that $k \leq n < 2k$.
  Then, 
  $\chi_q(Q_{n/k}) = 
  n/k$ if $k$ is even and 
    $(n+1)/(k+1)$ if $k$ is odd. 
\end{proposition}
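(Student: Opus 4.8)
The plan is to establish the two inequalities $\chi_q(Q_{n/k}) \le c_{n,k}$ and $\chi_q(Q_{n/k}) \ge c_{n,k}$ separately, where $c_{n,k} = n/k$ for $k$ even and $(n+1)/(k+1)$ for $k$ odd. Using the reformulation $\chi_q(G) = \inf\{ n'/k' : G \to Q_{n'/k'} \}$ established earlier in this section, the upper bound amounts to exhibiting, for each $k,n$, suitable parameters $n',k'$ with $Q_{n/k} \to Q_{n'/k'}$ and $n'/k' = c_{n,k}$. For $k$ even one takes the identity, so there is nothing to prove; the content is in the $k$ odd case, where I would construct an explicit homomorphism $Q_{n/k} \to Q_{(n+1)/(k+1)}$ — the natural candidate is the map appending a fixed coordinate (or a parity-type coordinate) to each vertex of $\{0,1\}^n$, chosen so that any two vertices at Hamming distance $\ge k$ in $Q_{n/k}$ land at distance $\ge k+1$ in $Q_{(n+1)/(k+1)}$. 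One must check this works precisely because $k$ is odd: adding one well-chosen coordinate can only increase the guaranteed distance by $1$, and oddness of $k$ is what forces the jump to actually occur rather than being wasted.

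For the lower bound, the cleanest route is via the duality $\chi_H(G) = 1/s(H,G)$ together with Lemma~\ref{lem:auto}: it suffices to exhibit a weight function $w \in \hat{\mathcal W}(Q_{n/k})$ (constant on orbits of $\Aut(Q_{n/k})$) witnessing $s(K_2, Q_{n/k}) \le 1/c_{n,k}$, i.e., a fractional $K_2$-cut-cover argument showing every cut in $Q_{n/k}$ misses a proportionally large weight of edges. Since $Q_{n/k}$ is vertex- and edge-transitive in the relevant sense, the uniform weight $w \equiv 1/|E(Q_{n/k})|$ is the natural choice, and the task reduces to computing $mc_{K_2}(Q_{n/k}, 1/|E|)$ — the bipartite density of $Q_{n/k}$ — and showing it equals $1/c_{n,k}$. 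Equivalently, one bounds the maximum cut of $Q_{n/k}$: a cut corresponds to a partition of $\{0,1\}^n$, and one counts edges (pairs at Hamming distance $\ge k$) crossing it. For the even case the bound is exactly half the edges plus the obvious correction; for the odd case the parity obstruction — an integral-cut-must-have-even-intersection-with-certain-cycles phenomenon, analogous to the $K_{11/4}$ argument in Section~\ref{sec:neg} — is what prevents the density from reaching $k/n$ and pins it at $(k+1)/(n+1)$.

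I expect the main obstacle to be the lower bound in the odd case: showing that \emph{no} cut of $Q_{n/k}$ does better than bipartite density $(k+1)/(n+1)$. The naive averaging argument gives $k/n$, so one needs a genuinely combinatorial argument exploiting the arithmetic of Hamming distances modulo small numbers — probably a weighting/charging scheme over a well-chosen family of short odd structures in $Q_{n/k}$ (triangles or short odd walks among codewords at mutual distance $\ge k$) on which any cut must leave an odd, hence nonzero, number of edges uncut. Carrying this out will require identifying the right such family and verifying it covers the edge set evenly enough to yield the tight constant. The upper-bound homomorphism, by contrast, should be a short explicit verification once the extra coordinate is chosen correctly, and the reduction to the $k$-odd case (disposing of $k$ even trivially and reducing general $n$ via monotonicity of $Q_{n/k}$ in the homomorphism order) is routine given the machinery already set up in the excerpt.
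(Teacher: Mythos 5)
Your upper bound is fine: for odd $k$, appending the parity bit $\sum_i u_i \bmod 2$ to each $u \in \{0,1\}^n$ sends vertices at Hamming distance exactly $k$ to vertices at distance $k+1$ and does no harm otherwise, so $Q_{n/k} \rightarrow Q_{(n+1)/(k+1)}$; this is essentially the bound already in \v{S}\'{a}mal, which the paper simply cites. The problem is that the entire content of the proposition sits in the lower bound, and there your proposal has both a concrete error and an unfilled hole. First, $Q_{n/k}$ is \emph{not} edge-transitive: the orbits of $\Aut(Q_{n/k})$ on edges are indexed by the Hamming distance $d \in \{k, \ldots, n\}$, so Lemma~\ref{lem:auto} does not reduce matters to the uniform weight, and the uniform weight in fact fails to give the tight constant. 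Already the cut determined by a single coordinate crosses a fraction $d/n$ of the distance-$d$ edges, so the bipartite density of the whole graph is a convex combination of the values $d/n$ for $d \geq k$ and strictly exceeds $k/n$ whenever edges of distance $>k$ are present. To witness $s(K_2, Q_{n/k}) \leq k/n$ (resp.\ $(k+1)/(n+1)$) one must concentrate the weight on the distance-$k$ orbit, i.e.\ bound the maximum cut of a \emph{subgraph} of $Q_{n/k}$ --- which is exactly why \v{S}\'{a}mal's argument works with the Laplacian of that subgraph rather than with the bipartite density of $Q_{n/k}$ itself.

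Second, even after the correct subgraph is identified, your "parity obstruction via a charging scheme over short odd structures" is not an argument but a placeholder: you explicitly defer identifying the family and verifying that it covers the edges evenly, and that deferral is precisely where the theorem lives. The paper closes this gap along a completely different route: it invokes \v{S}\'{a}mal's Theorem~5.4.7, which reduces the max-cut bound (via the largest Laplacian eigenvalue of the distance-$k$ subgraph, a Cayley graph on $\{0,1\}^n$ whose eigenvalues are the Krawtchouk-type quantities $N_e(n,k,x) - N_o(n,k,x)$) to the inequality $N_e(n,k,x) \leq \binom{n-1}{k-1}$ for odd $k$ and $N_o(n,k,x) \leq \binom{n-1}{k-1}$ for even $k$, and then proves that inequality by a double induction on $n$ and $x$ (Lemmas~\ref{lem:help} and~\ref{lem:ineq}), with the boundary case $n = 2k-1$ handled through the symmetry $N_o(2k,k,x) = N_e(2k,k,x) = \frac{1}{2}\binom{2k}{k}$ for odd $x$. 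Nothing in your proposal substitutes for this computation, so as it stands the lower bound --- and hence the proposition --- is not proved.
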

\v{S}\'{a}mal provides the upper bound and an approach to the lower bound 
using the largest eigenvalue of the Laplacian of a subgraph of $Q_{n/k}$.
The computation of this eigenvalue
boils down to an inequality (Conjecture~5.4.6~\cite{samal:06})
involving some binomial coefficients.
We first introduce the necessary notation and then prove the remaining
inequality in Lemma~\ref{lem:ineq}, whose second part, for odd $k$,
corresponds to one of the formulations of the conjecture.
Proposition~\ref{prop:samal} then follows from
Theorem~5.4.7~\cite{samal:06} conditioned on the result of this lemma.

  Let $k, n$ be positive integers such that $k \leq n$, and let
  $x$ be an integer such that $1 \leq x \leq n$.
  For $k \leq n < 2k$, let $S_o(n,k,x)$ denote the set of all $k$-subsets of $\{1, \ldots, n\}$ that have an odd number of elements in common with the set $\{n-x+1, \ldots, n\}$.
  Define $S_e(n,k,x)$ analogously as the $k$-subsets with an even number of common elements.
  Let $N_o(n,k,x) = |S_o(n,k,x)|$ and $N_e(n,k,x) = |S_e(n,k,x)|$. Then, 
  \begin{equation}
  N_o(n,k,x) = \sum_{odd \, t} \binom{x}{t} \binom{n-x}{k-t}, \quad
  N_e(n,k,x) = \sum_{even \, t} \binom{x}{t} \binom{n-x}{k-t}.
  \end{equation}

  When $x$ is odd, the function $f : S_o(2k,k,x) \rightarrow S_e(2k,k,x)$
  given by the complement $f(\sigma) = \{1, \ldots, n\} \setminus \sigma$
  is a bijection.
  Since $N_o(n,k,x)+N_e(n,k,x) = \binom{n}{k}$, we have
  \begin{equation} \label{eqn:1}
    N_o(2k,k,x) = N_e(2k,k,x) = \frac{1}{2} \binom{2k}{k}.
  \end{equation}

  \begin{lemma} \label{lem:help}
    Let $1 \leq x < n = 2k-1$ with $x$ odd. Then,
    $N_e(n,k,x) = N_e(n,k,x+1)$ and $N_o(n,k,x) = N_o(n,k,x+1)$.
  \end{lemma}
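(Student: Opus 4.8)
The plan is to exhibit an explicit bijection between $S_e(2k-1,k,x)$ and $S_e(2k-1,k,x+1)$, and separately between $S_o(2k-1,k,x)$ and $S_o(2k-1,k,x+1)$, when $x$ is odd. Write $n = 2k-1$, let $A = \{n-x+1, \ldots, n\}$ be the ``marked'' set of size $x$, and let $A' = A \cup \{n-x\} = \{n-x, \ldots, n\}$ be the marked set of size $x+1$. A $k$-subset $\sigma$ of $\{1,\ldots,n\}$ is counted in $S_e(n,k,x)$ or $S_o(n,k,x)$ according to the parity of $|\sigma \cap A|$, and in $S_e(n,k,x+1)$ or $S_o(n,k,x+1)$ according to the parity of $|\sigma \cap A'|$. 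These two parities agree exactly when $n-x \notin \sigma$ and differ exactly when $n-x \in \sigma$.

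The first step is to handle the subsets avoiding the single new element $n-x$: for these, membership in the $x$-version and the $(x+1)$-version coincide, so they contribute equally to both $N_e$'s (and both $N_o$'s), and the identity map takes care of them. The second step, which is the crux, is to pair up the $k$-subsets containing $n-x$. Here I would use the hypothesis that $x$ is odd together with $n = 2k-1$: a $k$-subset containing $n-x$ has $k-1$ further elements chosen from the $n-1 = 2k-2$ remaining points, which split into the $x$ marked points of $A$ and the $x-1$ unmarked points of $\{1,\ldots,n-x-1\}$ — note $x + (x-1) = 2x-1$ is odd, so this is not a symmetric situation directly, and one must instead exploit complementation. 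The natural move is: for $\sigma$ with $n-x \in \sigma$, send $\sigma$ to $\{1,\ldots,n\}\setminus(\sigma\setminus\{n-x\})$ adjusted back to size $k$, or more cleanly, restrict the complement bijection of (\ref{eqn:1}) appropriately; the key identity to verify is that complementation inside $\{1,\ldots,n-x-1\}\cup A$ flips $|\sigma\cap A|$ by $x$ (odd) hence changes its parity, while the forced element $n-x$ also flips the $A'$-count relative to the $A$-count, so the two flips cancel and parity classes are preserved between the $x$- and $(x+1)$-versions.

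The main obstacle I anticipate is getting the bookkeeping of the two parity flips exactly right so that the map is a genuine bijection of the correct parity class and not its opposite — in particular checking that it is well-defined (lands in $k$-subsets containing $n-x$) and is an involution or has an obvious inverse. Once that single bijection is pinned down, combining it with the identity map on subsets avoiding $n-x$ gives $N_e(n,k,x) = N_e(n,k,x+1)$, and then $N_o(n,k,x) = N_o(n,k,x+1)$ follows immediately since $N_o + N_e = \binom{n}{k}$ is the same in both cases. I would double-check the argument against a small instance, say $k=2$, $n=3$, $x=1$, to make sure the parity accounting is correct before writing it up in full.
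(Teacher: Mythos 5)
Your plan is correct and follows essentially the same route as the paper: split the $k$-subsets according to whether they contain $n-x$, use the identity map on those avoiding it, and handle those containing it by a complementation bijection that flips the parity of the intersection with the size-$x$ marked set precisely because $x$ is odd (the paper phrases this step as deleting $n-x$, renumbering, and invoking the already-established identity $N_o(2(k-1),k-1,x)=N_e(2(k-1),k-1,x)$, which is itself the complementation argument). The only flaw is the harmless side remark that $\{1,\ldots,n-x-1\}$ has $x-1$ elements --- it has $2k-2-x$ --- but nothing in the argument depends on this.
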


  \begin{proof}
    First, partition $S_e(n,k,x)$ into $A_1 = \{ \sigma \in S_e(n,k,x) \,|\, n-x \not\in \sigma \}$ and $A_2 = S_e(n,k,x) \setminus A_1$.
    Similarly, partition $S_e(n,k,x+1)$ into $B_1 = \{ \sigma \in S_e(n,k,x+1) \,|\, n-x \not\in \sigma \}$ and $B_2 = S_e(n,k,x+1) \setminus B_1$.
    Note that $A_1 = B_1$.
    We argue that $|A_2| = |B_2|$.
    To prove this, define the function $f : \mathcal{P}(\{1,\ldots,n\}) \rightarrow \mathcal{P}(\{1,\ldots,n-1\})$ by
    $f(\sigma) = (\sigma \cap \{1, \ldots, n-x-1\}) \cup \{ s-1 \,|\; s \in \sigma, s > n-x \},$
    i.e., $f$ acts on $\sigma$ by ignoring the element $n-x$ and renumbering
    subsequent elements so that the image is a subset of $\{1, \ldots, n-1\}$.
    Note that $f(A_2) = S_e(2k-2,k-1,x)$ and $f(B_2) = S_o(2k-2,k-1,x)$.
    Since $x$ is odd, it follows from (\ref{eqn:1}) that
    $|f(A_2)| = |f(B_2)|$.
    The first part of the lemma now follows from the injectivity of
    the restrictions $f|_{A_2}$ and $f|_{B_2}$.
    The second equality is proved similarly.
  \end{proof}

  \begin{lemma} \label{lem:ineq}
    Choose $k, n$ and $x$ so that $k \leq n < 2k$ and $1 \leq x \leq n$.
    For odd $k$,
    \begin{equation}
    N_e(n,k,x) \leq \binom{n-1}{k-1} \quad \text{and for even $k$,} \quad N_o(n,k,x) \leq \binom{n-1}{k-1}.
    \end{equation}
  \end{lemma}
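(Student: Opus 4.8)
The plan is to prove both inequalities simultaneously by a single inductive argument on $x$ (equivalently, on $n$), using Lemma~\ref{lem:help} to handle the step from odd $x$ to $x+1$ and a Pascal-type recurrence to descend in the other cases. Observe first that the two claims can be unified: writing $N_{\bar k}(n,k,x)$ for $N_e$ when $k$ is odd and for $N_o$ when $k$ is even, the assertion is always $N_{\bar k}(n,k,x) \le \binom{n-1}{k-1}$. The quantity $\binom{n-1}{k-1}$ is exactly $|\{\sigma \in S(n,k,x)\;|\; n \in \sigma\}|$, the number of $k$-subsets containing the last element, so the statement has the flavour ``the `good-parity' $k$-subsets are no more numerous than the $k$-subsets through a fixed point.'' I would try to exhibit an injection realising this, but failing a clean bijective map I would fall back on induction.

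First I would dispose of the boundary cases. For $x = n$ (all of $\{1,\dots,n\}$ is the marked set), $N_o$ and $N_e$ count $k$-subsets by the parity of $k$ itself, so one of them is $\binom{n}{k}$ and the other is $0$; since $k \le n < 2k$ forces $k \le n-1$ only when $n>k$, the case $n=k$ needs a direct check ($\binom{k}{k}=1$, and the good-parity count is $0$ or $1$ according to parities — one verifies it is $\le \binom{k-1}{k-1}=1$). For $x = n-1 = 2k-2$, i.e.\ $n = 2k-1$, Lemma~\ref{lem:help} with the odd value $x-1$ (when $x-1$ is odd, i.e.\ $x$ even) or with $x$ itself lets me reduce to $N_e(2k-1,k,x') = N_e(2k-1,k,x'+1)$, and then the relevant value is pinned down by~(\ref{eqn:1}) applied after the $f$-reduction to $S(2k-2,k-1,\cdot)$; here $\tfrac12\binom{2k-2}{k-1} \le \binom{2k-2}{k-1} = \binom{n-1}{k-1}$ closes it.

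For the inductive step I would use the standard recurrence obtained by conditioning on whether the element $n-x$ (just outside the marked block, present precisely when $n > x$) lies in $\sigma$, together with conditioning on whether $1 \in \sigma$ (just inside, or the smallest element, depending on the configuration) — exactly the bookkeeping already exploited in the proof of Lemma~\ref{lem:help}, where $f(A_2) = S_e(2k-2,k-1,x)$ and $f(B_2) = S_o(2k-2,k-1,x)$. Splitting $S_{\bar k}(n,k,x)$ this way expresses $N_{\bar k}(n,k,x)$ as a sum of two terms of the shape $N_{\bar k}(n-1,k-1,x')$ and $N_{\bar k}(n-1,k,x)$ (the parity being preserved or flipped in a controlled manner), each of which is $\le \binom{n-2}{k-2}$ resp.\ $\le \binom{n-2}{k-1}$ by the induction hypothesis, and Pascal's identity $\binom{n-2}{k-2} + \binom{n-2}{k-1} = \binom{n-1}{k-1}$ finishes the step. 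One must check the hypotheses $k \le n-1 < 2k$ and $1 \le x' \le n-1$ remain valid along the recursion, and that the parity-of-$k$ bookkeeping (odd vs.\ even $k$) is consistent — decrementing $k$ flips which of $N_o, N_e$ is the ``good'' one, which is precisely why the unified formulation $N_{\bar k}$ is convenient.

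The main obstacle I anticipate is making the case analysis in the inductive step genuinely tight: a naive split can produce a bound like $\binom{n-1}{k-1} + (\text{something})$ rather than exactly $\binom{n-1}{k-1}$, because the ``bad-parity'' subsets in one part of the partition need not vanish. The key to avoiding slack is to arrange the split so that one of the two blocks is handled not by the induction hypothesis but exactly — via Lemma~\ref{lem:help} or via~(\ref{eqn:1}), both of which give equalities $N_e = N_o = \tfrac12\binom{\cdot}{\cdot}$ rather than inequalities — and only the remaining block is estimated inductively. Getting the marked-block endpoint $x$ to interact correctly with the element being conditioned on (so that after applying $f$ one lands in a genuine $S_e$ or $S_o$ with the right parameters, as in Lemma~\ref{lem:help}) is the delicate part; once the partition is chosen correctly, Pascal's identity does the rest mechanically.
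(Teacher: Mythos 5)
Your overall skeleton --- a double induction on $n$ and $x$, a Pascal-type recurrence obtained by conditioning on a single element, and an appeal to Lemma~\ref{lem:help} together with (\ref{eqn:1}) to remove slack in the delicate case --- is exactly the approach the paper takes. But two of your concrete choices would make the argument fail as written. First, the conditioning element: you propose to condition on $n-x$, the element just \emph{outside} the marked block. That yields $N_e(n,k,x) = N_e(n-1,k-1,x) + N_e(n-1,k,x)$ (no parity flip, no change in $x$), and in the first term $k-1$ has the opposite parity, so your unified induction hypothesis controls $N_o(n-1,k-1,x)$, not $N_e(n-1,k-1,x)$; all it gives you for the latter is the useless lower bound $\binom{n-1}{k-1}-\binom{n-2}{k-2}$ instead of the upper bound $\binom{n-2}{k-2}$ you need. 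The fix is to condition on an element \emph{inside} the marked block, namely $n$: removing it flips the intersection parity and decrements $x$, giving $N_e(n,k,x) = N_o(n-1,k-1,x-1) + N_e(n-1,k,x-1)$, in which both terms are the ``good-parity'' quantities for $k-1$ (even) and $k$ (odd) respectively, and Pascal's identity closes the step with no slack. (Since this recurrence decrements $x$, you also need $x=1$ as an explicit base case, which you omit; it is the tight case $N_o(n,k,1)=\binom{n-1}{k-1}$, $N_e(n,k,1)=\binom{n-1}{k}$.)

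Second, $n=2k-1$ is not a boundary case occurring only at $x=n-1$: it occurs for every $x$, and it is special because the first branch of the recurrence lands at $N_o(2k-2,k-1,x-1)$ with $2k-2=2(k-1)$, which violates the hypothesis $n'<2k'$ of the lemma, so the induction hypothesis is simply unavailable for that branch. This is precisely where your ``evaluate one block exactly'' idea belongs --- and only here; for $k<n<2k-1$ the step goes through on the induction hypothesis alone. Concretely: use Lemma~\ref{lem:help} to reduce odd $x$ to even $x$; then for even $x$ evaluate $N_o(2k-2,k-1,x-1)=\frac{1}{2}\binom{2k-2}{k-1}$ by (\ref{eqn:1}) (legitimate since $x-1$ is odd), bound $N_e(2k-2,k,x-1)\leq\binom{2k-3}{k-1}$ by the induction hypothesis, and observe that $\frac{1}{2}\binom{2k-2}{k-1}+\binom{2k-3}{k-1}=\binom{2k-2}{k-1}=\binom{n-1}{k-1}$ exactly. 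With these corrections your plan coincides with the paper's proof.
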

  
  \begin{proof}
    We will proceed by induction over $n$ and $x$.
    The base cases are given by $x = 1$, $x = n$, and $n = k$.
    For $x = 1$,
    $N_o(n,k,x) = \binom{n-1}{k-1}$ and $N_e(n,k,x) = \binom{n-1}{k} \leq \binom{n-1}{k-1}$,
    where the inequality holds for all $n < 2k$.
    For $x = n$ and odd $k$, we have $N_e(n,k,x) = 0$, and for
    even $k$, we have $N_o(n,k,x) = 0$.
    For $n = k$,
    $N_e(n,k,x) = 1-N_o(n,k,x) = 1$ if $x$ is even and 0 otherwise.
    Let $1 < x < n$ and consider $N_e(n,k,x)$ for odd $k$ and $k < n < 2k-1$.
    Partition the sets $\sigma \in S_e(n,k,x)$ into those for which 
    $n \in \sigma$ on the one hand and those for which
    $n \not\in \sigma$ on the other hand.
    These parts contain $N_o(n-1,k-1,x-1)$ and $N_e(n-1,k,x-1)$ sets,
    respectively.
    Since $k-1$ is even, and since $k \leq n-1 < 2(k-1)$ when $k < n < 2k-1$, 
    it follows from the induction hypothesis that
      $N_e(n,k,x) =
      N_o(n-1,k-1,x-1) + N_e(n-1,k,x-1) \leq
      \binom{n-2}{k-2} + \binom{n-2}{k-1} = \binom{n-1}{k-1}.$
    The case for $N_o(n,k,x)$ and even $k$ is treated identically.

    Finally, let $n = 2k-1$.
    If $x$ is odd, then Lemma~\ref{lem:help} is applicable, 
    so we can assume that $x$ is even.
    Now, as before
    $N_e(2k-1,k,x) =
    N_o(2k-2,k-1,x-1) + N_e(2k-2,k,x-1) \leq  
    \frac{1}{2} \binom{2k-2}{k-1} + \binom{2k-3}{k-1} = \binom{n-1}{k-1},$
    where the first term is evaluated using (\ref{eqn:1}).
    The same inequality can be shown for $N_o(2k-1,k,x)$
    and even $k$,
    which completes the proof.
  \end{proof}

\section{Conclusions and Open Problems}
\label{sec:open}

We have seen that for all integers $k \geq 2$, $s(K_2,K_t)$ is constant on $I_k$.
It follows that our sandwich approach using Lemma~\ref{lem:sandwich} with $M = K_2$
and $N = K_r$ can not distinguish between the class
of graphs with circular chromatic number $2+1/k$ and the (larger) class with
circular chromatic number $2+2/(2k-1)$.
As previously noted, Jaeger's conjecture and subsequent research
has provided partial information on the members of the former class.
We remark that Jaeger's conjecture implies a weaker statement in our
setting. Namely, if $G$ is a planar graph with girth greater than $4k$,
then $G \rightarrow C_k$ implies $s(K_2, G) \geq s(K_2,C_k) =
2k/(2k+1)$. Deciding this to be true would certainly provide support for
the original conjecture, and would be an interesting result in its
own right.
Our starting observation shows that the slightly weaker condition
$G \rightarrow K_{2+2/(2k-1)}$ implies the same result.

When it comes to completely understanding how $s$ behaves on circular complete graphs, even
restricted to those between $K_2$ and $K_3$, there is still work to be done.
For edge-transitive graphs $K_t$, in our case the cycles and the complete graphs, 
it is not surprising
that the expression $s(K_r, K_t)$ assumes a finite number of values seen as a function of $r$.
Indeed, Lemma~\ref{lem:auto} says that $s(K_r, K_t) = mc_{K_r}(K_t, 1/|E(K_t)|)$ which
leaves at most $|E(K_t)|$ values for $s$.
This produces a number of constant intervals which are partly 
responsible for the constant regions of Corollary~\ref{cor:intervals} and the discussion
following it.
More surprising are the constant intervals that arise from
$s(K_r,K_{2+2/(2k-1)})$.
They give some hope that the behaviour of $s$ is possible to characterise more generally.
One direction could be to identify additional constant regions,
perhaps showing that they completely tile the entire space?


In Section~\ref{sec:cut} we generalised the notion of covering by cuts
due to \v{S}\'{a}mal.
By doing this, we have found a different interpretation of the $s$-numbers
as an entire family of `chromatic numbers'.
It is our belief that these alternate viewpoints can benefit from each other.
The refuted conjecture in Section~\ref{sec:neg} is an immediate example of
this.
On the other hand, it would be interesting to determine
when the generalised upper bound in (\ref{eq:ub}) is tight.
For $H = K_2$, the proof of Proposition~\ref{prop:samal} is precisely such a result
for the graphs $Q_{n/k}$,
which is evident from studying the proof of Theorem 5.4.7~\cite{samal:06}.
Following this, a natural step would be to calculate $\chi_H(H_k^n)$ for
more general graphs $H$, starting with $H = K_3$.

It is fairly obvious that $\MCol{H}$ is a special case of the 
{\em maximum constraint satisfaction} ({\sc Max CSP}) problem;
in this problem, one is given a finite collection of constraints on overlapping
sets of variables, and the goal is to assign values from a given domain to the
variables so as to maximise the number of satisfied constraints.
By letting $\Gamma$ be a finite set of relations, we can
parameterise {\sc Max CSP} with $\Gamma$ ({\sc Max CSP}$(\Gamma)$) so that
the only allowed constraints are those constructed from the relations in $\Gamma$.
By viewing a graph $H$ as a binary relation, the problems {\sc Max CSP}$(\{H\})$
and $\MCol{H}$ are virtually identical. 
Raghavendra~\cite{raghavendra:08} has presented
an algorithm for {\sc Max CSP}$(\Gamma)$ based on semi-definite programming.
Under the so-called {\em unique games conjecture}, this algorithm
optimally approximates {\sc Max CSP}$(\Gamma)$ in polynomial-time, i.e. no
other polynomial-time algorithm can approximate the problem substantially better.
However, it is notoriously difficult to find out exactly how well the
algorithm approximates {\sc Max CSP}$(\Gamma)$ for a given $\Gamma$.
It seems plausible that
the function $s$ can be extended into a function $s'$ from pairs of sets
of relations to ${\mathbb Q}^+$, and that $s'$ can be used for studying
the approximability of {\sc Max CSP} by extending the approach in
F\"arnqvist~et al.~\cite{farnqvist:etal:09}. This would constitute a novel method for
studying the approximability of {\sc Max CSP} --- a method that, hopefully, may
cast some new light on the performance of Raghavendra's algorithm.

\bibliography{wg}

\newpage

\appendix
\begin{center}
  {\bf APPENDIX}
\end{center}


\noindent
Let  $0 < q \leq p$ be positive integers. We often assign names to the vertices, so that $V(K_{p/q}) = \{v_0,v_1,\ldots,v_{p-1}\}$. Then, we have $E(K_{p/q}) = \{v_i v_j \; | \; q \leq |i-j| \leq p-q\}$. Note that $K_{p/q}$ does not have any edges unless $p \geq 2q$, since the circular distance between two vertices is as most $p/2$. 
For a fixed $p$, let $\delta(v_i,v_j) = j - i$ (mod $p$). $\delta(v_i,v_j)$ is then the directed circular distance (in positive direction) between $v_i$ and $v_j$. Furthermore let $\bar{\delta}(v_i,v_j) = \min{\{\delta(v_i,v_j), \delta(v_j,v_i)\}}$. This is then the undirected circular distance. We do index arithmetics for circular complete graphs modulo $p$, e.g.
$v_{-1}=v_{p-1}$. Even though $K_{2k+1/k}$ is isomorphic to $C_{2k+1}$, we distinguish them by letting $v_i v_j$ be an edge in $C_{2k+1}$ if $\bar{\delta}_{2k+1}(v_i,v_j)=1$, while $v_i v_j$ is an edge in $K_{2k+1/k}$ if $\bar{\delta}_{2k+1}(v_i,v_j)=k$.

Let $M$ and $N$ be graphs and
let $F$ be a set of signatures to $(N,\omega)$ of $\MCol{M}$.
If $F' \subseteq F$ is a subset for which the relaxation of (\ref{lp}) 
has the same optimal solution as the original program,
we will call $F'$ a \emph{complete} set of signatures with respect to
$(N,\omega)$ of $\MCol{M}$.

\section{Proofs of Results from Section~\ref{ssec:k2}}

\subsection*{Proposition~\ref{prop:3k+1}}
\begin{proof}
Let $V(K_{\frac{6k+5}{3k+1}}) = \{v_0,v_1,\ldots,v_{6k+4}\}$ and $V(K_2) = \{w_0, w_1\}$. 
Let $f$ be the solution with $f(v_i) = w_0$ if $0 \leq i < 3k+3$ and $f(v_i) = w_1$
if $3k+3 \leq i < 6k+5$.
From $A_1$ only the edges $v_0 v_{3k+1}$, $v_1 v_{3k+2}$ and $v_{3k+3} v_{6k+4}$ are mapped to a single vertex in $K_2$, so $f_1=6k+2$. From $A_2$ only the edge $v_0 v_{3k+2}$ is mapped to a single vertex in $K_2$, so $f_2=6k+4$.
Thus, $f$ has the signature $f = (6k+2, 6k+4)$.


Note that since $6k+5$ and $3k+2$ are relatively prime, the edges of $A_1$, 
as well as $A_2$, form cylces of length $6k+5$.
Therefore, any solution which maps more than $6k+2$ edges from $A_1$ to $K_2$ 
must map exactly $6k+4$.
Let $g$ be such a solution. We will show that $g_2 = 2k+2$.
We may assume that $v_{3k+1} v_0$ is the edge in $A_1$ which is not
mapped to $K_2$ by $g$.
Note that, if $i \neq 3k+1, 6k+2$, then $v_i v_{i+3k+4}$ and $v_{i+3k+4} v_{i+3}$
are both mapped to $K_2$ by $g$ which implies that $g(v_i) = g(v_{i+3})$.
Now, let $v_l v_{l+3(k+1)}$ be an edge in $A_2$ and let
$S = \{l, l+3, \ldots, l+3k\}$.
Then, this edge is mapped to $K_2$ by $g$, i.e., $g(v_l) \neq g(v_{l+3(k+1)})$ 
if and only if
$\{3k+1, 6k+2\} \cap S \neq \emptyset$.
Since $v_{3k+1}$ and $v_{6k+2}$ are adjacent in $A_1$, they can not
both be in $S$.
Therefore, there are $2 \cdot |S| = 2(k+1)$ edges that are mapped
to $K_2$ by $g$, so $g = (6k+4, 2k+2)$.
We conclude that solving (\ref{lp}) with the inequalities obtained from
$f$ and $g$ yields the correct value of $s$.
\end{proof}

\subsection*{Proposition~\ref{prop:4k+1}}
\begin{proof}
Let $V(K_{\frac{8k+6}{4k+1}}) = \{v_0,v_1,\ldots,v_{8k+5}\}$ and $V(K_2) = \{w_0,w_1\}$. 
Define $f$ by $h(v_i) = w_0$ if $0 \leq i < 4k+3$ and $f(v_i) = w_1$ if $4k+3 \leq i < 8k+6$.
Here, the edges $v_0 v_{4k+1},v_1 v_{4k+2},v_{4k+3} v_{8k+4}$ and 
$v_{4k+4} v_{8k+5}$ in $A_1$ are mapped to a single vertex in $K_2$ by $f$
From $A_2$, $f$ maps edges $v_0 v_{4k+2}$ and $v_{4k+3} v_{8k+5}$ to a single 
vertex in $K_2$.
Finally, $f$ maps all edges in $A_3$ to the edge in $K_2$.
The signature of this solution is $f=(8k+2,8k+4,4k+3)$.

Let $g$ be defined by
\begin{multline*}
g(v_0)=g(v_4)=\cdots=g(v_{8k+4})=g(v_2)=\cdots \\
=g(v_{4k-2})=g(v_{8k+3})=g(v_1)=\cdots=g(v_{4k-3})=w_0
\end{multline*}
and 
\begin{multline*}
g(v_{4k+1})=g(v_{4k+5})=\cdots=g(v_{8k+5})=g(v_3)=\cdots \\
=g(v_{8k-1})=g(v_{4k+2})=g(v_{4k+6})=\cdots=g(v_{8k+2})=w_1.
\end{multline*}
From $A_1$ only the edges $v_{4k+1} v_{8k+2}$ and $v_{8k+3} v_{4k-2}$ are mapped to
a single vertex in $K_2$. From $A_2$ we partition the edges which are mapped to
the edge in $K_2$ by $g$ into four sets, with $k+1$ edges in each set. These are
\[
\{ v_0 v_{4k+2},v_4 v_{4k+6},\ldots,v_{4k} v_{8k+2} \},
\]
\[
\{ v_{4k+2} v_{8k+4}, v_{4k+6} v_2,\ldots,v_{8k+2} v_{4k-2} \},
\]
\[
\{ v_{4k+1} v_{8k+3},v_{4k+5} v_1,\ldots,v_{8k+1} v_{4k-3} \},
\]
\[
\{ v_{8k+3} v_{4k-1}, v_1 v_{4k+3},\ldots,v_{4k-3} v_{8k-1} \}.
\]
Finally, for $A_3$, $g$ maps the $k$ edges 
$v_0 v_{4k+3},v_4 v_{4k+7},\ldots,v_{4k-4} v_{8k-1}$ as well as the $k+1$ edges 
$v_{4k+1} v_{8k+4},v_{4k+5} v_2,\ldots,v_{8k+1} v_{4k-2}$ to the edge in $K_2$.
In summary, $g=(8k+4,4k+4,2k+1)$.

The relaxation of (\ref{lp}) corresponding to the two solutions $f$ and $g$ has
the following solution:
\[
s=\frac{8k^2+6k+2}{8k^2+10k+3}, \quad \omega_1=\frac{k}{8k^2+10k+3}, \quad \omega_2=\frac{1}{2(8k^2+10k+3)}, \quad \omega_3=0.
\]
We will now show that $s, \omega_1, \omega_2$ and $\omega_3$ is feasible in
the original program.
We will show that for all solutions $h$, we must have $h_2 \leq 8k+4$.
We will also show that if $h$ is such that $h_1 = 8k+4$, then
$h_2 \leq 4k+4$.
Finally, we will show that if $h_1 = 8k+6$, then $h_2$ must be 0.
In the final case, we note that $\omega_1 \cdot h_1 + \omega_2 \cdot h_2 < s$.

The edges of $A_2$ connects vertices at a distance of $4k+2$.
Since we have a common factor $2$ in $4k+2$ and $8k+6$,
the edges of $A_2$ consists of two odd cycles, each of length $4k+3$.
Since a cut of a cycle must include an even number of edges, 
we can then at most have a solution that maps $8k+4$ edges to $K_2$.

For the second case, 
note that $v_{i+4k+2}=v_{i+(2k+2)(4k+1)}$. 
This means that the shortest path between $v_i$ and $v_{i+4k+2}$ in $A_1$
is of length $2k+2$. The edge $v_i v_{i+4k+2}$ is mapped to $K_2$ 
if and only if at least one edge in each of the paths from $v_i$ to $v_{i+4k+2}$ in
 $A_1$ is not mapped to $K_2$, since they are both of even length.
If a solution $h$ has $h_1=8k+4$, only two edges from $A_1$ are
 not mapped to $K_2$.
Therefore no more than $4k+4$ paths of length $2k+2$ can include 
at least one of these two edges, hence $h_2 \leq 4k+4$.

Finally, if a solution $h$ includes an edge from $A_2$ it means that 
$h(v_i) \neq h(v_{i+4k+2})$ for some $i$.
But since both paths from $v_i$ to $v_{i+4k+2}$ in $A_1$ are of even length,
not all edges from $A_1$ can be mapped to $K_2$. 
So if $h_2 > 0$, then $h_1 < 8k+6$.

\end{proof}

\section{Proof of Proposition~\ref{prop:m1}} 
\label{app:propm1proof}

The proof of Proposition~\ref{prop:m1} follows from Lemma~\ref{lem:solalpha} and~\ref{lem:solbeta} introduced and proved in this section.

\subsection*{Proposition~\ref{prop:m1}}
\begin{proof}
  Let $p = 2(kn-1)+n$.
  From Lemma~\ref{lem:solalpha}, we get a solution $f$, with
  \begin{equation}
  f = (\alpha \cdot |A_1|, |A_2|, \ldots, |A_{\lceil \frac{n+1}{2} \rceil}|),
  \end{equation}
  where $\alpha = 1-1/p$.
  From Lemma~\ref{lem:solbeta}, we get another solution $f'$, with
  \begin{equation}
  f' = (|A_1|, \beta \cdot |A_2|, \ldots, \beta \cdot |A_{\lceil \frac{n+1}{2} \rceil}|),
  \end{equation}
  where $\beta = 1 - 2(2k-1)/p$.
  The last constraint in (\ref{lp}) can be written as
  \begin{equation}
    \label{eq:isol1}
    \sum_{i \neq 1} \omega_k \cdot |A_i| = 1 - \omega_1 \cdot |A_1|.
  \end{equation}
  We now insert (\ref{eq:isol1}) into the inequalities obtained from $f$ and
  $f'$ to get the following relaxation of (\ref{lp}):
  \begin{equation}
    \label{eq:prog}
    \begin{array}{l}
      \omega_1 \cdot |A_1| \cdot (\alpha-1) + 1 \leq s \\
      \omega_1 \cdot |A_1| \cdot (1-\beta) + \beta \leq s. \\
    \end{array}
  \end{equation}
  The solution to this is $\frac{1-\alpha\beta}{2-\alpha-\beta}$,
  which yields the $s$-value in the proposition.

  To show that this is optimal for the original program,
  let us consider the restriction of (\ref{lp}) in which we force
  $\omega_i = 0$ for $i = 3, \ldots, \lceil \frac{n+1}{2} \rceil$.
  Due to the second part of Lemma~\ref{lem:solbeta}, it suffices to
  keep the two inequalities from $f$ and $f'$ in the program.
  The equality constraint can now be written as
  \begin{equation}
    \label{eq:isol2}
    \omega_2 \cdot |A_2| = 1 - \omega_1 \cdot |A_1|.
  \end{equation}
  By inserting (\ref{eq:isol2}) into the two remaining
  inequalities we again obtain (\ref{eq:prog}).
  Thus, the solution to the relaxation gives the right value for $s$.
\end{proof}

\begin{lemma}
\label{lem:solalpha}
Let $k,n,m$ be integers with $k,n \geq 2$ and $1 \leq m \leq \min\{n/2$, $2k+1\}$. Then, there exist a solution $f$ to $(K_{\frac{2kn+n-2m}{kn-m}},\omega)$ of MAX $C_{2k+1}$-COL with signature $(|A_1|-m, |A_2|, \ldots,|A_{\lceil\frac{n+1}{2}\rceil}|)$. 
\end{lemma}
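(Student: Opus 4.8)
The plan is to construct an explicit vertex map $f: V(K_{\frac{2kn+n-2m}{kn-m}}) \rightarrow V(C_{2k+1})$ and count the edges in each orbit that survive. Write $p = 2kn+n-2m$ and $q = kn-m$, so that $K_{p/q}$ has $p$ vertices $v_0, \ldots, v_{p-1}$ and its automorphism orbits $A_1, \ldots, A_{\lceil (n+1)/2 \rceil}$ consist of edges $v_i v_j$ with $j - i \equiv q + c - 1 \pmod p$. Note that $p = q(2k+1) + (n - 2m)$, so an edge of $A_1$ (distance $q$) behaves like a ``near-$(2k+1)$ divisor'' of the cycle on $p$ vertices; this is the structural coincidence that makes the target an odd cycle. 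The natural candidate is the wrap-around map $f(v_i) = w_{i \bmod (2k+1)}$ into the odd cycle $C_{2k+1} = w_0 \cdots w_{2k}$, exactly as in the proofs of Propositions~\ref{prop:4k+4} and~\ref{prop:m1}, where a map of this shape produced signatures of the form $(|A_1| - (\text{small}), |A_2|, \ldots)$.

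First I would verify that under $f$ every orbit $A_c$ with $c \geq 2$ has all of its edges mapped to edges of $C_{2k+1}$: an edge of $A_c$ joins vertices at index-difference $q + c - 1 = kn - m + c - 1$, and modulo $2k+1$ this is $\equiv (c-1-m) \cdot$(something) --- more carefully, since $kn \equiv k n \pmod{2k+1}$ one computes the residue and checks it is never $0$ or, rather, that $d_H$(= $1$ in the cycle, meaning the two images are adjacent or unequal-nonadjacent). Actually the relevant condition is that the two endpoints map to \emph{distinct} vertices $w_a, w_b$ that are \emph{adjacent} in $C_{2k+1}$, i.e.\ $a - b \equiv \pm 1 \pmod{2k+1}$. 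So the real check is: for which orbit-indices $c$ is $q + c - 1 \equiv \pm 1 \pmod{2k+1}$? Since $q = kn-m$, we get $q \equiv kn - m \equiv k n - m \pmod{2k+1}$; using $2k \equiv -1$ this is manageable. The claim to be extracted is that $A_1$ (i.e.\ $c=1$, difference $q$) is the \emph{only} orbit where some edges fail, and there exactly $m$ of them fail. I would make the count of failures in $A_1$ precise by walking around the cycle: consecutive $A_1$-edges $v_i v_{i+q}$ chain together, and because $p \equiv q(2k+1) + (n-2m)$, going around once accumulates a ``defect'' of $n - 2m$ relative to a perfect $(2k+1)$-cycle; each unit of defect that lands on the wrong residue kills one edge, and a careful bookkeeping should give exactly $m$ bad edges (the hypothesis $m \le n/2$ guaranteeing $n - 2m \ge 0$ so the defect has a consistent sign, and $m \le 2k+1$ keeping the bad edges from overlapping/double-counting).

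The main obstacle is precisely this last count --- showing the number of non-surviving $A_1$-edges is \emph{exactly} $m$, not merely at most $m$ or roughly $m$. I expect to handle it by the same device used in the proof of Proposition~\ref{prop:m1} in Appendix~\ref{app:propm1proof} and in Proposition~\ref{prop:3k+1}: fix which specific edges of $A_1$ are ``cut'' (say $v_{j q} v_{(j+1)q}$ for $j$ in a carefully chosen $m$-element set of residues), then observe that this forces $g(v_i) = g(v_{i + q})$ along the surviving chains, propagate that equality around, and check directly that it is consistent (no contradiction, which is where $m \le 2k+1$ enters) and that no further edge of $A_1$ is accidentally killed (where $m \le n/2$ enters, ensuring the ``defect'' slack $n-2m$ is nonnegative so the chosen cut-pattern is realizable). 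Once $f$ is exhibited with signature $(|A_1| - m, |A_2|, \ldots, |A_{\lceil (n+1)/2\rceil}|)$, there is nothing left to prove: the lemma only asserts existence of such a solution, and feasibility of $f$ as a solution to $\MCol{C_{2k+1}}$ on $(K_{p/q}, \omega)$ is immediate once the edge counts are verified. The remaining parts of the excerpt (the matching lower-bound solution $f'$ and the LP argument) are the business of Lemma~\ref{lem:solbeta} and the proof of Proposition~\ref{prop:m1}, not of this lemma.
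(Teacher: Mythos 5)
Your proposed construction does not work, and the verification you defer (``for which orbit-indices $c$ is $q+c-1 \equiv \pm 1 \pmod{2k+1}$?'') is exactly where it breaks. With $p = 2kn+n-2m$ and $q = kn-m$, the orbits $A_2, \ldots, A_{\lceil (n+1)/2\rceil}$ correspond to the consecutive index-differences $q+1, q+2, \ldots, q+\lceil(n+1)/2\rceil-1$; consecutive integers have consecutive residues modulo $2k+1$, so at most two of them can lie in $\{+1,-1\}$. Hence under the wrap-around map $f(v_i) = w_{i \bmod (2k+1)}$ almost every orbit $A_c$ with $c \geq 2$ has essentially \emph{none} of its edges carried to edges of $C_{2k+1}$, whereas the lemma needs \emph{all} of them carried. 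Concretely, take $k=2$, $n=5$, $m=1$, i.e.\ $K_{23/9} \rightarrow C_5$: the orbit $A_2$ consists of the $23$ edges with index-difference $10 \equiv 0 \pmod 5$, so the wrap-around map collapses both endpoints of every $A_2$-edge to the \emph{same} vertex of $C_5$, giving $f_2 = 0$ instead of the required $f_2 = 23$. (The map ``$f(v_i)=w_i$ mod $2k+1$'' from Section~\ref{ssec:odd} that you cite as a template is used there only for the target $K_{4k/(2k-1)}$, which has just two orbits; it does not generalise.) Your closing heuristic --- that a ``defect'' of $n-2m$ accumulated around the $A_1$-cycle yields exactly $m$ bad edges --- is likewise unsubstantiated; these two quantities are not related in the way you suggest, and since the lemma is purely an existence statement, exhibiting the map with the exact signature \emph{is} the entire content.

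The paper's construction is different in kind. It partitions the $p$ vertices into $2k+1$ \emph{consecutive blocks}, of sizes $n$ and $n-1$ when $m \leq k$ (sizes $n-1$ and $n-2$ when $k < m \leq 2k+1$ --- this is where both upper bounds on $m$ enter), and assigns the blocks in circular order to $w_0, w_2, w_4, \ldots, w_{2k}, w_1, w_3, \ldots, w_{2k-1}$. An edge of $K_{p/q}$ joins nearly antipodal vertices, i.e.\ it jumps roughly $k$ blocks, and $k$ steps of ``$+2$'' in the cyclic order of the $w_j$ land on a neighbour of the starting vertex in $C_{2k+1}$. The proof then bounds the directed distances from $v_i$ to the first vertex $v_a$ mapped to $w_{j-1}$ and the last vertex $v_b$ mapped to $w_{j+1}$, showing $\delta(v_i,v_a) \leq kn-m+1$ and $\delta(v_i,v_b) \geq kn-m+n-1$, so every edge with $\delta \in [kn-m+1,\, kn-m+n-1]$ (i.e.\ every edge outside $A_1$) survives; an $A_1$-edge (with $\delta = kn-m$) dies precisely when $v_i$ is the first vertex of its block and the walk to $v_a$ traverses all of the large blocks, which is shown to happen exactly $m$ times. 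To repair your write-up you would need to replace the wrap-around map by this block construction (or an equivalent one) and carry out that distance bookkeeping explicitly.
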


\begin{proof}
Let $V(K_{\frac{2kn+n-2m}{kn-m}}) = \{v_0,\ldots,v_{2kn+n-2m-1}\}$ and $V(C_{2k+1}) = \{w_0$, $\ldots$, $w_{2k}\}$. The construction of $f$ will depend on whether $m \leq k$ or $m > k$.
When $m \leq k$ we define $f$ as follows.

$f^{-1}(w_0) = \{v_0,v_1,\ldots,v_{n-1}\}$,

$f^{-1}(w_2) = \{v_{n},\ldots,v_{2n-1}\}$,

$\vdots$ 

$f^{-1}(w_{2k-2m}) = \{v_{(k-m)n},\ldots,v_{(k-m+1)n-1}\}$,

$f^{-1}(w_{2k-2m+2}) = \{v_{(k-m+1)n},\ldots,v_{(k-m+2)n-2}\}$,

$\vdots$

$f^{-1}(w_{2k}) = \{v_{kn-m+1},\ldots,v_{(k+1)n-m-1}\}$,

$f^{-1}(w_1) = \{v_{(k+1)n-m},\ldots\,v_{(k+2)n-m-1}\}$,

$\vdots$

$f^{-1}(w_{2k-2m-1}) = \{v_{(2k-2m-1)n-m},\ldots,v_{(2k-2m)n-m-1}\}$,

$f^{-1}(w_{2k-2m+1}) = \{v_{(2k-2m)n-m},\ldots,v_{(2k-2m+1)n-m-2}\}$,

$\vdots$

$f^{-1}(w_{2k-1}) =\{v_{2kn-2m+1},\ldots,v_{(2k+1)n-2m-1}\}$.

\noindent
Note, in particular, that
\[
|f^{-1}(w_j)| = \begin{cases}
  n & \text{for $0 \leq j \leq 2(k-m)$, and} \\
  n-1 & \text{for $2(k-m) < j \leq 2k-1$.} \\
\end{cases}
\]
When $m > k$, we define $f$ as follows:

$f^{-1}(w_0) = \{v_0,v_1,\ldots,v_{n-2}\}$, 

$f^{-1}(w_2) = \{v_{n-1},\ldots,v_{2n-3}\}$,

$\vdots$ 

$f^{-1}(w_{4k-2m}) = \{v_{(2k-m)(n-1)},\ldots,v_{(2k-m+1)(n-1)-1}\}$,

$f^{-1}(w_{4k-2m+2}) = \{v_{(2k-m+1)(n-1)},\ldots,v_{(2k-m+2)(n-1)-2}\}$,

$\vdots$

$f^{-1}(w_{2k}) = \{v_{k(n-1)-m+k+1},\ldots,v_{(k+1)(n-1)-m+k-1}\}$,

$f^{-1}(w_1) = \{v_{(k+1)(n-1)-m+k},\ldots\,v_{(k+2)(n-1)-m+k-1}$,

$\vdots$

$f^{-1}(w_{4k-2m+1}) = \{v_{(4k-2m)(n-1)-m+k},\ldots,v_{(4k-2m+1)(n-1)-m+k-1}\}$,

$f^{-1}(w_{4k-2m+3}) = \{v_{(4k-2m+1)(n-1)-m+k},\ldots,v_{(4k-2m+2)(n-1)-m+k-2}\}$,

$\vdots$

$f^{-1}(w_{2k-1}) =\{v_{2k(n-1)-2m+2k+2},\ldots,v_{(2k+1)(n-1)-2m+2k}\}$.

\noindent
In this case,
\[
|f^{-1}(w_j)| = \begin{cases}
  n-1 & \text{for $0 \leq j < 2(2k-m+1)$, and} \\
  n-2 & \text{for $2(2k-m+1) \leq j \leq 2k-1$.} \\
\end{cases}
\]

Now,
consider a vertex $v_i$ with $f(v_i) = w_j$. Take one edge $v_i v_l \in A_2 \cup \cdots \cup A_{\lceil\frac{n+1}{2}\rceil}$. Then,
\begin{equation} \label{eq:deltail}
kn-m+1 \leq \delta(v_i,v_l) \leq 2(kn-m)+n-(kn-m+1)=kn-m+n-1.
\end{equation}
Let $a=\min{\{h \;|\; f(v_h) = w_{j-1} \}}$. That is, $v_a$ is the vertex with lowest index which is mapped to $w_{j-1}$. Furthermore let $b=\max{\{h \;|\; f(v_h) = w_{j+1} \}}$. We then have
\[
f(\{v_a,v_{a+1},\ldots,v_{b-1},v_b\})=\{w_{j-1},w_{j+1}\}.
\]
We now want to show that $l \in \{a, \ldots, b\}$. It will then follow that $f(v_i) f(v_l) \in E(C_{2k+1})$, i.e., all edges outside of $A_1$ are mapped to an edge in $C_{2k+1}$.
To do this, we will show that $\delta(v_i,v_a) \leq \delta(v_i,v_l) \leq \delta(v_i,v_b)$.

First, we bound $\delta(v_i,v_a)$ from above by taking a walk along the vertices between $v_i$ and $v_a$. We need to pass at most $|f^{-1}(w_j)|-1$ vertices to enter the set $f^{-1}(w_{j+2})$. We then continue until $f^{-1}(w_{2k-1})$ or $f^{-1}(w_{2k})$ depending on the parity of $j$. Our walk continues from $f^{-1}(w_{0})$ or $f^{-1}(w_1)$ up until we come to the last vertex in $f^{-1}(w_{j-3})$. Finally we take one last step into $f^{-1}(w_{j-1})$ and reach $v_a$. We have then passed
\[
\delta(v_i,v_a) \leq |f^{-1}(w_j)|-1+|f^{-1}(w_{j+2})|+|f^{-1}(w_{j+4})|+\ldots+|f^{-1}(w_{j-3})|+1
\]
vertices. There are $k$ sets among $f^{-1}(w_{j}),\ldots,f^{-1}(w_{j-3})$. When $m \leq k$ each set has either $n$ or $n-1$ vertices. However, at most $\lceil\frac{2(k-m)+1}{2}\rceil =k-m+1$ of them can contain $n$ vertices. Thus,
\[
\delta(v_i,v_a) \leq k(n-1)+ k-m+1 = kn-m+1.
\]
In the case of $m > k$, each set has either $n-1$ or $n-2$ vertices but at most $\frac{2(2k-m+1)}{2}=2k-m+1$ of them can contain $n-1$ vertices. Thus,
\[
\delta(v_i,v_a) \leq k(n-2) + 2k-m+1 = kn-m+1.
\]
When bounding $\delta(v_i,v_b)$ from below, we take a similar walk, but now we want to determine the fewest possible vertices we will pass. Therefore, we assume that we immediately move into the set $f^{-1}(w_{j+2})$ and will go all the way to the last vertex in $f^{-1}(w_{j+1})$. We have then passed a total of
\[
\delta(v_i,v_b) \geq |f^{-1}(w_{j+2})|+|f^{-1}(w_{j+4})|+\ldots+|f^{-1}(w_{j+1})|
\]
vertices. There are $k+1$ sets among $f^{-1}(w_{j+2}),\ldots,f^{-1}(w_{j+1})$.  When $m \leq k$ at least $\frac{2(k-m)}{2}=k-m$ of the sets has $n$ vertices. Thus,
\[
\delta(v_i,v_b) \geq (k+1)(n-1)+k-m = kn-m+n-1.
\]
In the case of $m > k$, at least $\frac{2(2k-m+1)}{2} = 2k-m+1$ has $n-1$ vertices. Thus,
\[
\delta(v_i,v_b)\geq (k+1)(n-2) + 2k-m+1 = kn-m+n-1.
\]
Combining the lower and upper bounds with (\ref{eq:deltail}), we find that
\[
\delta(v_i,v_a) \leq kn-m+1 \leq \delta(v_i,v_l) \leq kn-m+n-1 \leq \delta(v_i,v_b),
\]
hence $f(v_i) f(v_l) \in E(C_{2k+1})$.
Since $v_i v_l$ was an arbitrary edge in $A_2 \cup \cdots \cup A_{\lceil \frac{n+1}{2} \rceil}$, this implies that $f_j = |A_j|$ for $j > 1$.

It remains to determine $f_1$.
Recall that $A_1 = \{ v_i v_{i + kn-m} \,|\, 0 \leq i < 2(kn-m)+n \}$.
As before, we want to check if $\delta(v_i,v_a) \leq \delta(v_i,v_l) = kn-m < kn-m+n-1 \leq \delta(v_i,v_b)$ to determine if $f(v_i) f(v_l) \in E(C_{2k+1})$.
This means that $f(v_i) f(v_l)$ is a non-edge in $C_{2k+1}$ 
if and only if $\delta(v_i,v_a) = kn-m+1$.
This, in turn, can only happen if the walk from $v_i$ to $v_a$ passes all $|f^{-1}(w_j)|-1$ of the vertices from $f^{-1}(w_j)$ (excluding $v_i$).
Thus, $v_i$ has to be the vertex with the lowest index in $f^{-1}(w_j)$.
In total there are $2k+1$ such vertices, one for each vertex in $C_{2k+1}$.
Furthermore, it must be the case that the walk fully passes the $k-m+1$ sets
$f^{-1}(w_0), f^{-1}(w_2), \ldots, f^{-1}(w_{2k-2m})$
with $n$ vertices in the case when $m \leq k$ and the $2k-m+1$ sets
$f^{-1}(w_0), f^{-1}(w_2), \ldots, f^{-1}(w_{2(2k-m)})$
with $n-1$ vertices when $m > k$.
When $m \leq k$ this happens precisely when $j$ is odd and 
$2(k-m)+3 \leq j \leq 2k-1$, i.e. $m$ times.
When $m > k$ it happens precisely when $j$ is odd and 
$2(k-m+1)+1 \leq j \leq 2k-1$ which is also $m$ times.
In all cases, there will be $m$ edges in $A_1$ which are not mapped to edges in
$E(C_{2k+1})$ so $f_1 = |A_1|-m$ which concludes the proof.
\end{proof}

Let $p$ and $q$ be relatively prime and let $V(K_{p/q}) = \{v_0, \ldots, v_{p-1}\}$. Define a function $\tau : [p] \rightarrow [p]$ by letting $\tau(i) = j$ if $0 \leq j < p$ and $jq \equiv i $ (mod $p$). Note that $\tau$ is a bijection on $[p]$. We will think of $\tau$ as indicating the length of a path (in the positive direction) from $v_0$ to $v_j$ in the cycle $A_1$.
We will denote the length from $v_k$ to $v_l$ in $A_1$ by $\delta_{\tau}(v_k,v_l) = \tau(l)-\tau(k)$ taken modulo $p$. 
Note that $\delta_{\tau}(v_i,v_{i+a}) = \delta_{\tau}(v_0,v_a)$ for all integers $i$.
Closed and half-open intervals are defined by $[v_a,v_b]_{\tau} := \{v_l \;|\; \delta_{\tau}(v_a,v_l) \leq \delta_{\tau}(v_a,v_b)\}$ and $(v_a,v_b]_{\tau} := \{v_l \;|\; 0 < \delta_{\tau}(v_a,v_l) \leq \delta_{\tau}(v_a,v_b)\}$, respectively.

Let $V(C_{2k+1}) = \{w_0, \ldots, w_{2k}\}$.
Given a subset $S \subseteq \{v_0, \ldots, v_{p-1}\}$, we will now describe a general construction
of a solution $f = f_S$ to an instance $(K_{p/q}, \omega)$ of {\sc Max $C_{2k+1}$-COL}.
The idea is to map the nodes $v_{\tau(i)}$ in order of increasing $i$ starting by $f(v_{\tau(0)}) = f(v_0) = w_0$. We then map $v_{\tau(i)}$ to a node adjacent to $f(v_{\tau(i-1)})$, picking one of the two possibilities depending on whether $i+1 \in S$ or not.
To give the formal definition, it will be convenient to introduce the rotation
$\rho$ on $C_{2k+1}$ defined as $\sigma(w_i) = w_{i+1}$.
We then have,
\[
f(v_{\tau(i)}) = \begin{cases}
  w_0 & \text{when $i = 0$,} \\
  \rho^{-1}(f(v_{\tau(i-1)})) & \text{when $i > 0$ and $v_i \in S$,} \\
  \rho(f(v_{\tau(i-1)})) & \text{when $i > 0$ and $v_i \not\in S$.} \\
\end{cases}
\]
Note that the last vertex to be mapped is $v_{\tau^{-1}(p-1)} = v_{p-q}$.
If the created solution has $f(v_{p-q}) = w_1$ or $w_{2k}$, then
$f_1 = |A_1|$, otherwise $f_1 = |A_1| - 1$.
In the latter case, it does not matter whether $v_0 \in S$ or not and we
can assume that $v_0 \not\in S$.
However, to maintain consistency in the case of $f_1 =|A_1|$,
we want to have $v_0 \in S$ if $f(v_{p-q}) = w_1$ and
$v_0 \not\in S$ otherwise.
Therefore, $v_0 \in S$ if and only if $f(v_{p-q}) = w_1$.

\begin{exmp}
The solution $f : V(K_{22/9}) \rightarrow V(C_5)$ with $S = \{v_{14}$, $v_1$, $v_{11}$, $v_{20}$, $v_7$, $v_{17}$, $v_4$, $v_{13}\}$ looks as follows.
\[
\begin{array}{c c c c c}
{\bf f^{-1}(w_0)} & {\bf f^{-1}(w_1)} & {\bf f^{-1}(w_2)} & {\bf f^{-1}(w_3)} & {\bf f^{-1}(w_4)} \\ \hline
v_0 & v_9 & v_{18} & v_5 & \\
& v_{1} & v_{14} &  \\
& & v_{10} & v_{19} & v_6 \\
v_{15} & v_{2} & & & \\
v_{11} & & & & \\
& & & v_7 & v_{20} \\
& & & & v_{16} \\
v_3 & v_{12} & v_{21} & v_8 & \\
v_{13} & v_4 & v_{17} & &
\end{array}
\]
Note that the $v_i$ are mapped in the order $v_0, v_9, v_{18}, v_5, \ldots, v_{13}$.
$S$ is given in the order in which the vertices appear along the $A_1$.
To start, we let $f(v_0) = w_0$. Neither of $v_9, v_{18}$ or $v_5$ appear in
$S$, so these are mapped consecutively.
Then, we get to $v_{14}$ which is in $S$.
Since $f(v_5) = w_3$ we let $f(v_{14}) = w_2$. 
Finally, $f(v_{13})=w_0$ so the signature of $f$ has $f_1 = |A_1|-1 = 21$.   
\end{exmp}

We will now give some basic properties of the solutions created using this
construction for the case when $p = 2(kn-m)+n$ and $q = kn-m$.
We will from now on assume that $f_1 = |A_1|$. 
This occurs when the construction has an equal number of applications of
$\rho$ and $\rho^{-1}$ modulo $2k+1$. That is, when $|S| \equiv p-|S| $ (mod $2k+1$).
Solving for $|S|$ we get:
\begin{equation} \label{eq:fulla1}
  |S| \equiv 2k+1-m \text{ (mod $2k+1$)}.
\end{equation}
Assume that $f(v_i) = w_j, f(v_{i'}) = w_{j'}$.
Then, the index $j'$ is determined by $\delta_{\tau}(v_i,v_{i'})$ and
$S \cap (v_i, v_{i'}]_{\tau}$ as follows:
\begin{equation} \label{eq:index}
  j' \equiv j+\delta_{\tau}(v_i,v_{i'}) - 2 \cdot |S \cap (v_i, v_{i'}]_{\tau}| \text{ (mod $2k+1$)}.
\end{equation}
  


\noindent
Relation (\ref{eq:index}) implies the following useful lemma:

\begin{lemma} \label{lem:usefulcong}
  $f(v_i) f(v_{i'}) \in E(C_{2k+1})$ iff
  $|S \cap (v_i, v_{i'}]_{\tau}| \equiv (k+1)(\delta_{\tau}(v_i,v_{i'}) \pm 1) $ (mod $2k+1$).
\end{lemma}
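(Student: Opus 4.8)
The plan is to read the statement directly off the index formula \eqref{eq:index} together with the structure of the odd cycle. First I would recall that in $C_{2k+1}$, with vertex set $\{w_0,\ldots,w_{2k}\}$, two vertices $w_j$ and $w_{j'}$ are adjacent exactly when their indices differ by $1$ in either direction modulo $2k+1$; that is, $w_j w_{j'} \in E(C_{2k+1})$ if and only if $j' - j \equiv \pm 1 \pmod{2k+1}$. This is the only fact about $C_{2k+1}$ that I need.

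Next, write $f(v_i) = w_j$ and $f(v_{i'}) = w_{j'}$ and substitute \eqref{eq:index}, which gives $j' - j \equiv \delta_\tau(v_i,v_{i'}) - 2\,|S \cap (v_i,v_{i'}]_\tau| \pmod{2k+1}$. Combining this with the adjacency criterion above, $f(v_i) f(v_{i'})$ is an edge of $C_{2k+1}$ if and only if $\delta_\tau(v_i,v_{i'}) - 2\,|S \cap (v_i,v_{i'}]_\tau| \equiv \pm 1 \pmod{2k+1}$. Rearranging, and noting that as the sign ranges over $\{+1,-1\}$ so does its negation, this is equivalent to $2\,|S \cap (v_i,v_{i'}]_\tau| \equiv \delta_\tau(v_i,v_{i'}) \pm 1 \pmod{2k+1}$.

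Finally, I would clear the factor $2$. Since $2k+1$ is odd, $2$ is a unit modulo $2k+1$, and its inverse is $k+1$ because $2(k+1) = 2k+2 \equiv 1 \pmod{2k+1}$. Multiplying the previous congruence by $k+1$ — and using that multiplication by a unit is a reversible operation on congruences — yields exactly $|S \cap (v_i,v_{i'}]_\tau| \equiv (k+1)\bigl(\delta_\tau(v_i,v_{i'}) \pm 1\bigr) \pmod{2k+1}$, which is the claim.

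There is essentially no hard step. The only points that require a little care are that adjacency in $C_{2k+1}$ is symmetric, so the orientation of $j' - j$ is irrelevant and the two sign choices in \eqref{eq:index} genuinely collapse into the single $\pm$ of the statement; and that the passage from $2\,|S \cap (v_i,v_{i'}]_\tau| \equiv \cdots$ to $|S \cap (v_i,v_{i'}]_\tau| \equiv (k+1)(\cdots)$ is an equivalence, which holds precisely because $\gcd(2,2k+1)=1$. In short, the lemma is a direct reformulation of \eqref{eq:index} once one divides through by $2$ using $2^{-1} \equiv k+1$.
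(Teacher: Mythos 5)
Your proof is correct and is exactly the argument the paper intends: the lemma is stated there as an immediate consequence of relation (\ref{eq:index}), and your derivation (adjacency in $C_{2k+1}$ means index difference $\pm 1$, then invert $2$ modulo the odd number $2k+1$ via $2^{-1}\equiv k+1$) is the computation being left implicit. Nothing is missing.
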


\begin{lemma}
\label{lem:solbeta}
Let $k,n \geq 2$ be integers.
There exists a solution $f$ to $(K_{\frac{2kn+n-2}{kn-1}},\omega)$ of {\sc Max $C_{2k+1}$-COL} with $f_1 = |A_1|$, and
\[
f = \begin{cases}
(|A_1|,|A_2|-2(2k-1),\ldots,|A_{\frac{n+1}{2}}|-2(2k-1)) & \text{if $n$ is odd,} \\
(|A_1|,|A_2|-2(2k-1),\ldots,|A_{\frac{n}{2}}|-2(2k-1),|A_{\frac{n+2}{2}}|-(2k-1)) & \text{if $n$ is even.}
\end{cases}
\]
Furthermore, for any other solution $g$, if $g_1 = |A_1|$, then $g_2 \leq f_2$,
componentwise.
\end{lemma}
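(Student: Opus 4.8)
The plan is to apply the general construction described just above the lemma (mapping the vertices of $K_{p/q}$ along the cycle $A_1$, with a "flip" set $S$) with $p = 2(kn-1)+n$ and $q = kn-1$, and to choose $S$ cleverly so that the resulting solution $f = f_S$ has the claimed signature. First I would record the basic bookkeeping: with these values of $p$ and $q$ we have $m=1$ in the notation of the preceding discussion, so the condition (\ref{eq:fulla1}) for $f_1 = |A_1|$ becomes $|S| \equiv 2k$ (mod $2k+1$); I would pick $|S|$ to be exactly $2k$. The key tool is Lemma~\ref{lem:usefulcong}: an edge $v_i v_{i'} \in A_c$ (so with $\delta_\tau(v_i,v_{i'}) = \delta_\tau(v_0, v_{cq})$ for the appropriate shift) is mapped into $E(C_{2k+1})$ iff $|S \cap (v_i, v_{i'}]_\tau| \equiv (k+1)(\delta_\tau(v_i,v_{i'}) \pm 1)$ (mod $2k+1$). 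So the whole lemma reduces to a counting statement: for each orbit $A_c$ with $c \geq 2$, how many of its edges fail this congruence — the answer should be $2(2k-1)$ for $c$ in the "generic" range and $(2k-1)$ for the single middle orbit $A_{(n+2)/2}$ when $n$ is even (the factor-of-two discrepancy coming from that orbit being a set of $2$-element "short" cycles / self-paired under the antipodal map, exactly as in Proposition~\ref{prop:4k+4} and Proposition~\ref{prop:4k+1}).

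Concretely, I would take $S$ to be an arithmetic-progression-like set along $A_1$ — essentially $2k$ vertices spaced as evenly as possible around the cycle $A_1$ of length $p$ — chosen so that along any sub-interval $(v_i,v_{i'}]_\tau$ of a fixed length $\ell = \delta_\tau(v_0, v_{cq})$, the count $|S \cap (v_i,v_{i'}]_\tau|$ takes only two consecutive values, say $t$ and $t+1$, as $i$ ranges over all starting points. Then for each orbit $A_c$, exactly one of $t$, $t+1$ can match the target residue $(k+1)(\ell \pm 1)$ (mod $2k+1$) for each choice of sign, and I would count the number of starting points achieving the non-matching value. The evenness of the spacing (with $\gcd$-type control coming from the fact that $p = 2kn + n - 2$ and $q = kn-1$ are coprime) should force the number of "bad" starting positions to be a small fixed constant independent of $c$ across the generic range; pinning this constant down to exactly $2(2k-1)$ is the crux of the argument, and I would verify it by relating $\ell \pmod{2k+1}$ to $\ell \pmod{p}$ through the map $\tau$ and using that $\tau$ sends $A_c$ to a translate of $\{0, \ell\}$. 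The middle-orbit correction for even $n$ is handled separately because there the edge set forms $2$-cycles rather than a single $p$-cycle, halving the count exactly as in the even-$n$ cases already treated.

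For the "furthermore" clause — that this $f$ is extremal among all solutions $g$ with $g_1 = |A_1|$, i.e.\ $g_2 \le f_2$ — I would argue structurally: if $g_1 = |A_1|$, then $g$ arises (up to the rotation/reflection symmetries of $C_{2k+1}$) from the same construction for \emph{some} flip set $S'$ with $|S'| \equiv 2k$ (mod $2k+1$), by reading off the flips of $g$ along $A_1$. Then $g_2 = |A_2| - (\text{number of bad starting points for } A_2 \text{ w.r.t. } S')$, and I need a lower bound on this bad count that is met by my chosen $S$. Here the point is a discrepancy/isoperimetric fact: for \emph{any} set $S'$ of the prescribed size, the number of length-$\ell_2$ windows whose $S'$-count hits the "wrong" residue class mod $2k+1$ is at least $2(2k-1)$ — intuitively because the partial-count function must change by at least $2(2k-1)$ in total to return to its starting value around the cycle, and each unit change past a threshold creates a bad window. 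I would make this precise by a telescoping/averaging argument over the $p$ cyclic windows of length $\ell_2$.

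I expect the main obstacle to be exactly this last extremality step, and the precise determination of the constant $2(2k-1)$: the counting for the specific $S$ is a finite but delicate residue computation (one must handle the two signs $\pm 1$ in Lemma~\ref{lem:usefulcong} and the interaction between arithmetic mod $p$ and mod $2k+1$ carefully), and proving the matching lower bound for arbitrary $S'$ requires the right combinatorial lemma rather than a direct construction. I would isolate that lower bound as a small standalone sublemma about cyclic $\pm 1$-sequences before assembling the proof.
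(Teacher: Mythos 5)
Your framework is the right one --- the $f_S$ construction, the reduction via Lemma~\ref{lem:usefulcong} to counting windows $(v_i,v_{i'}]_\tau$ whose intersection with $S$ has size $\equiv 0$ or $2k \pmod{2k+1}$, and the observation that the middle orbit for even $n$ is halved --- but your concrete choice of $S$ is wrong, and with it the construction fails to achieve the claimed signature. You propose taking $S$ to be $2k$ vertices spread as evenly as possible around $A_1$, so that every window of length $\ell_c$ meets $S$ in one of two consecutive values $t,t+1$. Since $0\le|S\cap W|\le 2k$, the congruence condition forces the count to be \emph{exactly} $0$ or exactly $2k$, i.e.\ the window must either avoid $S$ entirely or swallow all of it. For an equidistributed $S$ and, say, $c=2$ (where $\ell_2=p-2k$), a window is good only if its length-$2k$ complement misses $S$ or contains $S$; with gaps of size roughly $p/2k$ between elements of $S$, on the order of $4k^2$ complements meet $S$ without containing it, so you get roughly $4k^2$ bad edges instead of $4k-2$, and for intermediate orbits the generic value $t$ need not be $0$ or $2k$ at all, making \emph{every} edge of that orbit bad. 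The correct choice is the opposite extreme: $S$ a \emph{contiguous} block of $2k$ consecutive vertices along $A_1$ (the paper takes $S=[v_{\tau^{-1}(p-2k+1)},v_0]_\tau$). Then ``count $=0$ or $2k$'' becomes ``$v_i,v_{i'}\notin S\setminus\{v_0\}$'', the bad edges of $A_c$ are exactly those with one endpoint in the $(2k-1)$-element set $S\setminus\{v_0\}$, and the count $2(2k-1)$ (resp.\ $2k-1$ for the antipodal orbit) falls out immediately.

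Your extremality step is also only a sketch. The telescoping idea (the window-count function changes by $\pm1$ and must traverse all $2k-1$ bad intermediate values between $0$ and $2k$ twice) is plausible for $|S'|=2k$ but needs care in the degenerate cases where the count never attains both $0$ and $2k$, and needs a separate argument for the other admissible sizes $|S'|\equiv 2k\pmod{2k+1}$, where the good residues form several blocks. The paper instead argues directly: assuming some window has empty intersection with $S'$, it names the first element $v_{j_1}$ of $S'$ ahead of the window and the first element $v_{j_2}$ behind it, exhibits $2k-1$ bad edges of $A_2$ through each, and uses $|S'|\ge 2k$ together with $p\equiv 2k-1\pmod{2k+1}$ to see that these $2(2k-1)$ edges are distinct. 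You would need to supply something equivalent before the ``furthermore'' clause is proved.
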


\begin{proof}
Let $p = 2(kn-1)+n, q = kn-1$, and $V(K_{p/q}) = \{v_0,\ldots,v_{p-1}\}$.
The desired solution $f$ is obtained from the construction $f = f(S)$
with $S = [v_{\tau^{-1}(p-2k+1)},v_0]_{\tau}$.
As required by (\ref{eq:fulla1}), we have $|S| = p-(p-2k+1)+1 = 2k$ so that
$f_1 = |A_1|$.
It remains to determine $f_c$ for $c > 1$.

Let $v_i v_{i'} \in A_c, c > 1$ be an edge.
In order to count the edges only once, we will assume that $i' = i+q+(c-1) $ (mod $p$). 
To be able to use the condition in Lemma~\ref{lem:usefulcong} we need to
determine $\delta_{\tau}(v_i, v_{i'})$.
But, $\tau(i') \equiv \tau(i)+1+q^{-1}(c-1) $ (mod $p$), where
$q^{-1} := p-2k-1$, the inverse of $q$ modulo $p$.
We then obtain $\delta_{\tau}(v_i, v_{i'}) = \tau(i')-\tau(i)$ by reducing $1+(p-2k-1)(c-1)$ modulo $p$.
\[
\delta_{\tau}(v_i, v_{i'}) = \begin{cases}
  1 & \text{if $c = 1$, and} \\
  -1+(2k+1)(n-c+1) & \text{otherwise.} \\
\end{cases}
\]
Assuming $c \neq 1$, we have two cases in Lemma~\ref{lem:usefulcong}.
We conclude that 
$f(v_i) f(v_{i'}) \in E(C_{2k+1})$ if and only if
either 
\begin{equation}
\label{eq:recycle}
|S \cap (v_{i},v_{i'}]_{\tau}| \equiv 0 \qquad \text{or} \qquad
|S \cap (v_{i},v_{i'}]_{\tau}| \equiv (k+1)(-2) \equiv 2k \text{ (mod $2k+1$)}.
\end{equation}
In both cases the condition is equivalent to $v_i, v_{i'} \not\in S \setminus \{v_0\}$.
Therefore, the edges $v_i v_i'$ which are not mapped to an edge in $C_{2k+1}$ by $f$ are the ones with an endpoint in $S \setminus \{v_0\}$. (There are no edges with both endpoints in this set.) 
When $n$ is even and $c = n/2+1$, this number equals $|S \setminus \{v_0\}| = 2k-1$.
In all other cases, there are $2(2k-1)$ such edges.
The first part of the lemma follows.

For the second part, we pick an arbitrary solution $g$ and show that we can
find at least $2(2k-1)$ edges in $A_2$ which can not be mapped to $C_{2k+1}$,
provided that $g_1 = |A_1|$.
It is easy to see that, up to rotational symmetry, a $g$ with $g_1 = |A_1|$ 
must be constructible by $g = g(S)$ for some $S$. 
We already know that such an $S$ must satisfy $|S| \equiv 2k$ (mod $2k+1$).
This implies $|S| \geq 2k$.
From $p \equiv 2k-1$ (mod $2k+1$), we also see that we must have
|$V(K_{p/q}) \setminus S| \geq 2k$.
As argued before, an edge from $A_c$ is mapped to $C_{2k+1}$ if and only if one of the
congruences in (\ref{eq:recycle}) holds.
Since $|S| \equiv 2k$ (mod $2k+1$), we can equivalently write this as
$f(v_i) f(v_{i'}) \in E(C_{2k+1})$ if and only if
either 
\begin{equation}
\label{eq:recycle2}
|S \cap (v_{i'},v_{i}]_{\tau}| \equiv 2k \qquad \text{or} \qquad
|S \cap (v_{i'},v_{i}]_{\tau}| \equiv 0 \text{ (mod $2k+1$)}.
\end{equation}
Hence, either the intersection of $S$ with $(v_{i'},v_{i}]_{\tau}$ is empty or
the latter is a subset of the former.
As the two cases can be treated identically, we assume, 
without loss of generality, that the intersection is empty.
Note that $|(v_{i'},v_{i}]_{\tau}| = 2k$
We will now determine $2(2k-1)$ edges which can not be mapped to edges
in $C_{2k+1}$.
Let $v_{j_{1}}$ be the first vertex in $S$ encountered following $A_1$ from $v_i$ in
the positive direction.
Similarly, let $v_{j_{2}}$ be the first vertex in $S$ encountered following $A_1$ 
from $v_{i'}$ in the negative direction.
Then, $v_{j_{1}}, v_{j_{1}-q} \in (v_{j_{1}+(a+1)q+1}, v_{j_{1}+aq}]_{\tau}$,
for $a = 0, \ldots, 2k-2$,
but $v_{j_{1}} \in S$ and $v_{j_{1}-q} \not\in S$ by construction.
Thus, from (\ref{eq:recycle2}), the edges $v_{j_{1}+aq} v_{j_{1} +(a+1)q+1}$ 
can not be mapped to $C_{2k+1}$.
In the other direction, we have $v_{j_{2}}, v_{j_{2}+q} \in (v_{j_{2}-a'q}, v_{j_{2}+(1-a')q+1}]_{\tau}$,
for $a' = 0, \ldots, 2k-2$,
but $v_{j_{2}} \in S$ and $v_{j_{2}+q} \not\in S$ by construction.
From this we get another $2k-1$ edges which can not be mapped to $C_{2k+1}$.
Finally, we note that since $S \subseteq [v_{j_1}, v_{j_2}]_{\tau}$ and
$|S| \geq 2k$, the edges
$v_{j_{1}+aq} v_{j_{1} +(a+1)q+1}$ and
$v_{j_{2}+(1-a')q+1} v_{j_{2}-a'q}$ are distinct.
This proves that $g_2 \leq f_2$.
\end{proof}

\begin{exmp}
With $k=3$ and $n=5$ the solution $f = f(S)$ to$ (K_{33/14},\omega)$ of MAX $C_7$-COL created as in Lemma~\ref{lem:solbeta} with $S = \{v_{29}$, $v_{10}$, $v_{24}$, $v_{5}$, $v_{19}$, $v_{0}\}$ looks like:
\[
\begin{array}{c c c c c c c}
{\bf f^{-1}(w_0)} & {\bf f^{-1}(w_1)} & {\bf f^{-1}(w_2)} & {\bf f^{-1}(w_3)} & {\bf f^{-1}(w_4)} & {\bf f^{-1}(w_5)} & {\bf f^{-1}(w_6)} \\ \hline
v_0 & v_{14} & v_{28} & v_9 & v_{23} & v_4 & v_{18} \\
v_{32} & v_{13} & v_{27} & v_8 & v_{22} & v_3 & v_{17} \\
v_{31} & v_{12} & v_{26} & v_7 & v_{21} & v_2 & v_{16} \\
v_{30} & v_{11} & v_{25} & v_6 & v_{20} & v_1 & v_{15} \\
& v_{19} & v_5 & v_{24} & v_{10} & v_{29} &
\end{array}
\]
\end{exmp}

\section{Proof of Proposition~\ref{th:qisfunny}}

The proof of Proposition~\ref{th:qisfunny} follows from a series of lemmas.
The function $\delta_{\tau}$ and how it is used for constructing solutions is
presented in Appendix~\ref{app:propm1proof}.

\begin{lemma}
\label{lem:splitend}
Let $k \geq 2$ be an integer, and $n \geq 3$ be an odd integer. Then, there exists a solution $f$ to $(K_{\frac{2kn+n-4}{kn-2}},\omega)$ of MAX $C_{2k+1}$-COL with the following signature:
\begin{equation}
\begin{array}{llll}
  f_1 & = & |A_1|,\\
  f_{2i} & = & |A_{2i}|-(\frac{n-1}{2}-i)(2k+1)-(4k-2) & \text{for $i = 1, 2, \ldots, \frac{n+1}{4}$}, \\
  f_{2i+1} & = & |A_{2i+1}|-(i-1)(2k+1)-(4k-2) & \text{for $i = 1,2,\ldots,\frac{n-1}{4}$}.
\end{array}
\end{equation}
\end{lemma}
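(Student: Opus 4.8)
The plan is to follow the proof of Lemma~\ref{lem:solbeta}: exhibit the required solution as $f = f(S)$ via the construction of Appendix~\ref{app:propm1proof}, now for $p = 2(kn-2)+n$ and $q = kn-2$ (the $m=2$ member of the family in Lemma~\ref{lem:solalpha}), and read off its signature orbit by orbit. Since the signature asks for $f_1 = |A_1|$, constraint (\ref{eq:fulla1}) forces $|S| \equiv 2k+1-2 = 2k-1 \pmod{2k+1}$, so I would take $S$ to be exactly $2k-1$ consecutive vertices of the cycle $A_1$ in the $\tau$-order, terminating at $v_0$ (whose membership is irrelevant to $f_1$, exactly as in Lemma~\ref{lem:solbeta}). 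This is the direct analogue of the $2k$-vertex block used there; I expect the ``split end'' of the name to refer to how this block, together with the resulting defect set, decomposes.

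To read off the remaining entries of the signature through Lemma~\ref{lem:usefulcong}, I first need $\delta_\tau$ in closed form. Following Appendix~\ref{app:propm1proof}, I would compute $q^{-1} \bmod p$ from the identity $n(2k+1) = p+4$: this gives $q(2k+1) = k(p+4) - 2(2k+1) \equiv -2 \pmod p$, hence $q^{-1} \equiv (p-2k-1)/2 \pmod p$ (note $p$ is odd, so the right-hand side is an integer). Reducing $1 + (c-1)q^{-1}$ modulo $p$ then gives, for an edge $v_iv_{i'} \in A_c$ with $i' \equiv i + q + (c-1) \pmod p$, that $\delta_\tau(v_i,v_{i'}) = p + 1 - i(2k+1)$ when $c = 2i+1$ is odd and $\delta_\tau(v_i,v_{i'}) = 1 + (p-(2i-1)(2k+1))/2$ when $c = 2i$ is even. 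Feeding $\delta_\tau \bmod (2k+1)$ into Lemma~\ref{lem:usefulcong} shows that the admissible residues of $|S \cap (v_i,v_{i'}]_\tau|$ for ``$v_iv_{i'}$ mapped into $C_{2k+1}$'' are $\{-1,-2\}$ for odd $c$ and $\{0,-1\}$ for even $c$.

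The point that keeps the count manageable is that $|S| = 2k-1 < 2k+1$: since $|S \cap (v_i,v_{i'}]_\tau|$ then always lies in $\{0,\ldots,2k-1\}$, the residue conditions collapse to purely set-theoretic statements. For odd $c$ the edge $v_iv_{i'}$ is \emph{not} mapped into $C_{2k+1}$ exactly when $(v_i,v_{i'}]_\tau$ fails to contain all of $S$, i.e.\ when the complementary arc (of length $p-\delta_\tau = i(2k+1)-1$) meets $S$; for even $c$ it is not mapped in exactly when $(v_i,v_{i'}]_\tau$ itself (length $\delta_\tau \approx p/2$) meets $S$. A fixed arc of length $L$ on the $p$-cycle $A_1$ meets the $(2k-1)$-element block $S$ for exactly $L + (2k-1) - 1$ of the $p$ choices of $v_i$; substituting $L = i(2k+1)-1$ in the odd case and $L = \delta_\tau$ in the even case and simplifying with $p = n(2k+1)-4$ yields defect counts $(i-1)(2k+1) + (4k-2)$ and $(\tfrac{n-1}{2}-i)(2k+1)+(4k-2)$, which are precisely the quantities subtracted from $|A_{2i+1}|$ and $|A_{2i}|$ in the statement. (Contrast Lemma~\ref{lem:solbeta}, where $|S| = 2k \equiv -1$ was an \emph{admissible} residue, so the plateau cancelled the growth of the complementary arc and the defect count was the constant $4k-2$; here $|S| \equiv -2$ is a defect residue, which is what produces the $(2k+1)$-multiples.)

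I expect the delicate points to be: fixing the exact position of $S$ and handling $v_0$ (one must check, as in Lemma~\ref{lem:solbeta}, that the convention ``$v_0 \in S$ iff $f(v_{p-q})=w_1$'' does not perturb the count); the precise endpoint bookkeeping in the arc-intersection count that pins down the additive constant $4k-2$; and the parity/divisibility hypotheses ($n$ odd, and the implicit integrality of $\tfrac{n+1}{4}$ and $\tfrac{n-1}{4}$) that make the $\lceil (n+1)/2 \rceil$ orbits of $\Aut(K_{p/q})$ split into the two indexed families. No converse is needed here --- unlike Lemma~\ref{lem:solbeta}, the lemma only asserts existence --- so the argument is complete once the signature of $f(S)$ has been verified; the solution then feeds, alongside the $m=2$ instance of Lemma~\ref{lem:solalpha}, into the relaxation of (\ref{lp}) behind Proposition~\ref{th:qisfunny}.
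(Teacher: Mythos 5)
Your proposal is correct and follows essentially the same route as the paper's proof: the same construction $f=f(S)$ with $S$ a block of $2k-1$ consecutive vertices in the $\tau$-order ending at $v_0$ (forced by (\ref{eq:fulla1}) with $m=2$), the same computation of $q^{-1}$ and of $\delta_\tau$ on each orbit, and the same reduction via Lemma~\ref{lem:usefulcong} to the conditions ``$S$ misses the arc'' (even orbits) respectively ``$S$ is contained in the arc'' (odd orbits), since $|S|=2k-1$ rules out all but one admissible residue. Your arc-versus-block intersection count is just a repackaging of the paper's inequality on $\delta_\tau(v_0,v_j)$, and your defect formulas agree with the stated signature.
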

\begin{proof}
Let  $G = K_{\frac{2kn+n-4}{kn-2}}$, $V(G) =\{ v_0,\ldots,v_{2kn+n-5}\}$ and $V(C_{2k+1}) = \{w_0,\ldots,w_{2k}\}$. A solution $f=f(S)$ with this signature is obtained with
$S = [v_{\tau^{-1}(2kn+n-2k-2)},v_0]_{\tau}.$
We then have $|S| = (2kn+n-4) - (2kn+n-2k-2) + 1 = 2k-1$, so $f_1 = |A_1|$ by (\ref{eq:fulla1}).

An orbit $A_{2i}$ includes edges which connects vertices at a distance $kn-2+2i-1$. $\delta_{\tau}(v_j,v_{j+kn+2i-3}) = (\frac{n-1}{2}-(i-1))(2k+1)-1$. 
Lemma~\ref{lem:usefulcong} then says that $f(v_j) f(v_{j+kn+2i-3}) \in E(C_{2k+1})$
if and only if $|S \cap (v_{j},v_{j+kn+2i-3}]_{\tau}| \equiv 0$ or $2k$ (mod $2k+1$).
That is, $|S \cap (v_{j},v_{j+kn+2i-3}]_{\tau}|$ must be 0.
This is the case only when
\[
\delta_{\tau}(v_0,v_j) \leq \delta_{\tau}(v_0,v_{j+kn+2i-3}) < \delta_{\tau}(v_0,v_{\tau^{-1}(2kn+n-2k-2)}),
\]
which implies
\begin{multline*}
\delta_{\tau}(v_0,v_j) \leq \delta_{\tau}(v_0,v_{\tau^{-1}(2kn+n-2k-2)}) - \delta_{\tau}(v_j,v_{j+kn+2i-3}) - 1 \\
\leq  2kn+n-2k-2-((\frac{n-1}{2}-(i-1))(2k+1)-1)-1 \\
= 2kn+n-4 -(\frac{n-1}{2}-i)(2k+1)-(4k-2)-1,
\end{multline*}
which holds for exactly $2kn+n-4 -(\frac{n-1}{2}-i)(2k+1)-(4k-2)$ vertices $v_j$.

An orbit $A_{2i+1}$ includes edges which connects vertices at a distance $kn-2+2i$. $\delta_{\tau}(v_{j+kn+2i-2},v_j) = i(2k+1)-1$. Applying Lemma~\ref{lem:usefulcong} again asserts that $S \cap (v_{j},v_{j+kn+2i-2}]_{\tau}$ must be empty. Thus,
\[
\delta_{\tau}(v_0,v_{j+kn+2i-2}) \leq \delta_{\tau}(v_0,v_j) < \delta_{\tau}(v_0,v_{\tau^{-1}(2kn+n-2k-2)}),
\]
which implies
\begin{multline*}
\delta_{\tau}(v_0,v_j) \leq \delta_{\tau}(v_0,v_{\tau^{-1}(2kn+n-2k-2)}) - \delta_{\tau}(v_j,v_{j+kn+2i-2}) - 1 \\
\leq 2kn+n-2k-2-(i(2k+1)-1) -1 \\
= 2kn+n-4-(i-1)(2k+1)-(4k-2) - 1,
\end{multline*}
which holds for exactly $2kn+n-4-(i-1)(2k+1)-(4k-2)$ vertices $v_j$.
\end{proof} 

\begin{exmp}
For $K_{31/13}$. The solution $f=f(S)$ as in Lemma~\ref{lem:splitend} with $k=3$ and $n=5$ has $S = \{v_{10}$, $v_{23}$, $v_{5}$, $v_{18}$, $v_{0}\}$ and looks like:
\[
\begin{array}{c c c c c c c}
{\bf f^{-1}(w_0)} & {\bf f^{-1}(w_1)} & {\bf f^{-1}(w_2)} & {\bf f^{-1}(w_3)} & {\bf f^{-1}(w_4)} & {\bf f^{-1}(w_5)} & {\bf f^{-1}(w_6)} \\ \hline
v_0 & v_{13} & v_{26} & v_8 & v_{21} & v_3 & v_{16} \\
v_{29} & v_{11} & v_{24} & v_6 & v_{19} & v_1 & v_{14} \\
v_{27} & v_{9} & v_{22} & v_4 & v_{17} & v_{30} & v_{12} \\
v_{25} & v_{7} & v_{20} & v_2 & v_{15} & v_{28} &  \\
& v_{18} & v_5 & v_{23} & v_{10} & &
\end{array}
\]
\end{exmp}

The following technical lemma will prove useful in analysing the solutions
in Lemma~\ref{lem:corw}.
Some cases of the defined (partial) function $\gamma$ which are not needed 
for this analysis have been left out.

\begin{lemma}
\label{lem:gamma}
Let $p,q,r,s$ be positive integers so that $r > 2p+q+s$ and $s \geq p$. Now consider $r$ elements equidistantly placed on a circle, and select two sequences $P_1$ and $P_2$, each containing $p$ consecutive elements, with $q$ elements between them on one side and $r-2p-q$ on the other side. Let $\gamma(i)$ be the number of ways to select $s$ consecutive elements on the circle with exactly $i$ elements from $P_1 \cup P_2$. Then,
when $s \leq q$:
\[
\gamma(i) = \begin{cases}
  r-2p-2s+2 & \text{if $i = 0$,} \\
  2s-2p+2 & \text{if $i = p$,} \\
  0 & \text{if $i > p$.}
\end{cases}
\]
when $s = q+p$:
\[
\gamma(i) = \begin{cases}
  r-3p-2q+1 & \text{if $i = 0$,} \\
  2q+p+1 & \text{if $i = p$,} \\
  0 & \text{if $i > p$.}
\end{cases}
\]
and when $s > q+p+1$:
\[
\gamma(i) = \begin{cases}
  r-2p-q-s+1 & \text{if $i = 0$,} \\
  2q+2 & \text{if $i = p$,} \\
  2 & \text{if $i = p+1$,} \\
  0 & \text{if $i > 2p$.}
\end{cases}
\]
\end{lemma}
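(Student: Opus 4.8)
The plan is to index the $r$ windows of $s$ consecutive elements by their starting point $a \in \mathbb{Z}_r$, write $W_a$ for the window, and split the quantity being counted as
\[
  |W_a \cap (P_1 \cup P_2)| = g_1(a) + g_2(a), \qquad g_t(a) := |W_a \cap P_t|,
\]
so that $\gamma(i) = |\{a \in \mathbb{Z}_r : g_1(a)+g_2(a) = i\}|$. Fix coordinates so that $P_1$, the block of $q$ elements, $P_2$, and the block of $r-2p-q$ elements (``the big gap'') occur in this cyclic order. Since $g_t(a)$ is the length of an intersection of two arcs, knowing that $0 < g_t(a) < p$ forces $W_a \cap P_t$ to be an end segment of $P_t$, which is what will let me pin windows down uniquely.

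The structural observation I would establish first is: \emph{if $g_1(a)>0$ and $g_2(a)>0$, then $W_a$ contains the whole $q$-block.} Indeed, an arc of length $s$ meeting both $P_1$ and $P_2$ must run between them through the $q$-block, since running the other way would force length at least $(r-2p-q)+2 > s$ (here $s < r-2p-q$ by the hypothesis $r>2p+q+s$); hence $W_a$ swallows the $q$-block, and in particular $s \ge q+2$. If moreover $0 < g_1(a) < p$ and $0 < g_2(a) < p$, then $W_a$ equals exactly (the last $g_1(a)$ elements of $P_1$) $\cup$ ($q$-block) $\cup$ (the first $g_2(a)$ elements of $P_2$), which determines $W_a$ uniquely from $g_1(a)$ and forces
\[
  g_1(a) + g_2(a) = s - q .
\]
This single identity drives all three cases.

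I would then treat the cases. \textbf{(i) $s\le q$:} no window meets both blocks (that needs $s\ge q+2$), so $g_1(a)g_2(a)=0$ always and the count never exceeds $p$. A window with $g_1(a)=p,\ g_2(a)=0$ contains $P_1$ and misses $P_2$; its $s-p$ spare elements split between the big gap and the $q$-block, and one checks via $r>2p+q+s$ and $s\ge p$ that all $s-p+1$ splittings are legal and never wrap into $P_2$; symmetrically for $P_2$, giving $\gamma(p)=2(s-p+1)=2s-2p+2$. Windows of count $0$ are those inside the $q$-block ($q-s+1$) or inside the big gap ($r-2p-q-s+1$), summing to $r-2p-2s+2$. \textbf{(ii) $s=p+q$:} now the identity reads $g_1+g_2=p$, so every bridging window (both $g_t$ strictly between $0$ and $p$) already contributes to $\gamma(p)$, and there are $p-1$ of them, one per value of $g_1$; moreover a window with $g_1=p$ cannot also meet $P_2$ (that needs length $\ge p+q+1$), so it has $g_2=0$. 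Counting these ($q+1$) together with their $P_2$-mirrors yields $\gamma(p)=(p-1)+2(q+1)=2q+p+1$; any non-bridging window has count $\le p$, so $\gamma(i)=0$ for $i>p$, and $\gamma(0)=r-3p-2q+1$ (the $q$-block is now too short to hold a window). \textbf{(iii) $s>p+q+1$:} here $s-q\ge p+2$, so a bridging window with both $g_t<p$ has count $s-q\ge p+2>p+1$; hence count $p$ occurs only for windows containing exactly one whole block ($q+1$ on each side, so $\gamma(p)=2q+2$), and count $p+1$ occurs only for the unique window with $g_1=p,\ g_2=1$ and its mirror, so $\gamma(p+1)=2$. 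The bound $\gamma(i)=0$ for $i>2p$ is trivial, and $\gamma(0)=r-2p-q-s+1$ as before.

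The step that will need the most care is not the idea but the boundary bookkeeping: in each case one must verify that the windows assigned to different buckets are genuinely disjoint, and that whenever a window contains a whole block the $s-p$ leftover elements can be distributed on the two sides without running off the big gap or into the other block — this is exactly where the hypotheses $r>2p+q+s$ and $s\ge p$ are used. Once those checks are in place, each entry of the table reduces to a one-line arithmetic identity.
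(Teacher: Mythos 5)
Your proof is correct, and it is essentially the paper's argument: both fix coordinates and count the $s$-windows directly by starting position, with your identity $g_1(a)+g_2(a)=s-q$ for bridging windows being a cleaner, unified statement of the index-chasing the paper does case by case. The boundary checks you flag (spill-over fitting in the $q$-block and the big gap) are exactly where the paper also invokes $r>2p+q+s$ and $s\geq p$, so nothing is missing.
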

\begin{proof}
Call the elements $\{0,\ldots,r-1\}$. Suppose $P_1 = \{0,\ldots,p-1\}$ and $P_2 = \{q+p,\ldots,q+2p-1\}$. 
For $s \leq q$. Then the sequences that starts with $p,\ldots,q+p-s$ as well as $q+2p,\ldots,r-s$ are the only ones that do not contain any element from $P_1 \cup P_2$ and that is $q+p+s+1$ and $r-q-2p-s+1$ elements, and in total $r-2p-2s+2$. To get $p$ elements we have the sequences that starts with $q-2p-s,\ldots,q+p$ and $p-s,\ldots,0$ as the only options, and that is $s-p+1$ in both cases so $2s-2p+2$ in total. Also if $s \leq q$ clearly there is no sequence of length $s$ that includes element from both $P_1$ and $P_2$.

For $s=q+p$, the ones starting with $q+2p,\ldots,r-s$ are the only ones that do not contains any element from $P_1 \cup P_2$ and that is $r-q-2p-s+1=r-3p-2q+1$ elements. To get $p$ elements we have that for any sequence of length $s$ starting with an element $i \in P_1$ contains $i$ number of elements from $P_2$ so all sequences starting with $r-q,\ldots,q+p$ contains $p$ elements from $P_1 \cup P_2$, and that is $p+2q+1$ in total. This fact also makes it clear that no sequence can contain more than p elements from $P_1 \cup P_2$.

Finally when $s > q+p+1$, again sequences starting with $q+2p,r-s$ are the only ones to not contain any element from $P_1$ or $P_2$, and for sequences that include $p$ element, they must start with $p,\ldots,q+p$ or $r+p-s,\ldots,r+p+q-s$. That is both with $q+1$ for a total of $2q+2$. Also the only ones to contain $p+1$ elements are the ones starting with $p-1$ and $r+p+q-s+1$, and since $|P_1 \cup P_2| = 2p$ there are of course no sequence to contain more than that.
\end{proof}

\begin{lemma}
\label{lem:corw}
Let $f=f(S)$ be a solution to $(K_{\frac{2kn+n-2m}{kn-m}},\omega)$ of {\sc MAX $C_{2k+1}$-COL} with $f_1 = |A_1|$ and where $S = P_1 \cup P_2$, and $P_1 \cap P_2 = \emptyset$, where
$P_1 = [v_a,v_{a+(2k+1)-2}]_{\tau}$
and 
$P_2 = [v_b,v_{b+(2k+1)-2}]_{\tau}$
such that 
$\min_{v_i\in P_1, v_j \in P_2}{\bar{\delta}_{\tau}(v_i,v_j)} = (u-1)(2k+1)$.
Let $A_c$ be the orbit consisting of edges $v_l v_h$ with $\delta_{\tau}(v_l,v_h)=g(2k+1)-1$. Then,
\[
f_c = \begin{cases}
  |A_c|-(8k-4) & \text{if $g < u$,} \\
  |A_c|-(4k-2) & \text{if $g = u$,} \\
  |A_c|-(g-u)(2k+1)-(6k-5) & \text{if $g > u$.}
\end{cases}
\]
\end{lemma}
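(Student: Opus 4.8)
The strategy is to follow the template already used in Lemma~\ref{lem:solbeta} and Lemma~\ref{lem:splitend}: fix an edge $v_l v_h \in A_c$ with $\delta_\tau(v_l,v_h) = g(2k+1)-1$, and use Lemma~\ref{lem:usefulcong} to decide exactly when $f(v_l)f(v_h) \in E(C_{2k+1})$. Because $|S| = |P_1| + |P_2| = 2(2k+1) - 2 \equiv 2(2k-1) \equiv 2k$ or, reduced, the relevant residue class mod $2k+1$, the congruence condition from Lemma~\ref{lem:usefulcong} will again collapse (as in~(\ref{eq:recycle2})) to the statement that $|S \cap (v_l,v_h]_\tau|$ must be either $0$ or the full number of elements of $S$ lying in a window of the appropriate length. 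So the whole computation of $f_c$ reduces to a \emph{counting} problem: among all $|A_c|$ choices of the starting vertex, how many produce a $\tau$-interval $(v_l,v_h]_\tau$ of length $g(2k+1)-1$ that meets $S = P_1 \cup P_2$ in the forbidden way? This is precisely the combinatorial set-up isolated in Lemma~\ref{lem:gamma}, with the identifications $p = 2k+1-2$ (wait---$|P_i|$ has $2k+1-2 = 2k-1$ elements, so $p = 2k-1$), $s = g(2k+1)-1$, $q = (u-1)(2k+1) - (2k-1)$ the gap between the two blocks on the short side, and $r = 2kn+n-2m$ the total number of vertices.

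Concretely, I would proceed in three steps matching the three cases of the statement. First, translate the edge-distance $g(2k+1)-1$ into the window length $s$ and compare it with the inter-block gap to see which regime of Lemma~\ref{lem:gamma} applies: $g < u$ corresponds to the window being long enough to straddle both blocks (the $s > q+p+1$ case, giving $\gamma(i)\neq 0$ for $i\in\{0,p,p+1\}$), $g = u$ is the borderline $s = q+p$ case, and $g > u$ needs a short-window analysis. Second, for each regime identify which values of $i$ (the number of $S$-elements captured) are \emph{admissible} according to the reduced congruence from Lemma~\ref{lem:usefulcong}: an edge is mapped into $C_{2k+1}$ iff $i \equiv 0$ or $i \equiv |S| \pmod{2k+1}$, i.e. iff $i = 0$ or $i = |S|$; every other $i$ gives a non-edge. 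Third, subtract: $|A_c| - f_c$ equals the number of windows with $i \notin \{0,|S|\}$, and plugging in the $\gamma$-values from Lemma~\ref{lem:gamma} should yield $8k-4$, $4k-2$, and $(g-u)(2k+1)+(6k-5)$ respectively. The extra linear term $(g-u)(2k+1)$ in the last case comes from the fact that, for windows shorter than the gap, \emph{every} window that starts inside a block and exits before reaching the other block contributes a non-edge, and the number of such start positions grows linearly as $s$ shrinks relative to $|P_i|$; this is exactly the phenomenon already seen in the counting at the end of the proof of Lemma~\ref{lem:solbeta}.

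The main obstacle I anticipate is bookkeeping rather than conceptual: getting the offsets exactly right. There are two places where off-by-$(2k+1)$ or off-by-one errors naturally creep in. One is the precise value of $\delta_\tau$ for an edge of $A_c$ in terms of $c$ (and hence of $g$ in terms of $c$): this requires the same modular inversion of $q$ as in Lemma~\ref{lem:solbeta}, now for the modulus $p = 2kn+n-2m$, and one must check that it indeed produces distances of the clean form $g(2k+1)-1$. The other is aligning the ``$q$ elements between the blocks'' of Lemma~\ref{lem:gamma} with the hypothesis $\min \bar\delta_\tau = (u-1)(2k+1)$ on the two blocks $P_1,P_2$: because the blocks have size $2k-1$ and the minimal gap is measured between them, the parameter $q$ fed into Lemma~\ref{lem:gamma} is $(u-1)(2k+1) - (2k-1)$, and one has to be careful that both ``sides'' of the circle are handled (the lemma's asymmetry between the $q$-side and the $r-2p-q$-side matters). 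Once these two translations are pinned down, each of the three cases is a one-line substitution into Lemma~\ref{lem:gamma}, and I would organize the write-up as: (i) compute $\delta_\tau$, (ii) state the reduced edge/non-edge criterion, (iii) invoke Lemma~\ref{lem:gamma} three times and read off $f_c$.
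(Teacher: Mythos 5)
Your overall strategy is exactly the one the paper uses: reduce the edge/non-edge decision for each $v_lv_h\in A_c$ to the congruence of Lemma~\ref{lem:usefulcong}, observe that this turns $f_c$ into a window-counting problem over the cycle $A_1$, and evaluate the count with Lemma~\ref{lem:gamma}. However, two of your three parameter identifications are wrong, and one of them is not an off-by-one but a substantive error that would derail the computation. First, $|P_1|=|P_2|=2k$, not $2k-1$: the $\tau$-interval $[v_a,v_{a+(2k+1)-2}]_\tau$ contains the vertices at $\tau$-distances $0,1,\ldots,2k-1$ from $v_a$, hence $2k$ elements (compare the count $p-(p-2k+1)+1=2k$ in Lemma~\ref{lem:solbeta}, and the explicit $|P_1|=|P_2|=2k$ in Lemma~\ref{lem:splitmiddle}). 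So $p=2k$ and $|S|=4k\equiv 2k-1$ (mod $2k+1$), which is what (\ref{eq:fulla1}) demands for $m=2$, the case in which this lemma is actually invoked; your $p=2k-1$ gives $|S|\equiv 2k-3$ and is incompatible with the hypothesis $f_1=|A_1|$ there.

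Second, and more seriously, the reduced admissibility criterion is not ``$i=0$ or $i=|S|$''. Lemma~\ref{lem:usefulcong} with $\delta_\tau(v_l,v_h)=g(2k+1)-1\equiv -1$ gives $i\equiv(k+1)\cdot 0=0$ or $i\equiv(k+1)(-2)\equiv 2k$ (mod $2k+1$); since $0\le i\le|S|=4k$, the admissible values are $i\in\{0,2k,2k+1\}$, and $i=|S|=4k\equiv 2k-1$ is \emph{not} admissible. In Lemma~\ref{lem:solbeta} one had $|S|=2k$, so ``$\equiv 0$ or $2k$'' happened to coincide with ``empty or full''; with two blocks and $|S|=4k$ it does not. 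The whole point of the paper's three-term sum $f_c=\gamma(0)+\gamma(p)+\gamma(p+1)$ is that a window may swallow exactly one block ($2k$ elements of $S$) or one block plus one vertex of the other ($2k+1$ elements) and still have its edge mapped into $C_{2k+1}$. Counting $\gamma(0)+\gamma(|S|)$ instead would give, for $g=u$, just $\gamma(0)=r-3p-2q+1$, which depends on $u$ and contradicts the claimed $|A_c|-(4k-2)$. Finally, the gap parameter should be $q=(u-1)(2k+1)$, not $(u-1)(2k+1)-(2k-1)$: with $p=2k$ and $s=g(2k+1)-1$ this choice makes $s=p+q$ hold precisely when $g=u$, $s\le q$ precisely when $g<u$, and $s>q+p+1$ precisely when $g>u$, so the three regimes of Lemma~\ref{lem:gamma} align with the three cases of the statement and the sums $\gamma(0)+\gamma(p)+\gamma(p+1)$ evaluate to $|A_c|-(8k-4)$, $|A_c|-(4k-2)$ and $|A_c|-(g-u)(2k+1)-(6k-5)$, respectively.
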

\begin{proof}
We can apply Lemma~\ref{lem:gamma}, since we according to Lemma~\ref{lem:usefulcong} must have $|S \cap [v_{l+1},v_h]_{\tau}|=0,2k$ or $2k+1$. So for Lemma~\ref{lem:gamma} we have $r=|A_c|$, $p=2k$, $q=(u-1)(2k+1)$ and $s=g(2k+1)-1$. We see that $s=p+q$ when $g=u$ and when $g<u$ then $s\leq q$ and when $g > u$ then $s > q+p+1$. So all we have to do is for each case count $\gamma(0)+\gamma(p)+\gamma(p+1)$.
\end{proof}

Now it is possible to construct a series of signatures with solutions $f(S)$ where $S$ will have the properties sought after by Lemma~\ref{lem:corw}.

\begin{lemma}
\label{lem:splitmiddle}
There exists a set of solutions $F = \{ f^i \}, i = 2,\ldots,\frac{n+1}{2}$ to $(K_{\frac{2kn+n-4}{kn-2}},\omega)$ of the problem $\MCol{C_{2k+1}}$ with signatures:
\begin{eqnarray*}
  & & f^i_1=|A_1| 
  \; \forall f^i \in F \\
  & & f^i_i=|A_i|-(4k-2) 
  \; \forall f^i \in F \\
  & & f^{2i}_{2j+1}=|A_{2j+1}|-(8k-4)
  \; \forall f^{2i} \in F \mbox{ and } j = 1,2,\ldots,\frac{n-1}{4} \\
  & & f^{2i}_{2j}=|A_{2j}|-(8k-4)
  \; \forall f^{2i} \in F \mbox{ and }  j = i+1,i+2,\ldots,\frac{n+1}{4} \\
  & & f^{2i+1}_{2j+1}=|A_{2j+1}|-(8k-4)
  \; \forall f^{2i+1} \in F \mbox{ and } j = 1,2,\ldots,i-1 \\
  & & f^{2i}_{2j}=|A_{2j}|-(i-j)(2k+1)-(6k-5)
  \; \forall f^{2i} \in F \mbox{ and } j = 1,2,\ldots,i-1 \\
  & & f^{2i+1}_{2j}=|A_{2j}|-(\frac{n-1}{2}-i-j)(2k+1)-(6k-5) 
  \; \forall f^{2i+1} \in F \mbox{ and } j = 1,2,\ldots,\frac{n+1}{4} \\
  & & f^{2i+1}_{2j+1}=|A_{2j+1}|-(j-i)(2k+1)-(6k-5)
  \; \forall f^{2i+1} \in F \mbox{ and } j = i+1,i+2,\ldots,\frac{n-1}{4} \\
\end{eqnarray*}
\end{lemma}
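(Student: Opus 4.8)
The idea is to build each solution $f^i$ explicitly via the general construction $f = f(S)$ from Appendix~\ref{app:propm1proof}, choosing the index set $S$ to be a union of two ``blocks'' $P_1, P_2$ each of $\tau$-length $(2k+1)-2$, placed so that Lemma~\ref{lem:corw} applies directly. Concretely, for the case $p = 2kn+n-4$, $q = kn-2$, I would take $P_1$ to be an arc of $2k-1$ consecutive vertices (in the $\tau$-order) anchored, as in Lemma~\ref{lem:splitend}, at the high end $[v_{\tau^{-1}(2kn+n-2k-2)}, v_0]_\tau$, and then take $P_2$ to be a second arc of $2k-1$ consecutive $\tau$-vertices whose $\tau$-distance from $P_1$ is $(u-1)(2k+1)$, where the parameter $u$ is chosen as a function of $i$: for $f^{2i}$ set $u = i$, and for $f^{2i+1}$ set $u$ so that the ``odd/even split'' of the orbit indices works out (roughly $u = \frac{n-1}{2}-i+1$, matching the two different shapes of the formulas for even vs.\ odd superscript). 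Since $|S| = |P_1| + |P_2| = 2(2k-1) = 4k-2 \equiv 2k-1 \equiv 2k+1-2 \pmod{2k+1}$, equation~(\ref{eq:fulla1}) with $m=2$ guarantees $f^i_1 = |A_1|$, giving the first line.

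Next I would read off $f^i_c$ for $c > 1$ from Lemma~\ref{lem:corw}. That lemma, applied with $r = |A_c|$, $p = 2k$, and $q = (u-1)(2k+1)$, $s = g(2k+1)-1$ (where $A_c$ consists of edges with $\delta_\tau = g(2k+1)-1$, i.e.\ $g = \frac{n-1}{2}-(i'-1)$ for $c = 2i'$ and $g = i'$ for $c = 2i'+1$, exactly as in Lemma~\ref{lem:splitend}), yields three cases: a loss of $8k-4$ when $g < u$, a loss of $4k-2$ when $g = u$, and a loss of $(g-u)(2k+1) + 6k-5$ when $g > u$. Substituting the value of $u$ attached to each $f^i$ and translating $g$ back into the orbit index $2j$ or $2j+1$ is then pure bookkeeping: the case $g = u$ becomes exactly ``$f^i_i = |A_i| - (4k-2)$'', the case $g < u$ becomes the two ``$-(8k-4)$'' lines, and the case $g > u$ becomes the two ``$(g-u)(2k+1) + 6k-5$'' lines, with the ranges of $j$ dictated by which orbit indices $c$ have the corresponding $g$ smaller than, equal to, or larger than $u$. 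One has to be slightly careful about which orbits count as ``before'' and ``after'' $P_1 \cup P_2$ around the circle, since $q = (u-1)(2k+1)$ measures the gap on one side and $r - 2p - q$ the gap on the other; this is where the split between $f^{2i}$ and $f^{2i+1}$ (and the appearance of $\frac{n\pm1}{4}$ in the ranges) originates.

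The remaining point — and the only genuinely delicate one — is to verify that the chosen $S = P_1 \cup P_2$ actually satisfies the hypotheses of Lemma~\ref{lem:corw}, namely that $P_1 \cap P_2 = \emptyset$ and that the minimum $\bar\delta_\tau$-distance between the two blocks is precisely $(u-1)(2k+1)$ and not something smaller coming around the other side of the circle. This amounts to the arithmetic inequality $r > 2p + q + s$ from Lemma~\ref{lem:gamma}, i.e.\ $|A_c| > 2(2k) + (u-1)(2k+1) + (g(2k+1)-1)$, which must be checked for the largest relevant $g$ and $u$; since $|A_c| = p = 2kn+n-4$ is linear in $n$ while the right-hand side, for the extreme choices $g, u \approx n/2$, is also linear in $n$ with a comparable coefficient, I expect this to hold for all $k \geq 2$ and odd $n \geq 3$ but it may require a short case split (or the observation that the extreme orbits simply don't occur because $c$ is bounded by $\lceil\frac{n+1}{2}\rceil$). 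That feasibility check is the main obstacle; once it is dispatched, the signatures follow mechanically from Lemmas~\ref{lem:usefulcong}, \ref{lem:gamma}, and~\ref{lem:corw}. I would close with a small worked example (as the paper does elsewhere) to make the block placement concrete, e.g.\ $k = 3$, $n = 5$, tracking the two or three solutions $f^2, f^3$.
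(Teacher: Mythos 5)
Your overall route is the same as the paper's: build each $f^i$ as $f(S)$ with $S=P_1\cup P_2$ two arcs in the $\tau$-order, one anchored at $v_0$ as in Lemma~\ref{lem:splitend}, and read the signature off Lemma~\ref{lem:corw}. But there is a concrete arithmetic error at the very first step. You take $|P_1|=|P_2|=2k-1$, so $|S|=4k-2$, and claim $4k-2\equiv 2k-1\pmod{2k+1}$; in fact $4k-2\equiv 2k-3\pmod{2k+1}$, so condition~(\ref{eq:fulla1}) with $m=2$ (which requires $|S|\equiv 2k-1$) fails and your $f^i_1$ would \emph{not} equal $|A_1|$. The paper takes each block to contain $2k$ vertices, giving $|S|=4k\equiv 2k-1\pmod{2k+1}$; this is also forced by Lemma~\ref{lem:corw}, whose counts come from Lemma~\ref{lem:gamma} with block size $p=2k$, so shrinking the blocks to $2k-1$ would invalidate the loss constants $4k-2$, $8k-4$, $6k-5$ as well.

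Beyond that, the part you dismiss as ``pure bookkeeping'' is where the content of the lemma lives, and your sketch of it is off. The paper places $P_2=[v_{\tau^{-1}((n-1)(2k+1)-2)},v_0]_\tau$ in all cases and moves only $P_1$; the resulting parameter in Lemma~\ref{lem:corw} is $u=i$ for $f^{2i+1}$ (acting on the odd orbits, where $g=j$) and $u=\tfrac{n-1}{2}-i$ for $f^{2i}$ (acting on the even orbits, where $g=\tfrac{n-1}{2}-j$) --- the opposite of your tentative assignment, and the source of the asymmetric $j$-ranges and of the term $(\tfrac{n-1}{2}-i-j)(2k+1)$ in the statement. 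You also do not verify that for each orbit exactly one of the three cases of Lemma~\ref{lem:corw} applies for every admissible $j$ (e.g.\ that $\tfrac{n-1}{2}-j>i$ always holds for $f^{2i+1}$ on even orbits, so only the third case occurs). These verifications, not the circle-capacity inequality you flag, are the substance of the paper's proof; as written, your proposal is an outline with a disqualifying error in the block sizes rather than a proof.
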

\begin{proof}
Let $f^{2i+1} = f(S)$ with $S = P_1 \cup P_2$,
where
$P_1= [v_{\tau^{-1}((n-i-1)(2k+1)-1)},v_{\tau^{-1}((n-i)(2k+1)-3)}]_{\tau}$
and
$P_2 = [v_{\tau^{-1}((n-1)(2k+1)-2)},v_0]_{\tau}$.
We have $|P_1|=|P_2|=2k$ so $|S| = 4k$, implying $f^{2i+1}_1=|A_1|$ due to (\ref{eq:fulla1}).

The orbits $A_{2j+1}$ include edges which connects vertices at a distance $kn-2+2j$ and
$\delta_{\tau}(v_{l+kn+2j-2},v_l) = j(2k+1)-1$. We now have the situation in Lemma~\ref{lem:corw} with $u=i$ and $g=j$. When $i<j$ then $f^{2i+1}_{2j+1}=|A_{2j+1}|-(j-i)(2k+1)-(6k-5)$. When $j=i$ then $f^{2i+1}_{2i+1} = |A_{2i+1}|-(4k-2)$ and when $i>j$ then $f^{2i+1}_{2j+1} = |A_{2j+1}|-(8k-4)$.

The orbits $A_{2j}$ include edges which connects vertices at a distance $kn-2+2j-1$ and 
$\delta_{\tau}(v_l,v_{l+kn+2j-3}) = (\frac{n-1}{2}-j)(2k+1)-1$. Since we have $i \leq \left\lceil\frac{n+1}{4}-1\right\rceil$ and $j \leq \frac{n+1}{4}$, then $\frac{n-1}{2}-j > i$ for all $i$ and $j$. So for Lemma~\ref{lem:corw} only the third case applies with $u=i$ and $g = \frac{n-1}{2}-j$. Thus, we have $f^{2i+1}_{2j} = |A_{2j}|-(\frac{n-1}{2}-i-j)(2k+1)-(6k-5)$.

Let $f^{2i} = f(S)$ with 
$S = P_1 \cup P_2$, where
$P_1= [v_{\tau^{-1}((\frac{n-1}{2}+i-1)(2k+1)-1)},v_{\tau^{-1}((\frac{n-1}{2}+i-1)(2k+1)-3)}]_{\tau}$
and
$P_2 = [v_{\tau^{-1}((n-1)(2k+1)-2)},v_0]_{\tau}$.
Again, we have $|P_1|=|P_2|=2k$ so $|S| = 4k$ and $f^{2i+1}_1=|A_1|$.

For the orbits $A_{2j+1}$ we have that $j < \frac{n-1}{2}-i$ for all $i$ and $j$ so only case 1 in Lemma~\ref{lem:corw} applies and $f^{2i}_{2j+1}=|A_{2j+1}|-(8k-4)$.

For the orbits $A_{2j}$ we have exactly the same situation as in Lemma~\ref{lem:corw} with $u = \frac{n-1}{2}-i$ and $g = \frac{n-1}{2}-j$. We notice that when $i<j$ then $g<u$ and  $f^{2i}_{2j}=|A_{2j}|-(8k-4)$, when $i>j$ then $g>u$ and $f^{2i}_{2j}=|A_{2j}|-(i-j)(2k+1)-(6k-5)$ and when $i=j$ then $g=u$ and $f^{2i}_{2i} = |A_{2i}|-(4k-2)$.
\end{proof}

One important thing to notice here is that $f^{2i}$ with $l= \left\lfloor\frac{n+1}{4}\right\rfloor$ is the continuation of $f^{2j+1}$ with $j = \left\lfloor\frac{n-1}{4}\right\rfloor$. Since $\frac{n-1}{2}-\left\lfloor\frac{n-1}{4}\right\rfloor=\left\lfloor\frac{n+1}{4}\right\rfloor$. Another observation is that signature $f^{3}$ from Lemma~\ref{lem:splitmiddle} can always be removed from a complete set of signatures and it will still remain complete since the signature from Lemma~\ref{lem:splitend} is better or equal for all orbits. 

\begin{exmp}
With $k=3$ and $n=5$ the solution $f= f(S)$ to $K_{31/13}$ of {\sc MAX $C_7$-COL} from Lemma~\ref{lem:splitmiddle} has
\[
S = \{v_{14}, v_{27}, v_{9}, v_{22}, v_{4}, v_{17}\} \cup \{v_{28}, v_{10}, v_{23}, v_{5}, v_{18}, v_{0}\},
\]
and looks like:
\[
\begin{array}{c c c c c c c}
{\bf f^{-1}(w_0)} & {\bf f^{-1}(w_1)} & {\bf f^{-1}(w_2)} & {\bf f^{-1}(w_3)} & {\bf f^{-1}(w_4)} & {\bf f^{-1}(w_5)} & {\bf f^{-1}(w_6)} \\ \hline
v_0 & v_{13} & v_{26} & v_8 & v_{21} & v_3 & v_{16} \\
v_{29} & v_{11} & v_{24} & v_6 & v_{19} & v_1 &  \\
v_4 & v_{22} & v_{9} & v_{27} & v_{14} & & \\
& & & & & & v_{17} \\
v_{30} & v_{12} & v_{25} & v_{7} & v_{20} & v_2 & v_{15}  \\
& v_{18} & v_5 & v_{23} & v_{10} & v_{28} &
\end{array}
\]
\end{exmp}

\noindent
We will now prove Proposition~\ref{th:qisfunny} using the solutions from
Lemma~\ref{lem:solalpha}, Lemma~\ref{lem:splitend} and Lemma~\ref{lem:splitmiddle}.

\subsection*{Proposition~\ref{th:qisfunny}}
\begin{proof}
We get $(n+1)/2$ inequalities from Lemma~\ref{lem:solalpha},~\ref{lem:splitend}, 
and~\ref{lem:splitmiddle}, where as noted above, we have removed the inequality
generated by $f^3$.
As variables we have $s$ and $\omega_i$, for $i = 1, \ldots, (n+1)/2$.
To solve the relaxation of (\ref{lp}), we solve the corresponding system with
equalities.
A similar treatment of the dual confirms that the obtained solution is
indeed the optimum.


We start by reducing our $\frac{n+1}{2} \times \frac{n+3}{2}$ system to
a $4 \times 4$ system.
However we need to rearrange the orbits to conveniently describe how they depend on each other. Let $A_1'=A_1, A_2'=A_3,\cdots,A_i'=A_{2j+1}, A_{i+1}'=A_{2l},A_{i+2}'=A_{2l-2},\cdots,A_{\frac{n+1}{2}}'=A_2$, where $j = \left\lfloor\frac{n-1}{4}\right\rfloor$ and $l= \left\lfloor\frac{n+1}{4}\right\rfloor$. Furthermore introduce new solutions $h$ so that $h^i$ denotes the solution that maximises $h_i$. This rearrangement makes sense, as it puts the orbits and solution in such an order that for all solutions $h^r$ we have $h^r_r > h^r_{r+1} > h^r_{r+2} > \cdots > h^r_{\frac{n+1}{2}}$.

Now we compare the equations in (\ref{lp}) from the signatures $h^{\frac{n+1}{2}}$ and $h^{\frac{n+1}{2}-1}$.  Note that these are the signatures $f^2$ and $f^4$ from Lemma~\ref{lem:splitend}. We see then that we have 
\[
\sum_j{h^{\frac{n+1}{2}-1}_j \omega_j}=\sum_j{h^{\frac{n+1}{2}}_j \omega_j} + (4k-2)\cdot\omega_{\frac{n+1}{2}-1} - (4k-2)\cdot\omega_{\frac{n+1}{2}},
\] 
since we assume $\displaystyle{\sum_j{h^{\frac{n+1}{2}-1}_j \omega_j}=\sum_j{h^{\frac{n+1}{2}}_j \omega_j}=s}$, we get $\omega_{\frac{n+1}{2}}=\omega_{\frac{n+1}{2}-1}$.
For the general case we have
\begin{equation}
\sum_j{h^i_j \omega_j}=\sum_j{h^{i+1}_j \omega_j} + (4k-2)\cdot\omega_i - (4k-2)\cdot\omega_{i+1} 
- (2k+1)\cdot\sum_{j=i+2}^{\frac{n+1}{2}}\omega_j,
\end{equation}
for $i = 3,4,\ldots,\frac{n+1}{2}-1$.
Since again we assume $\sum_j{h^i_j \omega_j}=\sum_j{h^{i+1}_j \omega_j}=s$, we get
\begin{equation}
\label{eq:omega}
\omega_i=\omega_{i+1}+\frac{2k+1}{4k-2}\cdot\displaystyle\sum_{j=i+2}^{\frac{n+1}{2}}\omega_j,
\end{equation}
for $i = 3,4,\ldots,\frac{n+1}{2}-1$. For $i=\frac{n+1}{2}-1$ this means $\omega_i=\omega_{i+1}$. For all other $i$, we use the fact that (\ref{eq:omega}) also holds for $\omega_{i+1}$ and thus have:
\begin{equation}
\label{eq:omega+1}
\omega_{i+1}=\omega_{i+2}+\frac{2k+1}{4k-2}\cdot\displaystyle\sum_{j=i+3}^{\frac{n+1}{2}}\omega_j,
\end{equation}
for $i = 3,4,\ldots,\frac{n+1}{2}-2$.
From (\ref{eq:omega+1}) we get,
\begin{equation}
\label{eq:sum}
(2k+1)\sum_{j=i+3}^{\frac{n+1}{2}}\omega_j = (4k-2)\cdot(\omega_{i+1}-\omega_{i+2}).
\end{equation}
We then insert (\ref{eq:sum}) into (\ref{eq:omega}) to express $\omega_i$ in terms of $\omega_{i+1}$ and $\omega_{i+2}$ only:
\begin{equation}
\omega_i = \omega_{i+1}+\frac{2k+1}{4k-2} \cdot \omega_{i+2} + (\omega_{i+1}-\omega_{i+2})
= 2\cdot\omega_{i+1}-\frac{2k-3}{4k-2}\cdot\omega_{i+2},
\end{equation}
for $i = 3,4,\ldots,\frac{n+1}{2}-2$.
We now define $\omega_i = g_{\frac{n+1}{2}-i}\cdot\omega_{\frac{n+1}{2}}$ with
\[
g_i =
\begin{cases}
  1 & i = 0, 1, \\
  2 \cdot g_{i-1}-\frac{2k-3}{4k-2}\cdot g_{i-2} & i = 2, 3, \ldots, \frac{n+1}{2}-3.
\end{cases} 
\]
Thus, we can express $\omega_3,\ldots,\omega_{\frac{n+1}{2}-1}$ in terms of $\omega_{\frac{n+1}{2}}$. However, to proceed we need to express the coefficients $g_i$ in terms of $k$ and $n$. Define $G(z) = \sum_{g\geq0}{g_n z^n}$. We do not have to worry about the upper limit, since as far as we are concerned the recursion could go on towards infinity, without affecting the values we are interested in. After multiplying with $z^n$ and summing up from $n \geq 2$ we get
\[
g_2z^2+g_3z^3+\cdots = 2\{g_1z^2+g_2z^3+\cdots\}-\frac{2k-3}{4k-2}\{g_0z^2+g_1z^3+\cdots\},
\]
which we identify as
\[
G(z)-z-1=2z(G(z)-1)-\frac{2k-3}{4k-2}z^2G(z).
\]
Solving for $G(z)$ gives
\[
G(z) = \frac{1-z}{\frac{2k-3}{4k-2}z^2-2z+1}.
\]
The denominator has two distinct roots whose reciprocals are:
\[
\alpha_1=1+\sqrt{1-\frac{2k-3}{4k-2}} \qquad \text{and} \qquad \alpha_2=1-\sqrt{1-\frac{2k-3}{4k-2}}.
\]
Hence, we can express the $n$th coefficient of $G(z)$ as
\[
z[n]G(z)=\frac{(1-\frac{1}{\alpha_1})\cdot \alpha_1^{n+1}}{2-\frac{2k-3}{4k-2}\cdot\frac{2}{\alpha_1}}+\frac{(1-\frac{1}{\alpha_2})\cdot \alpha_2^{i+1}}{2-\frac{2k-3}{4k-2}\cdot\frac{2}{\alpha_2}}.
\]
We can now write down the smaller $4 \times 4$ system of equations. Let $|V| = |A_1| = |A_2| = \cdots = |A_{\frac{n+1}{2}}|$. From the equations of the signatures $h_1$ (from Lemma~\ref{lem:solalpha}), $h_2$ (from Lemma~\ref{lem:splitend}), and $h_3$ ($f_2$ from Lemma~\ref{lem:splitmiddle}), we get
\[
\begin{array}{ll}
 (|V|-2) \cdot \omega_1  +  |V| \cdot \omega_2  +  |V| \cdot \displaystyle \sum_{i=0}^{\frac{n+1}{2}-3}g_i\cdot\omega_{\frac{n+1}{2}}  & =  s \\
 |V| \cdot \omega_1  +  (|V|-(4k-2)) \cdot \omega_2  +  \displaystyle \sum_{i=0}^{\frac{n+1}{2}-3}(2k+1)\frac{n-1+2i}{2}\cdot g_i\cdot\omega_{\frac{n+1}{2}} & =  s \\ 
 |V| \cdot \omega_1  +  (|V|-(8k-4)) \cdot (\omega_2 + \left(\displaystyle \sum_{i=0}^{\frac{n+1}{2}-3}g_i - (4k-2)\right) \cdot \omega_{\frac{n+1}{2}})  & =  s \\
 |V|\cdot(\omega_1+\omega_2 + \displaystyle \sum_{i=0}^{\frac{n+1}{2}-3}g_i\cdot\omega_{\frac{n+1}{2}}) & = 1.
\end{array}
\]
Solving this gives
\[
s=\frac{(2kn+n-4)(\xi_n(4k-1)+(2k-1))}{(2kn+n-4)(\xi_n(4k-1)+(2k-1))+(4k-2)(1-\xi_n)},
\]
where
\[
\xi_n = 
\left(\alpha_1^{(n-1)/2} + \alpha_2^{(n-1)/2}\right)/4.
\]
\end{proof}

\section{Proofs of Results from Section~\ref{sec:apply}}

\subsection*{Lemma~\ref{lem:circlesandwich}}
\begin{proof}
Since $\chi_c(G) \leq r$ means there exist one $r' \leq r$ such that $G \rightarrow K_{r'}$, and since $K_r' \rightarrow K_r$ we have $G \rightarrow K_r$ and $K_2$ has a homomorphism to every graph that contains at least one edge so $K_2 \rightarrow G \rightarrow K_r$ and we can apply Lemma~\ref{lem:sandwich}. We also have that $C_{2k+1}$ has a homomorphism to each graph which contains an odd cycle with length at most $2k+1$. It is obvious that $C_{2k+1}$ has an homomorphism into a graph containing a cycle of length exactly $2k+1$. But we also know that $C_{2k+1} \rightarrow C_{2m+1}$ if $m \leq k$ so if $G$ contains an odd cycle of length at most $2k+1$ then we have $C_{2k+1} \rightarrow G \rightarrow K_r$. 
\end{proof}

\subsection*{Proposition~\ref{thmI}}
\begin{proof}
By Pan and Zhu we know the following for graphs $G$ that are
$K_4$-minor-free and integers $k\geq 1$:
\begin{itemize}
\item If G has odd girth at least $6k-1$ then $\chi_c(G)\leq 8k/(4k-1)$;
\item If G has odd girth at least $6k+1$ then $\chi_c(G)\leq (4k+1)/2k$;
\item If G has odd girth at least $6k+3$ then $\chi_c(G)\leq
(4k+3)/(2k+1)$.
\end{itemize}
The above, combined with Proposition~\ref{prop:4k+4}, can be used to
specify values on $s(K_2,G)$. We get that when the odd girth is at least
$6k-1$ then $s(K_2,G) \geq \frac{4k}{4k+1}$ and when the odd girth is at
least $6k+3$ then $s(K_2,G) \geq \frac{4k+2}{4k+3}$. For graphs with odd
girth $6k+1$ the result of Pan and Zhu give no other guarantee than that
a homomorphism exists to the cycle $C_{4k+1}$, which gives us no better
bound than for graphs with girth $6k-1$.
\end{proof}


\end{document}